\newenvironment{jfnote}{ \bgroup \color{blue} }{\egroup}
\newcommand{\red}{\color[rgb]{1.0,0.2,0.2}} 
\newcommand{\blue}{\color[rgb]{0.2,0.2,1.0}} 
\newcommand{\oldStuff}[1]{}
\newcommand{\Coord}{{\rm Coord}}
\newcommand{\og}{{\scriptscriptstyle \le}}
\newcommand{\Bg}{{\scalebox{1.0}{$\!\scriptscriptstyle /\!B$}}}
\newcommand{\cert}{\xi}
\DeclareMathOperator{\SHom}{\mathscr{H}\text{\kern -3pt {\calligra\large om}}\,}
\DeclareMathOperator{\ViSu}{VisSub}
\newcommand{\naturals}{{\mathbb N}}
\newcommand{\Eor}{E^{\mathrm{or}}}
\newcommand{\mec}[1]{{\bf #1}}	
\newcommand{\bec}[1]{{\boldsymbol #1}}	
\DeclareMathOperator{\trace}{Trace}
\DeclareMathOperator{\Trace}{Trace}
\newcommand{\Edir}{E^{\mathrm{dir}}}
\theoremstyle{plain}
\newtheorem{theorem}{Theorem}[section]
\newtheorem{lemma}[theorem]{Lemma}
\newtheorem{proposition}[theorem]{Proposition}
\newtheorem{corollary}[theorem]{Corollary}
\theoremstyle{definition}
\newtheorem{definition}[theorem]{Definition}
\newtheorem{xca}{Exercise}[section]
\newcommand{\isom}{\simeq} 
\newcommand{\ignore}[1]{}
\newcommand{\reals}{{\mathbb R}}
\newcommand{\integers}{{\mathbb Z}}
\newcommand{\complex}{{\mathbb C}}
\newcommand\EE{\mathbb{E}}
\newcommand\II{\mathbb{I}}
\DeclareMathAlphabet{\mathcal}{OMS}{cmsy}{m}{n}
\newcommand\cC{\mathcal{C}}
\newcommand\cO{\mathcal{O}}
\newcommand\cP{\mathcal{P}}
\newcommand\cR{\mathcal{R}}
\newcommand\cS{\mathcal{S}}
\newcommand\cT{\mathcal{T}}
\DeclareMathOperator{\Prob}{Prob}
\DeclareMathOperator{\VLG}{VLG}
\DeclareMathOperator{\Line}{Line}
\DeclareMathOperator{\SNBC}{SNBC}
\DeclareMathOperator{\snbc}{snbc}
\def\from{\colon}
\def\isom{\simeq}
\def\eqdef{\overset{\text{def}}{=}}
\DeclareMathOperator{\ord}{ord}
\def\implies{\Rightarrow}
\DeclareRobustCommand
\p@\hbox{.}\mkern2mu\raise7\p@\hbox{.}\mkern1mu}}
\newcommand\xhookrightarrow[2][]{\ext@arrow 0062{\hookrightarrowfill@}{#1}{#2}}
\def\hookrightarrowfill@{\arrowfill@\lhook\relbar\rightarrow}
\newcommand{\myDeleteNote}[1]{{\blue myDeleteNote:}\ {\red #1}}
\begin{document}

\title[Relativized Alon Conjecture III] 
{On the Relativized Alon Second Eigenvalue
Conjecture III: Asymptotic Expansions for Tangle-Free
Hashimoto Traces}

\author{Joel Friedman}
\address{Department of Computer Science,
        University of British Columbia, Vancouver, BC\ \ V6T 1Z4, CANADA}
\curraddr{}
\email{{\tt jf@cs.ubc.ca}}
\thanks{Research supported in part by an NSERC grant.}

\author{David Kohler}
\address{Department of Mathematics,
        University of British Columbia, Vancouver, BC\ \ V6T 1Z2, CANADA}
\curraddr{422 Richards St, Suite 170, Vancouver BC\ \  V6B 2Z4, CANADA}
\email{{David.kohler@a3.epfl.ch}}
\thanks{Research supported in part by an NSERC grant.}

%
\date{\today}

\subjclass[2010]{Primary 68R10}

\keywords{}

\begin{abstract}

This is the third in a series of articles devoted to showing that a typical
covering map of large degree to a fixed, regular graph has its new adjacency
eigenvalues within the bound conjectured by Alon for random regular graphs.

In this paper we consider random graphs that are random covering graphs 
of large
degree $n$ of a fixed base graph.  We prove the existence of asympototic
expansion in $1/n$ for the expected value of the 
number of strictly non-backtracking closed walks of length $k$
times
the indicator function that
the graph is free of certain {\em tangles};
moreover, we prove
that the coefficients of these expansions
are ``nice functions'' of $k$, 
namely approximately equal to a sum of polynomials
in $k$ times exponential functions of $k$.

Our results use the methods of Friedman used to resolve
Alon's original conjecture, combined with
the results of Article~II in this series of articles.  One simplification in
this article over the previous methods of Friedman is that the
``regularlized traces'' used in this article, which we call {\em certified
traces}, are far easier to define and work with than the 
previously utilized {\em selective traces}.

\end{abstract}

\maketitle
\setcounter{tocdepth}{3}
\tableofcontents

\newcommand{\sePrelimProofs}{17}

\section{Introduction}

This is the third article in a series of six devoted to proving
two main results of a generalization of Alon's second eigenvalue
conjecture.  

Alon's original conjecture 
(\cite{alon_eigenvalues}, Conjecture~5.1) is that for fixed integer $d$,
and real $\epsilon>0$,
the second adjacency matrix eigenvalue of
``most'' $d$-regular graphs on $n$ vertices is at most
$2\sqrt{d-1}+\epsilon$.
Graphs with so small a second eigenvalue provably ``expand'' sets
of vertices of certain sizes when passing to their set of neighbours 
(e.g., Section~1 of \cite{alon_eigenvalues} and
also \cite{alon_milman,tanner,dodziuk}); 
see \cite{hoory_linial_wigderson}
for many other applications of expanders and the
explicit construction of expanders \cite{lps,margulis,morgenstern}.
We do not know the motivation for Alon's conjecture, although
Section~4 of \cite{alon_eigenvalues} states the result of 
Alon and Boppana \cite{nilli} that $2\sqrt{d-1}+o_n(1)$ is a lower bound for
all $d$-regular graphs on $n$ vertices.
Alon's conjecture with weaker upper bounds appears in
\cite{broder,friedman_kahn_szemeredi,friedman_random_graphs},
and was finally proven in \cite{friedman_alon}.
Both the counting argument of Kahn and Szemeredi 
\cite{friedman_kahn_szemeredi}, 
and the trace methods of Broder \cite{broder} and
Friedman \cite{friedman_random_graphs,friedman_alon} give the
same lower bound on the absolute value of all the negative eigenvalues.

In this series of articles we study a generalization of Alon's 
conjecture, where one fixes a ``base'' graph $B$, and studies
random covering maps of degree $n$ to $B$.
The {\em new eigenvalues} \cite{friedman_relative_boolean,friedman_relative}
of the adjacency matrix of the covering graph are those not
arising from (and therefore orthogonal to)
the pullback of eigenfunctions on $B$; \cite{friedman_relative}
Section~5
conjectures that the largest absolute of the new eigenvalues
of ``most'' such covering maps
is $\rho(B)+o_n(1)$ where $\rho(B)$ is the spectral radius of 
the adjacency matrix of the universal cover of $B$
(and $\rho(B)=2\sqrt{d-1}$ if $B$ is $d$-regular).
This conjecture generalizes Alon's conjecture, which is the
special case where $B$ consists of one vertex.
This conjecture was proven with a weaker upper bound in
\cite{friedman_relative}, by adapting the trace methods of
Broder-Shamir \cite{broder} to this more general situation;
counting methods improved these results in
\cite{lubetzky,a-b} for regular base graphs, 
as did trace methods \cite{linial_puder,puder}.
The conjecture was proven when $B$ is a regular graph 
in \cite{friedman_kohler}, and with a much simpler type of trace method,
but involving more sophisticated probabilistic methods, by
Bordenav\'e
\cite{bordenave}.  Recently Bordenav\'e and Collins
\cite{bordenave_collins2019}
have proven spectral bounds for a large class of sums of random
permutation matrices, 
and their result resolves the entire relativized
Alon conjecture, where the base graph need not
be regular.
The point of this series of
articles is to ``factor'' the proof in \cite{friedman_kohler} into
independent parts; this factorization includes a number of 
significant
simplifications to the methods in \cite{friedman_random_graphs,friedman_alon}.

The above generalization of Alon's conjecture is also a
{\em relativization} of this conjecture (now a theorem)
in the sense that one
associates with Grothendieck, in that this generalization is
a theorem regarding covering morphisms in categories where the special
case regarding morphisms to a final element of the category
(i.e. graphs with one vertex)
specialize to the original conjecture (which is viewed as a
conjecture regarding objects in these categories).

The study of relative expanders of fixed degree two \cite{bilu}
have lead to a number
of exciting recent results \cite{mssI,mssIV}, proving the
existence of relatively Ramanujan bipartite of all degrees and all even
number of vertices; see also \cite{hall_puder_sawin}.

Let us briefly describe the main technical result of this article,
in rough terms.

In this article we fix a graph $B$ (not necessarily regular)
and consider a family
$\{\cC_n(B)\}$ of random covering map of degree $n$ to $B$
(defined for at least an infinite set of integers, $n$)
that is {\em algebraic} in a sense we will define.
The main and most difficult result in this paper concerns the function
$$
f(k,n)=\EE_{G\in\cC_n(B}[ \II_{{\rm TangleFree}(\ge\nu,<r)}(G) 
\Trace(H_G^k) ] ,
$$
where $H_B$ is the Hashimoto matrix (also called the non-backtracking
matrix) of $G$, and $\II_{{\rm TangleFree}(\ge\nu,<r)}$ is the
indicator function of a graph being free of certain {\em tangles}.
We prove that such a function has an asymptotic expansion
$$
c_0(k) + c_1(k)/n + \cdots + c_{r-1}(k)/n^{r-1} + O_k(1/n^r),
$$
for $k \ll n^{1/2}$, where $c_0(k)$ is explicitly computed
to within a small error term,
and $c_1,\ldots,c_{r-1}$ are certain well-behaved functions
of $k$.
In Article~IV we study such a result purely in terms of 
probability theory to prove that the results in this article
can be used to control the eigenvalues of $H_G$, which in
Article~V allows us to prove the relativized Alon conjecture 
when $B$ is $d$-regular, and in Article~VI is used to determine
probability that new adjacency eigenvalues exceeding 
$2\sqrt{d-1}+\epsilon$ occur in $\cC_n(B)$ to within a multiplicative
constant (with depending on $\epsilon$ but not on $n$) provided
that $B$ is regular and {\em Ramanujan}.

This article uses the methods of \cite{friedman_alon} but makes
some simplifying modifications, the most significant of which is the
introduction of {\em certified traces} that replace the 
{\em selective traces} of \cite{friedman_alon}.
It is not only simpler to define certified traces, but their
properties (and proofs using certified traces) can be
``factor'' into more independent parts.

The theorems in this paper requires the results of Article~II.
We assume that the reader is familiar with definitions in Article~I.
For ease of reading, we have summarized all definitions that we need
in Section~\ref{se_defs_review}; however, to understand the motivation
and the subtleties of these definitions we refer the reader to Article~I.

The rest of this article is organized as follows.
After reviewing the definitions needed in Section~\ref{se_defs_review},
we state the main results in this article in 
Section~\ref{se_main_theorems}.
We review the theorems needed from Article~II in 
Section~\ref{se_art_II_theorems}.
In Section~\ref{se_certified} we explain our strategy to prove our
main theorems, and state two more general theorems from which our
main theorems easily follow.
In Section~\ref{se_cert_proof}
we define $(\ge\nu,<r)$-tangles and certified traces, and discuss how they are
related to traces of the Hashimoto matrix of a graph.
In Section~\ref{se_cert_proof} we prove an asymptotic expansion
theorem regarding certified traces; in the process we prove that
each certified trace has a finite number of ``certificates,''
which is why it is easy to use them in proofs
(much easier that the {\em selective traces} in
\cite{friedman_alon}).
In Section~\ref{se_finiteness_min_tang} we prove that there are 
a finite number of minimal $(\ge\nu,<r)$-tangles for any fixed
$\nu,r$ with $\nu>1$; this was done in \cite{friedman_alon}, but
we correct a minor error there.
In Section~\ref{se_indicator} we define a series of approximations to
the indicator function of a graph having $(\ge\nu,<r)$-tangles,
and give their basic properties.
In Sections~\ref{se_ind_cert_proof}
and Section~\ref{se_ind_proof} we prove the generalizations of the
main theorems in this paper stated in Section~\ref{se_certified},
and in Section~\ref{se_finish_proof} we prove our
main theorems using these generalizations.

\section{Review of the Main Definitions}
\label{se_defs_review}

We refer the reader to Article~I for the definitions used in this article,
the motivation of such definitions, and an appendix there that lists all the
definitions and notation.
In this section we briefly review these definitions and notation. 

\subsection{Basic Notation and Conventions}
\label{su_very_basic}

We use $\reals,\complex,\integers,\naturals$
to denote, respectively, the
the real numbers, the complex numbers, the integers, and positive
integers or
natural numbers;
we use $\integers_{\ge 0}$ ($\reals_{>0}$, etc.)
to denote the set of non-negative
integers (of positive real numbers, etc.).
We denote $\{1,\ldots,n\}$ by $[n]$.

If $A$ is a set, we use $\naturals^A$ to denote the set of
maps $A \to \naturals$; we will refers to its elements as
{\em vectors}, denoted in bold face letters, e.g., $\mec k\in \naturals^A$
or $\mec k\from A\to\naturals$; we denote its {\em component}
in the regular face equivalents, i.e., for $a\in A$,
we use $k(a)\in\naturals$ to denote
the $a$-component of $\mec k$.
As usual, $\naturals^n$ denotes $\naturals^{[n]}=\naturals^{\{1,\ldots,n\}}$.
We use similar conventions for $\naturals$ replaced by $\reals$,
$\complex$, etc.

If $A$ is a set, then $\# A$ denotes the cardinality of $A$.
We often denote a set with all capital letters, and its cardinality
in lower case letters; for example,
when we define
$\SNBC(G,k)$, we will write
$\snbc(G,k)$ for $\#\SNBC(G,k)$.

If $A'\subset A$ are sets, then $\II_{A'}\from A\to\{0,1\}$ (with $A$
understood) denotes
the characteristic function of $A'$, i.e., $\II_{A'}(a)$ is $1$ if
$a\in A'$ and otherwise is $0$;
we also write $\II_{A'}$ (with $A$ understood) to mean $\II_{A'\cap A}$
when $A'$ is not necessarily a subset of $A$.

All probability spaces are finite; hence a probability space
is a pair $\cP=(\Omega,P)$ where $\Omega$ is a finite set and
$P\from \Omega\to\reals_{>0}$ with $\sum_{\omega\in\Omega}P(\omega)=1$;
hence an {\em event} is any subset of $\Omega$.
We emphasize that $\omega\in\Omega$ implies that $P(\omega)>0$ with
strict inequality; we refer to the elements of $\Omega$ as
the atoms of the probability space.
We use $\cP$ and $\Omega$ interchangeably when $P$ is
understood and confusion is unlikely.

A {\em complex-valued random variable} on $\cP$ or $\Omega$
is a function $f\from\Omega\to\complex$, and similarly for real-,
integer-, and natural-valued random variable; we denote its
$\cP$-expected value by
$$
\EE_{\omega\in\Omega}[f(\omega)]=\sum_{\omega\in\Omega}f(\omega)P(\omega).
$$
If $\Omega'\subset\Omega$ we denote the probability of $\Omega'$ by
$$
\Prob_{\cP}[\Omega']=\sum_{\omega\in\Omega'}P(\omega')
=
\EE_{\omega\in\Omega}[\II_{\Omega'}(\omega)].
$$
At times we write $\Prob_{\cP}[\Omega']$ where $\Omega'$ is
not a subset of $\Omega$, by which we mean
$\Prob_{\cP}[\Omega'\cap\Omega]$.

\subsection{Graphs, Our Basic Models, Walks}

A {\em directed graph},
or simply a {\em digraph},
is a tuple $G=(V_G,\Edir_G,h_G,t_G)$ consisting of sets
$V_G$ and $\Edir_G$ (of {\em vertices} and {\em directed edges}) and maps
$h_G,t_G$ ({\em heads}
and {\em tails}) $\Edir_G\to V_G$.
Therefore our digraphs can have multiple edges and
self-loops (i.e., $e\in\Edir_G$ with $h_G(e)=t_G(e)$).
A {\em graph} is a tuple $G=(V_G,\Edir_G,h_G,t_G,\iota_G)$
where $(V_G,\Edir_G,h_G,t_G)$ is a digraph and
$\iota_G\from \Edir_G\to \Edir_G$ is an involution with
$t_G\iota_G=h_G$;
the {\em edge set} of $G$, denoted $E_G$, is the
set of orbits of $\iota_G$, which (notation aside)
can be identified with $\Edir_G/\iota_G$,
the set of equivalence classes of
$\Edir_G$ modulo $\iota_G$;
if $\{e\}\in E_G$ is a singleton, then necessarily $e$ is a self-loop
with $\iota_G e =e $, and
we call $e$ a {\em half-loop}; other elements of $E_G$ are sets
$\{e,\iota_G e\}$ of size two, i.e., with $e\ne\iota_G e$, and for such $e$
we say that $e$ (or, at times, $\{e,\iota_G e\}$)
is a {\em whole-loop} if
$h_G e=t_G e$ (otherwise $e$ has distinct endpoints).

Hence these definitions allow our graphs to have multiple edges and 
two types of self-loops---whole-loops
and half-loops---as in
\cite{friedman_geometric_aspects,friedman_alon}.
The {\em indegree} and {\em outdegree} of a vertex in a digraph is
the number of edges whose tail, respectively whose head, is the vertex;
the {\em degree} of a vertex in a graph is its indegree (which equals
its outdegree) in the underlying digraph; 
therefore a whole-loop about a vertex contributes $2$
to its degree, whereas a half-loop contributes $1$.

An {\em orientation} of a graph, $G$, is a choice $\Eor_G\subset\Edir_G$
of $\iota_G$ representatives; i.e., $\Eor_G$ contains every half-loop, $e$,
and one element of each two-element set $\{e,\iota_G e\}$.

A {\em morphism $\pi\from G\to H$} of directed graphs is a pair
$\pi=(\pi_V,\pi_E)$ where $\pi_V\from V_G\to V_H$ and
$\pi_E\from \Edir_G\to\Edir_H$ are maps that intertwine the heads maps
and the tails maps of $G,H$ in the evident fashion;
such a morphism is {\em covering} (respectively, {\em \'etale},
elsewhere called an {\em immersion}) if for each $v\in V_G$,
$\pi_E$ maps those directed edges whose head is $v$ bijectively
(respectively, injectively) to those whose head is $\pi_V(v)$,
and the same with tail replacing head.
If $G,H$ are graphs, then a morphism $\pi\from G\to H$ is a morphism
of underlying directed graphs where $\pi_E\iota_G=\iota_H\pi_E$;
$\pi$ is called {\em covering} or {\em \'etale} if it is so as a morphism
of underlying directed graphs.
We use the words {\em morphism} and {\em map} interchangeably.

A walk in a graph or digraph, $G$, is an alternating sequence
$w=(v_0,e_1,\ldots,e_k,v_k)$ of vertices and directed edges
with $t_Ge_i=v_{i-1}$ and $h_Ge_i=v_i$ for $i\in[k]$;
$w$ is {\em closed} if $v_k=v_0$;
if $G$ is a graph,
$w$ is {\em non-backtracking}, or simply {\em NB},
if $\iota_Ge_i\ne e_{i+1}$
for $i\in[k-1]$, and {\em strictly 
non-backtracking closed}, or simply {\em SNBC},
if it is closed, non-backtracking, and 
$\iota_G e_k\ne e_1$.
The {\em visited subgraph} of a walk, $w$, in a graph $G$, denoted
$\ViSu_G(w)$ or simply
$\ViSu(w)$, is the smallest subgraph of $G$ containing all the vertices
and directed edges of $w$;
$\ViSu_G(w)$ generally depends on $G$, i.e., $\ViSu_G(w)$ cannot be inferred
from the sequence $v_0,e_1,\ldots,e_k,v_k$ alone without knowing
$\iota_G$.

The adjacency matrix, $A_G$,
of a graph or digraph, $G$, is defined as usual (its $(v_1,v_2)$-entry
is the number of directed edges from $v_1$ to $v_2$);
if $G$ is a graph on $n$ vertices, 
then $A_G$ is symmetric and we order its eigenvalues (counted with
multiplicities) and denote them
$$
\lambda_1(G)\ge \cdots \ge \lambda_n(G).
$$
If $G$ is a graph, its
Hashimoto matrix (also called the non-backtracking matrix), $H_G$,
is the adjacency matrix of the {\em oriented line graph} of $G$,
$\Line(G)$,
whose vertices are $\Edir_G$ and whose directed edges
are the subset of $\Edir_G\times\Edir_G$ consisting of pairs $(e_1,e_2)$
such that $e_1,e_2$ form the
directed edges of a non-backtracking walk (of length two) in $G$
(the tail of $(e_1,e_2)$ is $e_1$, and its head $e_2$);
therefore $H_G$
is the square matrix indexed on $\Edir_G$, whose $(e_1,e_2)$ entry
is $1$ or $0$ according to, respectively, whether or not
$e_1,e_2$ form a non-backtracking walk
(i.e., $h_G e_1=t_G e_2$ and $\iota_G e_1\ne e_2$).
We use $\mu_1(G)$ to denote the Perron-Frobenius eigenvalue of 
$H_G$, and use $\mu_i(G)$ with $1<i\le \#\Edir_G$ to denote the
other eigenvalues of $H_G$ (which are generally complex-valued)
in any order.

If $B,G$ are both digraphs,
we say that $G$ is a {\em coordinatized graph over $B$
of degree $n$}
if
\begin{equation}\label{eq_coord_def}
V_G=V_B\times [n], \quad\Edir_G=\Edir_B\times[n], \quad
t_G(e,i)=(t_B e,i),\quad
h_G(e,i)=(h_Be,\sigma(e)i)
\end{equation} 
for some map
$\sigma\from\Edir_B\to\cS_n$, where $\cS_n$ is the group
of permutations on $[n]$; we call $\sigma$ (which is uniquely determined by
\eqref{eq_coord_def}) {\em the permutation assignment
associated to $G$}.
[Any such $G$ comes with a map $G\to B$ given by 
``projection to the first component of
the pair,'' and this map is a covering map of degree $n$.]
If $B,G$ are graphs, we say that a graph $G$ is a 
{\em coordinatized graph over $B$
of degree $n$} if \eqref{eq_coord_def} holds and also
\begin{equation}\label{eq_coord_def_graph}
\iota_G(e,i) = \bigl( \iota_B e,\sigma(e)i \bigr) ,
\end{equation} 
which implies that 
\begin{equation}\label{eq_sigma_iota_B}
(e,i)=\iota_G\iota_G(e,i) = \bigl( e, \sigma(\iota_B e)\sigma(e)i \bigr)
\quad\forall e\in\Edir_B,\ i\in[n],
\end{equation}
and hence $\sigma(\iota_B e)=\sigma(e)^{-1}$;
we use $\Coord_n(B)$ to denote the set of all coordinatized covers
of a graph, $B$, of degree $n$.

The {\em order} of a graph, $G$, is $\ord(G)\eqdef (\#E_G)-(\#V_G)$.
Note that a half-loop and a whole-loop each contribute $1$ to 
$\#E_G$ and to the order of $G$.
The {\em Euler characteristic} of a graph, $G$, is
$\chi(G)\eqdef (\# V_G) - (\#\Edir_G)/2$.
Hence $\ord(G)\ge -\chi(G)$, with equality iff $G$ has no half-loops.

If $w$ is a walk in any $G\in\Coord_n(B)$, then one easily
sees that $\ViSu_G(w)$ can be inferred
from $B$ and $w$ alone.

If $B$ is a graph without half-loops, then the {\em permutation model over
$B$} refers to the probability spaces $\{\cC_n(B)\}_{n\in\naturals}$ where
the atoms of $\cC_n(B)$ are coordinatized coverings of degree $n$
over $B$ chosen with the uniform distribution.
More generally, a {\em model} over a graph, $B$, is a collection of
probability spaces, $\{\cC_n(B)\}_{n\in N}$, 
defined for $n\in N$ where $N\subset\naturals$ is an
infinite subset, and where the atoms of each $\cC_n(B)$ are elements
of $\Coord_n(B)$.
There are a number of models related to the permutation model,
which are generalizations of the models of \cite{friedman_alon},
that we call {\em our basic models} and are defined in Article~I;
let us give a rough description.

All of {\em our basic models} are {\em edge independent}, meaning that
for any orientation $\Eor_B\subset\Edir_B$, the values of 
the permutation assignment, $\sigma$, on $\Eor_B$ are independent
of one another (of course, $\sigma(\iota_G e)=(\sigma(e))^{-1}$,
so $\sigma$ is determined by its values on any orientation
$\Eor_B$); for edge independent models, it suffices to specify
the ($\cS_n$-valued)
random variable $\sigma(e)$ for each $e$ in $\Eor_B$ or $\Edir_B$.
The permutation model can be alternatively described as the 
edge independent model that assigns a uniformly chosen permutation
to each $e\in\Edir_B$ (which requires $B$ to have no half-loops);
the {\em full cycle} (or simply {\em cyclic}) model is the same, except
that if $e$ is a whole-loop then $\sigma(e)$ is chosen uniformly
among all permutations whose cyclic structure consists of a single
$n$-cycle.
If $B$ has half-loops, then we restrict $\cC_n(B)$ either to $n$ even
or $n$ odd and for each half-loop $e\in\Edir_B$ we
choose $\sigma(e)$ as follows: if $n$ is even we choose 
$\sigma(e)$ uniformly among all perfect matchings,
i.e., involutions (maps equal to their inverse) with no fixed points;
if $n$ is odd then we choose $\sigma(e)$ uniformly among
all {\em nearly perfect matchings}, meaning involutions with one
fixed point.
We combine terms when $B$ has half-loops: for example,
the term {\em full cycle-involution} (or simply {\em cyclic-involution})
{\em model of odd degree over $B$} refers
to the model where the degree, $n$, is odd,
where $\sigma(e)$ follows the full cycle rule when $e$ is
not a half-loop, and where $\sigma(e)$ is a near perfect matching
when $e$ is a half-loop;
similarly for the {\em full cycle-involution} (or simply 
{\em cyclic-involution})
{\em model of even degree}
and the {\em permutation-involution model of even degree}
or {\em of odd degree}.

If $B$ is a graph, then a model, $\{\cC_n(B)\}_{n\in N}$, over $B$
may well have $N\ne \naturals$ (e.g., our basic models above when
$B$ has half-loops); in this case many formulas involving
the variable $n$ are only defined for $n\in N$.  For brevity, we
often do not explicitly write $n\in N$ in such formulas; 
for example we usually write
$$
\lim_{n\to\infty} \quad\mbox{to abbreviate}\quad
\lim_{n\in N,\ n\to\infty} \ .
$$
Also we often write simply $\cC_n(B)$ or $\{\cC_n(B)\}$ for
$\{\cC_n(B)\}_{n\in N}$ if confusion is unlikely to occur.

A graph is {\em pruned} if all its vertices are of degree at least
two (this differs from the more standard definition of {\em pruned} 
meaning that there are
no leaves).  If $w$ is any SNBC walk in a graph, $G$, then
we easily see that
$\ViSu_G(w)$ is necessarily pruned: i.e., any of its vertices must be
incident upon a whole-loop or two distinct edges
[note that a walk of length $k=1$ about a half-loop, $(v_0,e_1,v_1)$, by
definition, is not SNBC since $\iota_G e_k=e_1$].
It easily follows that $\ViSu_G(w)$ is contained in the graph
obtained from $G$ by repeatedly ``pruning any leaves''
(i.e., discarding any vertex of degree one and its incident edge)
from $G$.
Since our trace methods only concern (Hashimoto matrices and)
SNBC walks, it suffices to work with models $\cC_n(B)$ where
$B$ is pruned.
It is not hard to see that if $B$ is pruned and connected,
then $\ord(B)=0$ iff $B$ is a cycle,
and $\mu_1(B)>1$ iff $\chi(B)<0$;
this is formally proven in Article~III (Lemma~6.4).
Our theorems are not usually interesting unless $\mu_1(B)>\mu_1^{1/2}(B)$,
so we tend to restrict our main theorems
to the case $\mu_1(B)>1$ or, equivalently,
$\chi(B)<0$; some of our techniques work without these restrictions.

\subsection{Asymptotic Expansions}
\label{su_asymptotic_expansions}


A function $f\from\naturals\to\complex$ is a {\em polyexponential} if
it is a sum of functions $p(k)\mu^k$, where $p$ is a polynomial
and $\mu\in\complex$, with the convention
that for $\mu=0$ we understand $p(k)\mu^k$ to mean
any function that vanishes for sufficiently large $k$\footnote{
  This convention is used because then for any fixed matrix, $M$,
  any entry of $M^k$, as a function of $k$, is a polyexponential
  function of $k$; more specifically, the $\mu=0$ convention
  is due to the fact that a Jordan block of eigenvalue $0$ is
  nilpotent.
  }; we refer to the $\mu$
needed to express $f$ as the {\em exponents} or {\em bases} of $f$.
A function $f\from\naturals\to\complex$ is {\em of growth $\rho$}
for a $\rho\in\reals$ if $|f(k)|=o(1)(\rho+\epsilon)^k$ for any $\epsilon>0$.
A function $f\from\naturals\to\complex$ is $(B,\nu)$-bounded if it
is the sum of a function of growth $\nu$ plus a polyexponential function
whose bases are bounded by $\mu_1(B)$ (the Perron-Frobenius eigenvalue
of $H_B$); the {\em larger bases} of $f$ (with respect to $\nu$) are
those bases of the polyexponential function that are larger in
absolute value than $\nu$.
Moreover, such an $f$ is called {\em $(B,\nu)$-Ramanujan} if its
larger bases are all eigenvalues of $H_B$.

We say that a function $f=f(k,n)$ taking some subset of $\naturals^2$ to
$\complex$ has a 
{\em $(B,\nu)$-bounded expansion of order $r$} if for some
constant $C$ we have
\begin{equation}\label{eq_B_nu_defs_summ}
f(k,n) = c_0(k)+\cdots+c_{r-1}(k)+ O(1) c_r(k)/n^r,
\end{equation} 
whenever $f(k,n)$ is defined and $1\le k\le n^{1/2}/C$, where
for $0\le i\le r-1$, the $c_i(k)$ are $(B,\nu)$-bounded and $c_r(k)$
is of growth $\mu_1(B)$.
Furthermore, such an expansion is called {\em $(B,\nu)$-Ramanujan}
if for $0\le i\le r-1$, the $c_i(k)$ are {\em $(B,\nu)$-Ramanujan}.

Typically our functions $f(k,n)$ as in
\eqref{eq_B_nu_defs_summ} are defined for all $k\in\naturals$
and $n\in N$ for an infinite set $N\subset\naturals$ representing
the possible degrees of our random covering maps in the model
$\{\cC_n(B)\}_{n\in N}$ at hand.

\subsection{Tangles}
\label{su_tangles}

A {\em $(\ge\nu)$-tangle} is any 
connected graph, $\psi$, with $\mu_1(\psi)\ge\nu$,
where $\mu_1(\psi)$ denotes the Perron-Frobenius eigenvalue of $H_B$;
a {\em $(\ge\nu,<r)$-tangle} is any $(\ge\nu)$-tangle of order less than
$r$;
similarly for $(>\nu)$-tangles, i.e.,
$\psi$ satisfying the weak inequality $\mu_1(\psi)>\nu$,
and for $(>\nu,r)$-tangles.
We use ${\rm TangleFree}(\ge\nu,<r)$ to denote those graphs that don't
contain a subgraph that is $(\ge\nu,<r)$-tangle, and
${\rm HasTangles}(\ge\nu,<r)$ for those that do; we
never use $(>\nu)$-tangles in defining TangleFree and HasTangles,
for the technical reason
(see Article~III or Lemma~9.2 of \cite{friedman_alon}) that
for $\nu>1$ and any $r\in\naturals$
that there are only finitely many 
$(\ge\nu,<r)$-tangles, up to isomorphism, that are minimal
with respect to inclusion\footnote{
  By contrast, there are infinitely many minimal $(>\nu,<r)$-tangles
  for some values of $\nu>1$ and $r$: indeed, consider any connected pruned
  graph $\psi$, and set $r=\ord(\psi)+2$, $\nu=\mu_1(\psi)$.  Then if
  we fix two vertices in $\psi$ and let $\psi_s$ be the graph that is
  $\psi$ with an additional edge of length $s$ between these two 
  vertices, then $\psi_s$ is an $(>\nu,<r)$-tangle.  However, if
  $\psi'$ is $\psi$ with any single edge deleted, and $\psi'_s$ is 
  $\psi_s$ with this edge deleted, then one can show that
  $\mu_1(\psi'_s)<\nu$ for $s$ sufficiently large.  It follows that
  for $s$ sufficiently large, $\psi_s$ are minimal $(>\nu,<r)$-tangles.
}.

\subsection{$B$-Graphs, Ordered Graphs, and Strongly Algebraic Models}
\label{su_B_ordered_strongly_alg}

An {\em ordered graph}, $G^\og$, is a graph, $G$, endowed with an
{\em ordering}, meaning
an orientation (i.e., $\iota_G$-orbit representatives), 
$\Eor_G\subset\Edir_G$, 
and total orderings of $V_G$ and $E_G$;
a walk, $w=(v_0,\ldots,e_k,v_k)$ in a graph endows $\ViSu(w)$ with a
{\em first-encountered} ordering:
namely, $v\le v'$ if the first occurrence of $v$ comes before that
of $v'$ in the sequence $v_0,v_1,\ldots,v_k$,
similarly for $e\le e'$, and we orient each edge in the
order in which it is first traversed (some edges may be traversed
in only one direction).
We use $\ViSu^\og(w)$ to refer to $\ViSu(w)$ with this ordering.

A {\em morphism} $G^\og\to H^\og$ of ordered graphs is a morphism
$G\to H$ that respects the ordering in the evident fashion.
We are mostly interested in {\em isomorphisms} of ordered graphs;
we easily see that any isomorphism $G^\og\to G^\og$ must be the
identity morphism; it follows that if $G^\og$ and $H^\og$ are
isomorphic, then there is a unique isomorphism $G^\og\to H^\og$.

If $B$ is a graph, then a $B$-graph, $G_\Bg$, is a graph $G$ endowed 
with a map $G\to B$ (its {\em $B$-graph} structure).
A {\em morphism} $G_\Bg\to H_\Bg$ of $B$-graphs is a morphism
$G\to H$ that respects the $B$-structures in the evident sense.
An {\em ordered $B$-graph}, $G^\og_\Bg$, is a graph endowed with
both an ordering and a $B$-graph structure; a morphism of
ordered $B$-graphs is a morphism of the underlying graphs that
respects both the ordering and $B$-graph structures.
If $w$ is a walk in a $B$-graph, $G_\Bg$, we use $\ViSu_\Bg(w)$ to denote
$\ViSu(w)$ with the $B$-graph structure it inherits from $G$ in
the evident sense; we use $\ViSu_\Bg^\og(w)$ to denote
$\ViSu_\Bg(w)$ with its first-encountered ordering.

At times we drop the superscript $\,^\og$ and the subscript $\,_\Bg$;
for example, we write $G\in\Coord_n(B)$ instead of $G_\Bg\in\cC_n(B)$
(despite the fact that we constantly utilize
the $B$-graph structure on elements of
$\Coord_n(B)$).

A $B$-graph $G_\Bg$ is {\em covering} or {\'etale} if its structure
map $G\to B$ is.

If $\pi\from S\to B$ is a $B$-graph, we use
$\mec a=\mec a_{S_\Bg}$ to denote the vector
$\Edir_B\to\integers_{\ge 0}$ given by
$a_{S_\Bg}(e) = \# \pi^{-1}(e)$;
since $a_{S_\Bg}(\iota_B e) = a_{S_\Bg}(e)$ for all $e\in\Edir_B$,
we sometimes view $\mec a$ as a function $E_B\to\integers_{\ge 0}$, i.e.,
as the function taking $\{e,\iota_B e\}$ to 
$a_{S_\Bg}(e)=a_{S_\Bg}(\iota_B e)$.
We similarly define $\mec b_{S_\Bg}\from V_B\to\integers_{\ge 0}$ by
setting $b_{S_\Bg}(v) = \#\pi^{-1}(v)$.
If $w$ is a walk in a $B$-graph, we set $\mec a_w$ to be
$\mec a_{S_\Bg}$ where $S_\Bg=\ViSu_\Bg(w)$, and similarly for $\mec b_w$.
We refer to $\mec a,\mec b$ (in either context) as
{\em $B$-fibre counting functions}.

If $S_\Bg^\og$ is an ordered $B$-graph and $G_\Bg$ is a $B$-graph, we 
use $[S_\Bg^\og]\cap G_\Bg$ to denote the set of ordered graphs ${G'}_\Bg^\og$
such that $G'_\Bg\subset G_\Bg$ and ${G'}_\Bg^\og\isom S_\Bg^\og$
(as ordered $B$-graphs); this set is naturally identified with the
set of injective morphisms $S_\Bg\to G_\Bg$, and the cardinality of these
sets is independent of the ordering on $S_\Bg^\og$.


A $B$-graph, $S_\Bg$, or an ordered $B$-graph, $S_\Bg^\og$,
{\em occurs in a model $\{\cC_n(B)\}_{n\in N}$}
if for all sufficiently large
$n\in N$, $S_\Bg$ is isomorphic to a $B$-subgraph of some element
of $\cC_n(B)$; similary a graph, $S$, {\em occurs in 
$\{\cC_n(B)\}_{n\in N}$} if it can be endowed with a $B$-graph
structure, $S_\Bg$, that occurs in 
$\{\cC_n(B)\}_{n\in N}$.

A model $\{\cC_n(B)\}_{n\in N}$ of coverings of $B$ is {\em strongly
algebraic} if
\begin{enumerate}
\item for each $r\in\naturals$
there is a function, $g=g(k)$, of growth $\mu_1(B)$
such that if $k\le n/4$ we have
\begin{equation}\label{eq_algebraic_order_bound}
\EE_{G\in\cC_n(B)}[ \snbc_{\ge r}(G,k)] \le
g(k)/n^r
\end{equation}
where $\snbc_{\ge r}(G,k)$ is the number of SNBC walks of length
$k$ in $G$ whose visited subgraph is of order at least $r$;
\item
for any $r$ there exists
a function $g$ of growth $1$ and real $C>0$ such that the following
holds:
for any ordered $B$-graph, $S_\Bg^\og$, that is pruned and of
order less than $r$,
\begin{enumerate}
\item
if $S_\Bg$ occurs in $\cC_n(B)$, then for
$1\le \#\Edir_S\le n^{1/2}/C$,
\begin{equation}\label{eq_expansion_S}
\EE_{G\in\cC_n(B)}\Bigl[ \#\bigl([S_\Bg^\og]\cap G\bigr) \Bigr]
=
c_0 + \cdots + c_{r-1}/n^{r-1}
+ O(1) g(\# E_S) /n^r
\end{equation} 
where the $O(1)$ term is bounded in absolute value by $C$
(and therefore independent of $n$ and $S_\Bg$), and
where $c_i=c_i(S_\Bg)\in\reals$ such that
$c_i$ is $0$ if $i<\ord(S)$ and $c_i>0$ for $i=\ord(S)$;
and
\item
if $S_\Bg$ does not occur in $\cC_n(B)$, then for any
$n$ with $\#\Edir_S\le n^{1/2}/C$,
\begin{equation}\label{eq_zero_S_in_G}
\EE_{G\in\cC_n(B)}\Bigl[ \#\bigl([S_\Bg^\og]\cap G\bigr) \Bigr]
= 0 
\end{equation} 
(or, equivalently, no graph in $\cC_n(B)$ has a $B$-subgraph isomorphic to
$S_\Bg^\og$);
\end{enumerate}
\item
$c_0=c_0(S_\Bg)$ equals $1$ if $S$ is a cycle (i.e., $\ord(S)=0$ and
$S$ is connected) that occurs in $\cC_n(B)$;
\item
$S_\Bg$ occurs in $\cC_n(B)$ iff $S_\Bg$ is an \'etale $B$-graph
and $S$ has no half-loops; and
\item
there exist
polynomials $p_i=p_i(\mec a,\mec b)$ such that $p_0=1$
(i.e., identically 1), and for every
\'etale $B$-graph, $S_\Bg^\og$, we have that
\begin{equation}\label{eq_strongly_algebraic}
c_{\ord(S)+i}(S_\Bg) = p_i(\mec a_{S_\Bg},\mec b_{S_\Bg}) \ .
\end{equation}
\end{enumerate}
Notice that condition~(3), regarding $S$ that are cycles, is implied
by conditions~(4) and~(5); we leave in condition~(3) since this makes the
definition of {\em algebraic} (below) simpler.
Notice that \eqref{eq_expansion_S} and \eqref{eq_strongly_algebraic}
are the main reasons that we work with
ordered $B$-graphs: indeed, the coefficients depend only on
the $B$-fibre counting function $\mec a,\mec b$, which 
depend on the structure of
$S_\Bg^\og$ as a $B$-graph; this is not true if we don't work with
ordered graphs: i.e.,
\eqref{eq_expansion_S} fails to
hold if we replace $[S_\Bg^\og]$
with $[S_\Bg]$ (when $S_\Bg$ has nontrivial automorphisms), where
$[S_\Bg]\cap G$ refers to the number of $B$-subgraphs of $G$ isomorphic
to $S_\Bg$; the reason is that
$$
\#[S_\Bg^\og]\cap G_\Bg = \bigl( \#{\rm Aut}(S_\Bg)\bigr)
\bigl( \#[S_\Bg]\cap G_\Bg \bigr)
$$
where ${\rm Aut}(S_\Bg)$ is the group of automorphisms of $S_\Bg$, 
and it is $[S_\Bg^\og]\cap G_\Bg$ rather than $[S_\Bg]\cap G_\Bg$
that turns out to have the ``better'' properties;
see Section~6 of Article~I for examples.
Ordered graphs are convenient to use for a number of other reasons.

\ignore{
\myDeleteNote{Stuff deleted here and below on September 13, 2018.}
}

\subsection{Homotopy Type}

The homotopy type of a walk and of an ordered subgraph are defined
by {\em suppressing} its ``uninteresting'' vertices of degree two;
examples are given in Section~6 of Article~I.
Here is how we make this precise.

A {\em bead} in a graph is a vertex of degree two that is not
incident upon a self-loop.
Let $S$ be a graph and $V'\subset V_S$ be a {\em proper bead subset} of 
$V_S$,
meaning that $V'$ consists only of beads of $V$,
and that no connected component of $S$ has all its vertices in
$V'$ (this can only happen for connected components of $S$ that
are cycles);
we define the {\em bead suppression} $S/V'$ to be the following
graph: (1) its
vertex set $V_{S/V'}$
is $V''=V_S\setminus V'$, (2) its directed edges, $\Edir_{S/V'}$ consist
of
the {\em $V$'-beaded paths}, i.e., non-backtracking walks
in $S$ between elements of $V''$ whose intermediate vertices lie in $V'$,
(3) $t_{S/V'}$ and $h_{S/V'}$ give the first and last vertex of
the beaded path, and (4) $\iota_{S/V'}$ takes a beaded path
to its reverse walk
(i.e., takes $(v_0,e_1,\ldots,v_k)$ to
$(v_k,\iota_S e_k,\ldots,\iota_S e_1,v_0)$).
One can recover $S$ from the suppression $S/V'$ for pedantic reasons,
since we have defined its directed edges to be beaded paths of $S$.
If $S^\og=\ViSu^\og(w)$ where $w$ is a non-backtracking walk,
then the ordering of $S$ can be inferred by the naturally
corresponding order on $S/V'$, and we use $S^\og/V'$ to denote
$S/V'$ with this ordering.

Let $w$ be a non-backtracking walk in a graph, and 
$S^\og=\ViSu^\og(w)$ its visited
subgraph; the {\em reduction} of $w$ is the ordered graph,
$R^\og$, denoted $S^\og/V'$,
whose underlying graph is
$S/V'$ where $V'$ is the set of beads of $S$ except
the first and last vertices of $w$ (if one or both are beads),
and whose ordering is naturally arises from that on $S^\og$;
the {\em edge lengths} of $w$ is the function $E_{S/V'}\to\naturals$
taking an edge of $S/V'$ to the length of the beaded path it represents
in $S$;
we say that $w$ is {\em of homotopy type} $T^\og$ for any ordered
graph $T^\og$ that is isomorphic to $S^\og/V'$; in this case
the lengths of $S^\og/V'$ naturally give lengths $E_T\to\naturals$
by the unique isomorphism from $T^\og$ to $S^\og/V'$.
If $S^\og$ is the visited subgraph of a non-backtracking walk,
we define the reduction, homotopy type, and edge-lengths of $S^\og$ to
be that of the walk, since these notions depend only on $S^\og$ and
not the particular walk.

If $T$ is a graph and $\mec k\from E_T\to\naturals$ a function, then
we use $\VLG(T,\mec k)$ (for {\em variable-length graph}) to denote
any graph obtained from $T$ by gluing in a path of length $k(e)$
for each $e\in E_T$.  If $S^\og$ is of homotopy type $T^\og$
and $\mec k\from E_T\to \naturals$ its edge lengths,
then $\VLG(T,\mec k)$ is isomorphic to $S$ (as a graph).
Hence the construction of variable-length graphs is a sort of
inverse to bead suppression.

If $T^\og$ is an ordering on $T$ that arises as the first encountered
ordering of a non-backtracking walk on $T$ (whose visited subgraph
is all of $T$), then this ordering gives rise to a natural
ordering on $\VLG(T,\mec k)$ that we denote $\VLG^\og(T^\og,\mec k)$.
Again, this ordering on the variable-length graph is a sort of
inverse to bead suppression on ordered graphs.

\subsection{$B$-graphs and Wordings}

If $w_B=(v_0,e_1,\ldots,e_k,v_k)$ with $k\ge 1$ is a walk in a graph
$B$, then we can identify
$w_B$ with the string $e_1,e_2,\ldots,e_k$ over the alphabet
$\Edir_B$.
For technical reasons, the definitions below of
a {\em $B$-wording} and 
the {\em induced wording}, are given as strings over $\Edir_B$ rather
than the full alternating string of vertices and directed edges.
The reason is that 
doing this gives the correct notion of the {\em eigenvalues} of
an algebraic model (defined below).

Let $w$ be a non-backtracking walk in a $B$-graph, whose reduction
is $S^\og/V'$, and let
$S_\Bg^\og=\ViSu_\Bg^\og$.
Then the {\em wording induced by $w$} on $S^\og/V'$ is
the map $W$ from $\Edir_{S/V'}$ to strings in $\Edir_B$
of positive length, 
taking a
directed edge $e\in\Edir_{S/V'}$ to the string of $\Edir_B$ edges
in the non-backtracking walk in $B$
that lies under the walk in $S$ that it represents.
Abstractly, we say that a {\em $B$-wording} of a graph $T$
is a map $W$ from $\Edir_T$ to words over the alphabet
$\Edir_B$ that represent (the directed edges of)
non-backtracking walks in $B$ such that
(1) $W(\iota_T e)$ is the reverse word (corresponding to
the reverse walk) in $B$ of $W(e)$, 
(2) if $e\in\Edir_T$ is a half-loop, then $W(e)$ is of length one
whose single letter is a half-loop, and
(3) the tail of the first directed edge in $W(e)$ 
(corresponding to the first vertex in the associated walk in $B$)
depends only on $t_T e$;
the {\em edge-lengths} of $W$ is the function $E_T\to\naturals$
taking $e$ to the length of $W(e)$.
[Hence the wording induced by $w$ above is, indeed, a $B$-wording.]

Given a graph, $T$, and a $B$-wording $W$, there is a $B$-graph,
unique up to isomorphism, whose underlying graph is $\VLG(T,\mec k)$
where $\mec k$ is the edge-lengths of $W$, and where the $B$-graph
structure maps the non-backtracking walk in $\VLG(T,\mec k)$
corresponding to an $e\in\Edir_T$ to the non-backtracking walk in $B$
given by $W(e)$.
We denote any such $B$-graph by $\VLG(T,W)$; again this is
a sort of inverse to starting with a non-backtracking walk
and producing the wording it induces on its visited subgraph.

Notice that if $S_\Bg^\og=\VLG(T^\og,W)$ for a $B$-wording, $W$,
then the $B$-fibre counting functions
$\mec a_{S_\Bg}$ and $\mec b_{S_\Bg}$ can be
inferred from $W$, and we may therefore write $\mec a_W$ and
$\mec b_W$.

\subsection{Algebraic Models}

By a $B$-type we mean a pair $T^{\rm type}=(T,\cR)$ consisting
of a graph, $T$, and a map from $\Edir_T$ to the set
of regular languages over the alphabet $\Edir_B$ (in the sense of regular
language theory) such that
(1) all words in $\cR(e)$ are positive length strings corresponding to
non-backtracking walks in $B$, 
(2) if for $e\in\Edir_T$ we have $w=e_1\ldots e_k\in\cR(e)$,
then $w^R\eqdef \iota_B e_k\ldots\iota_B e_1$ lies in $\cR(\iota_T e)$,
and (3) if $W\from \Edir_T\to(\Edir_B)^*$ (where $(\Edir_B)^*$ is
the set of strings over $\Edir_B$) satisfies
$W(e)\in\cR(e)$ and $W(\iota_T e)=W(e)^R$ for all $e\in \Edir_T$,
then $W$ is a $B$-wording.
A $B$-wording $W$ of $T$ is {\em of type $T^{\rm type}$} if
$W(e)\in\cR(e)$ for each $e\in\Edir_T$.

Let $\cC_n(B)$ be a model that satisfies (1)--(3) of the definition
of strongly algebraic.
If $\cT$ a subset of $B$-graphs,
we say that the model is {\em algebraic restricted to $\cT$}
if 
either all $S_\Bg\in\cT$ occur in $\cC_n(B)$ or they all do not,
and if so
there are polynomials $p_0,p_1,\ldots$ such that
$c_i(S_\Bg)=p_i(S_\Bg)$ for any $S_\Bg\in\cT$. 
We say that $\cC_n(B)$ is {\em algebraic} if 
\begin{enumerate}
\item
setting $h(k)$ to be
the number of $B$-graph isomorphism classes of \'etale $B$-graphs
$S_\Bg$ such that $S$ is a cycle of length $k$ and $S$ does
not occur in $\cC_n(B)$, we have that 
$h$ is a function of growth $(d-1)^{1/2}$; and
\item
for any
pruned, ordered graph, $T^\og$, there is a finite number of
$B$-types, $T_j^{\rm type}=(T^\og,\cR_j)$, $j=1,\ldots,s$, 
such that (1) any $B$-wording, $W$, of $T$ belongs to exactly one
$\cR_j$, and
(2) $\cC_n(B)$ is algebraic when restricted to $T_j^{\rm type}$.
\end{enumerate}

[In Article~I we show that
if instead each $B$-wording belong to 
{\em at least one} $B$-type $T_j^{\rm type}$, then one can choose a
another set of
$B$-types that satisfy (2) and where each $B$-wording belongs
to {\em a unique} $B$-type;
however, the uniqueness
is ultimately needed in our proofs,
so we use uniqueness in our definition of algebraic.]

We remark that one can say that a walk, $w$, in a $B$-graph,
or an ordered $B$-graphs, $S_\Bg^\og$, is of {\em homotopy type $T^\og$},
but when $T$ has non-trivial automorphism one {\em cannot} say
that is of $B$-type $(T,\cR)$ unless---for example---one orders
$T$ and speaks of an {\em ordered $B$-type}, $(T^\og,\cR)$.
[This will be of concern only in Article~II.]

We define the {\em eigenvalues} of a regular language, $R$, to be the minimal
set $\mu_1,\ldots,\mu_m$ such that for any $k\ge 1$,
the number of words of length $k$ in the language
is given as
$$
\sum_{i=1}^m p_i(k)\mu_i^k
$$
for some polynomials $p_i=p_i(k)$, with the convention that
if $\mu_i=0$ then $p_i(k)\mu_i^k$ refers to any function that 
vanishes for $k$ sufficiently large (the reason for this is that
a Jordan block of eigenvalue $0$ is a nilpotent matrix).
Similarly, we define the eigenvalues of a $B$-type $T^{\rm type}=(T,\cR)$
as the union of all the eigenvalues of the $\cR(e)$.
Similarly a {\em set of eigenvalues} of a graph, $T$
(respectively, an algebraic model, $\cC_n(B)$)
is
any set containing the eigenvalues containing the eigenvalues
of some choice of $B$-types used in the definition of algebraic
for $T$-wordings (respectively, for $T$-wordings for all $T$).

[In Article~V we prove that all of our basic models are algebraic;
some of our basic models, such as the
permutation-involution model and the cyclic models, are not
strongly algebraic.]

We remark that a homotopy type, $T^\og$,
of a non-backtracking walk, can only have beads as its first or last 
vertices; however, in the definition of algebraic we require
a condition on {\em all pruned graphs}, $T$, 
which includes $T$ that may have many beads and may not be connected;
this is needed
when we define homotopy types of pairs in Article~II.

\subsection{SNBC Counting Functions}

If $T^\og$ is an ordered graph and $\mec k\from E_T\to\naturals$, 
we use $\SNBC(T^\og,\mec k;G,k)$ to denote the set of SNBC walks in $G$
of length $k$ and of homotopy type $T^\og$ and edge lengths $\mec k$.
We similarly define
$$
\SNBC(T^\og,\ge\bec\xi;G,k) \eqdef 
\bigcup_{\mec k\ge\bec\xi} \SNBC(T^\og,\mec k;G,k)
$$
where $\mec k\ge\bec\xi$ means that $k(e)\ge\xi(e)$ for all $e\in E_T$.
We denote the cardinality of these sets by replacing $\SNBC$ with
$\snbc$;
we call $\snbc(T^\og,\ge\bec\xi;G,k)$ the set of 
{\em $\bec\xi$-certified
traces of homotopy type $T^\og$ of length $k$ in $G$};
in Article~III we will refer to certain $\bec\xi$ as {\em certificates}.

\section{Main Theorems in This Article}
\label{se_main_theorems}

The main theorems in this article were stated in Article~I.
For ease of reading we restate them here.

Recall that if $A'\subset A$ are sets, then $\II_{A'}$ denotes the
indicator function of $A'$.

We also recall that 
a model, $\{\cC_n(B)\}_{n\in N}$, over a graph $B$
may well have $N\ne \naturals$ (e.g., our basic models above when
$B$ has half-loops); in this case many formulas involving
the variable $n$ are only defined for $n\in N$.  For brevity, we
often do not explicitly write $n\in N$ in such formulas; 
for example we usually write
$$
\lim_{n\to\infty} \quad\mbox{to abbreviate}\quad
\lim_{n\in N,\ n\to\infty} \ ;
$$
as another example, $(B,\nu)$-bounded expansions for a function
$f(k,n)$ only hold where $f$ is defined, which is typically for all 
$k\in\naturals$ but only $n\in N$.


\begin{theorem}\label{th_main_two_results}
Let $B$ be a connected graph with 
$\mu_1(B)>1$, and let 
$\{\cC_n(B)\}_{n\in N}$ be
an algebraic model over $B$.
Let $r>0$ be an integer and $\nu\ge\mu_1^{1/2}(B)$ be a real number.
Then 
\begin{equation}\label{eq_main_tech_result1}
f(k,n)\eqdef
\EE_{G\in\cC_n(B)}[ \II_{{\rm TangleFree}(\ge\nu,<r)}(G) \Trace(H^k_G) ]
\end{equation}
has a $(B,\nu)$-bounded expansion to order $r$,
$$
f(k,n)=c_0(k)+\cdots+c_{r-1}(k)/n^{r-1}+O(1)c_r(k)/n^r,
$$
where
\begin{equation}\label{eq_give_c_zero_k}
c_0(k)=\sum_{k'|k} \Trace(H_B^{k'}) 
\end{equation} 
where the sum is over all positive integers, $k'$, dividing $k$; hence
$$
c_0(k) = \Trace(H_B^k) + O(k) \mu_1^{k/2}(B);
$$
furthermore, the larger
bases of each $c_i(k)$ (with respect to $\mu_1^{1/2}(B)$)
is some subset of the eigenvalues
of the model.
Finally, for any $r'\in\naturals$ the function
\begin{equation}\label{eq_main_tech_result2}
\widetilde f(n)  \eqdef
\EE_{G\in\cC_n(B)}[ \II_{{\rm TangleFree}(\ge\nu,<r')}(G)]
=
\Prob_{\cC_n(B)}[{\rm TangleFree}(\ge\nu,<r') ]
\end{equation}
has an asymptotic expansion in $1/n$ to any order $r$,
$$
\widetilde c_0+\cdots+\widetilde c_{r-1}/n^{r-1}+O(1)/n^r ;
$$
where $\widetilde c_0=1$; furthermore, if $j_0$ is the 
smallest order of a $(\ge\nu)$-tangle occurring in $\cC_n(B)$,
then 
$c_j=0$ for $1\le j<j_0$ and
$c_j>0$ for $j=j_0$
(provided that $r\ge j_0+1$ so that
$\widetilde c_{j_0}$ is uniquely defined).
\end{theorem}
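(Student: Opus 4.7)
The plan is to prove the two parts of the theorem by reducing to the analysis of certified traces (carried out in Sections~\ref{se_cert_proof} and~\ref{se_ind_cert_proof}) combined with an inclusion-exclusion treatment of the tangle-free indicator using the finiteness of minimal $(\ge\nu,<r)$-tangles (Section~\ref{se_finiteness_min_tang}).

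For the first part, I would start from the identity $\Trace(H_G^k)=\snbc(G,k)$ and group walks by their visited subgraphs. Every SNBC walk $w$ has a pruned visited subgraph $\ViSu_\Bg^\og(w)$, which is an ordered $B$-graph of some homotopy type $T^\og$ and some edge-length vector $\mec k$ with $|\mec k|=k$. This gives
\[
\snbc(G,k) \;=\; \sum_{T^\og}\sum_{|\mec k|=k} \snbc(T^\og,\mec k;G,k),
\]
and the inner count is essentially proportional to $\#\bigl([S_\Bg^\og]\cap G\bigr)$ for $S_\Bg^\og=\VLG^\og(T^\og,W)$ summed over wordings $W$ of edge-length $\mec k$. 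I would split this sum into (a) walks whose visited subgraph has order $<r$, which contribute the main asymptotics, and (b) walks whose visited subgraph has order $\ge r$, which by \eqref{eq_algebraic_order_bound} contribute at most $g(k)/n^r$ after taking expectations.

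For part (a), the algebraic condition \eqref{eq_expansion_S} gives an order-$r$ expansion of $\EE[\#([S_\Bg^\og]\cap G)]$, and the polynomial coefficients $p_i(\mec a,\mec b)$ from the algebraic structure, together with the fact that edge-length vectors enumerated by $B$-types give regular languages, let us package the sum over $(T^\og,\mec k)$ into polyexponential functions of $k$ whose bases are among the eigenvalues of the model. This is the content of the certified-trace expansion theorem; the only genuinely new bookkeeping is checking that the leading coefficient $c_0(k)$ comes from homotopy types $T^\og$ that are cycles (i.e., $\ord(T)=0$). For a cycle of length $k'$ in $B$ traversed $k/k'$ times, $c_0=1$ by condition (3) of strongly algebraic, and summing over divisors $k'$ of $k$ and over closed non-backtracking walks of length $k'$ in $B$ gives $\sum_{k'\mid k}\Trace(H_B^{k'})$, with the excess over $\Trace(H_B^k)$ bounded by $O(k)\mu_1^{k/2}(B)$ coming from the subdominant divisors.

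To insert the tangle-free indicator, I would use the finiteness of minimal $(\ge\nu,<r)$-tangles to write
\[
1-\II_{{\rm TangleFree}(\ge\nu,<r)}=\II_{{\rm HasTangles}(\ge\nu,<r)}
\]
as a finite signed sum (via inclusion-exclusion over the finitely many minimal tangles $\tau$) of indicators $\II_{\tau\subset G}$. Multiplying through by $\Trace(H_G^k)$ yields, after expanding each $\II_{\tau\subset G}\snbc(G,k)$ again by homotopy type, a finite sum of certified-trace-like quantities, each of which admits an order-$r$ expansion by the same algebraic mechanism. This is where the main technical work lies: one must verify that attaching a fixed tangle $\tau$ to a homotopy type $T^\og$ preserves the regular-language structure needed for the algebraic expansion, and that the $O(1)$ error terms can be summed over the finite collection of minimal tangles uniformly in $k$ and $n$ in the range $k\le n^{1/2}/C$.

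For the second part, I would apply the same inclusion-exclusion to $\widetilde f(n)=\EE[\II_{{\rm TangleFree}(\ge\nu,<r')}(G)]$ alone (without a trace), reducing it to a finite signed sum of $\EE[\#([\tau^\og]\cap G)]/\#\Aut(\tau_\Bg)$-type quantities, each of which by \eqref{eq_expansion_S} has an asymptotic expansion starting at order $1/n^{\ord(\tau)}$. This immediately gives $\widetilde c_0=1$ and $\widetilde c_j=0$ for $1\le j<j_0$, while for $j=j_0$ each minimal tangle of order exactly $j_0$ contributes a strictly positive term (by the $c_i>0$ clause in condition~(2a) of strongly algebraic) with no cancellation from higher-order tangles, giving $\widetilde c_{j_0}>0$. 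The principal obstacle throughout is the uniform control of the $k$-dependence of the error terms through the inclusion-exclusion, which is why working with certified traces (whose certificates are finite in number per homotopy type) rather than selective traces is crucial.
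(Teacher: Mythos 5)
Your proposal gets the overall architecture right (certified traces, splitting by order, finiteness of minimal tangles) but there is a genuine gap in your treatment of the tangle-free indicator. You write that one can use ``inclusion-exclusion over the finitely many minimal tangles $\tau$'' to express $\II_{{\rm HasTangles}(\ge\nu,<r)}$ as a finite signed sum of indicators $\II_{\tau\subset G}$. This does not work: the standard inclusion-exclusion formula expands $\II[\exists i:\tau_i\subset G]$ into terms of the form $\II[\forall i\in I:\tau_i\subset G]$, and the event ``$G$ contains a copy of $\tau_1$ and a copy of $\tau_2$'' is \emph{not} the event that $G$ contains a fixed graph --- one must enumerate all the ways the copies of $\tau_1$ and $\tau_2$ can overlap inside $G$. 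That enumeration is where the work is, and it is why the paper introduces $\Psi^+$ (the closure of the generators under unions of copies) and a M\"obius function $\mu_{\Psi^+}$ on the resulting poset of isomorphism classes of $B$-graphs. The truncated indicator $I_r(\Psi,G_\Bg)=\sum_{[S_\Bg]\in\Psi^+_{<r}}N(S_\Bg,G_\Bg)\mu_{\Psi^+}[S_\Bg]$ in \eqref{eq_approximate_indicator} then agrees with $\II_{{\rm Meets}(\Psi)}(G_\Bg)$ whenever $\ord_\Psi(G_\Bg)<r$, with the error when $\ord_\Psi(G_\Bg)\ge r$ controlled by subgraph counts of order in $[r,r+s)$ (Theorem~\ref{th_mobius_inversion_truncation}). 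Nothing in your proposal produces this infinite-but-locally-finite poset structure, and without it the cross terms in your inclusion-exclusion are not of a form that \eqref{eq_expansion_S} can handle.

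A related consequence is that your claimed positivity of $\widetilde c_{j_0}$ is not fully justified. You argue that each minimal tangle of order $j_0$ contributes a positive term ``with no cancellation from higher-order tangles,'' but cancellation is exactly what the signed sum in inclusion-exclusion produces; the paper instead observes that the M\"obius value on a \emph{minimal} element of $\Psi^+$ is $1/\#{\rm Aut}(\psi_\Bg)>0$, and that elements of order exactly $j_0$ that occur in the model are necessarily minimal in the poset (via Lemma~\ref{le_pruned_inclusions}), so the $1/n^{j_0}$-coefficient is a sum of positive contributions with no sign alternation. Finally, a smaller omission: your first-part sketch does not mention why the condition $\mu_1(\VLG(T^\og,\mec k))<\nu$, ranging over $\mec k$, is tractable; the paper's mechanism is that this defines an upper set in $\naturals^{E_T}$, which by Lemma~\ref{le_finite_upper_mins} (Hilbert basis / Dickson's lemma) has finitely many minimal elements, giving finitely many certificates $\bec\xi^j$ to which Theorem~\ref{th_main_certified_pairs} is applied via \eqref{eq_inclusion_exclusion_M}. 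The ``regular language'' structure you invoke lives in the definition of algebraic and is used inside Article~II; it does not by itself handle the $\mu_1$-constraint.
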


Notice that a model may have---at least in principle---an infinite number 
of eigenvalues, which
means that for each $r,\nu$, the number of bases of the
$c_i(k)$ may be unbounded as $i\to\infty$; 
however there are a few remarks to consider:
\begin{enumerate}
\item taking $\nu=\mu_1^{1/2}(B)$, for each $r$, the $c_i(k)$ with $i<r$
have a finite number of bases;
\item since for any fixed $k$ we have
$$
c_i(k)  = \lim_{n\to\infty} n^i \bigl( f(k,n) - c_0(k) - \cdots - 
n^{1-i}c_{i-1}(k) \bigr) ,
$$
the $c_i(k)$ is uniquely defined and
independent of $r>i$; hence a fixed $c_i(k)$ has a finite
number of larger (than $\mu_1^{1/2}(B)$) bases;
\item
in all our basic models, the eigenvalues of the model can be taken to be
the 
$\mu_j(B)$ and possibly the eigenvalue $1$; hence all larger bases
of any $c_i(k)$ lie in this finite set of eigenvalues.
\end{enumerate}

The other result that will be needed in later articles, namely Article~V,
is proven similarly to our proof of an expansion for the function
in \eqref{eq_main_tech_result2}.

\begin{theorem}\label{th_extra_needed}
Let $\cC_n(B)$ be an algebraic model over a graph, $B$, and
let $S_\Bg$ be a connected, pruned graph of positive order
that occurs in this model (recall that this
means that for some $n$ and some $G\in\cC_n(B)$, $G_\Bg$ has a subgraph
isomorphic to $S_\Bg$).  Then for some constant, $C'$, and
$n$ sufficiently large (and $n\in N$),
$$
\Prob_{G\in\cC_n(B)}\Bigl[ [S_\Bg]\cap G\ne\emptyset
\Bigr] \ge
C' n^{-\ord(S_\Bg)}.
$$
\end{theorem}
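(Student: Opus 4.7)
The plan is to apply the second-moment method to the random variable
$X = \#\bigl([S_\Bg^\og]\cap G\bigr)$, the number of injective
$B$-graph morphisms $S_\Bg\to G_\Bg$, where $S_\Bg$ is endowed with
an arbitrary ordering. Since $S_\Bg$ occurs in $\cC_n(B)$, the
expansion \eqref{eq_expansion_S} gives
$\EE X = c_{\ord(S)}\,n^{-\ord(S)} + O(n^{-\ord(S)-1})$ with
$c_{\ord(S)}>0$, and clearly $\Prob_{\cC_n(B)}[[S_\Bg]\cap G\ne\emptyset]
= \Prob_{\cC_n(B)}[X\ge 1]$. Cauchy--Schwarz applied to
$\EE X = \EE[X\,\II_{X\ge 1}]$ yields
$\Prob[X\ge 1]\ge (\EE X)^2/\EE[X^2]$, so writing $s=\ord(S)$ it
suffices to establish $\EE[X^2]=O(n^{-s})$.

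To control the second moment, I would expand $\EE[X(X-1)]$ as a sum
over isomorphism classes of $B$-graphs: group each ordered pair of
distinct injective morphisms $(\phi_1,\phi_2)$ according to the
isomorphism class of the union $T_\Bg =
\phi_1(S_\Bg)\cup\phi_2(S_\Bg)\subset G_\Bg$. Each such $T$ has at
most $2\#\Edir_S$ directed edges and $2\#V_S$ vertices, so only
finitely many $T$-types arise. For each fixed $T_\Bg$, the expected
number of injective morphisms $T_\Bg\to G_\Bg$ is $O(n^{-\ord(T)})$
by a second application of \eqref{eq_expansion_S}, and the number of
ways to express an embedded copy of $T$ as an ordered union of two
copies of $S_\Bg$ is bounded by a constant depending only on $S$.
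Hence $\EE[X(X-1)]\le\sum_T C(T)\,n^{-\ord(T)}$ with a finite sum.

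The key geometric input is the following lemma: if $S$ is connected
and pruned with $\ord(S)\ge 1$, then every proper subgraph
$S'\subsetneq S$ satisfies $\ord(S')<\ord(S)$. To prove it, set
$v=\#V_S-\#V_{S'}$ and $e=\#E_S-\#E_{S'}$; every vertex
$u\in V_S\setminus V_{S'}$ has $\deg_S(u)\ge 2$ by the pruned
hypothesis, and every edge of $S$ incident to such a $u$ is
necessarily absent from $S'$. This gives
$2v \le \sum_{u\notin V_{S'}} \deg_S(u) \le 2e$, so
$\ord(S)-\ord(S') = e-v \ge 0$. The equality case forces every
vertex in $V_S\setminus V_{S'}$ to have degree exactly $2$ in $S$
and every absent edge to have both endpoints there; combined with
the connectedness of $S$ this forces $S$ to be a disjoint union of
cycles, contradicting $\ord(S)\ge 1$.

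To conclude: when $\phi_1(S_\Bg)\ne\phi_2(S_\Bg)$ as sub-$B$-graphs,
the intersection is a proper subgraph of $S$, so the lemma gives
$\ord(\phi_1(S)\cap\phi_2(S))\le s-1$ and therefore
$\ord(T) = 2s - \ord(\phi_1(S)\cap\phi_2(S))\ge s+1$; these pairs
contribute $O(n^{-(s+1)})$ in total. The remaining pairs, where
$\phi_1$ and $\phi_2$ have identical image but differ by a
nontrivial automorphism of $S_\Bg$, contribute
$\bigl(\#\Aut(S_\Bg)-1\bigr)\EE X = O(n^{-s})$. Therefore
$\EE[X^2] = \EE X + \EE[X(X-1)] = O(n^{-s})$, and the second-moment
inequality gives $\Prob[X\ge 1]\ge C' n^{-s}$ for some $C'>0$ and
all $n$ sufficiently large. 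The main obstacle is the pruned-subgraph
lemma and its equality analysis; everything else reduces to bookkeeping
with the expansion \eqref{eq_expansion_S}.
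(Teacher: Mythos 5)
Your proof is correct and takes a genuinely different route from the paper. The paper obtains Theorem~\ref{th_extra_needed} as an almost immediate corollary of Theorem~\ref{th_exp_ind}, applied to $\cT=[S_\Bg]$: that theorem produces the full asymptotic expansion of $\Prob_{\cC_n(B)}\bigl[[S_\Bg]\cap G\ne\emptyset\bigr]$ in powers of $1/n$, with vanishing of $c_i$ for $i<\ord(S)$ and positivity of $c_{\ord(S)}$, and the lower bound falls out by truncating. That route relies on the full truncated-indicator/M\"obius-inversion apparatus of Section~\ref{se_indicator} (in particular Lemma~\ref{le_finiteness_and_positivity}, Lemma~\ref{le_morphisms_factor}, and Theorem~\ref{th_mobius_inversion_truncation}). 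Your argument instead goes through the second-moment (Paley--Zygmund) inequality $\Prob[X\ge 1]\ge(\EE X)^2/\EE[X^2]$ and invokes only the raw algebraicity hypothesis \eqref{eq_expansion_S} together with \eqref{eq_zero_S_in_G}: you decompose $\EE[X(X-1)]$ by the isomorphism class of the union $T_\Bg=\phi_1(S_\Bg)\cup\phi_2(S_\Bg)$ (a finite collection of pruned $B$-graphs of bounded size), observe $\ord(T)=2\ord(S)-\ord(\phi_1(S)\cap\phi_2(S))$ by inclusion--exclusion, and split off the same-image pairs which contribute $(\#\Aut(S_\Bg)-1)\EE X$. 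This is more elementary and self-contained; what it gives up is the full asymptotic expansion (and hence the matching upper bound), which the paper needs elsewhere but the present statement does not. Your geometric lemma---that a proper subgraph of a connected pruned graph of positive order has strictly smaller order---is the same fact the paper uses as Lemma~\ref{le_pruned_inclusions} (via Lemma~\ref{le_min_tang_ord_pos} when stating finite positive generation), but your double-counting proof via $2v\le\sum_{u\notin V_{S'}}\deg_S(u)\le 2e$ is cleaner in that it does not require the subgraph $S'$ to be pruned; the equality analysis and appeal to connectedness are handled correctly. All the details check out: $T$ is pruned as a union of pruned subgraphs, $\#\Edir_T$ is bounded independently of $n$ so \eqref{eq_expansion_S} is applicable for $n$ large, and non-occurring $T$ contribute zero by \eqref{eq_zero_S_in_G}.
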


\section{Theorems From Article~II}
\label{se_art_II_theorems}

For ease of reading, let us recall the main theorems of Article~II,
which we will use here.

\begin{theorem}\label{th_main_certified_pairs}
Let $\cC_n(B)$ be an algebraic model over a graph $B$.
Let $T^\og$ be an ordered graph, let $\bec\xi\from E_T\to\naturals$ be
a function, and let
$$
\nu = \max\Bigl( \mu_1^{1/2}(B), \mu_1\bigl(\VLG(T,\bec\cert)\bigr) \Bigr).
$$
Let $\psi_\Bg^\og$ be any pruned, ordered $B$-graph.
Then for any
$r\ge 1$ we have
\begin{equation}\label{eq_subgraphs_times_walks}
f(k,n)=
\EE_{G\in\cC_n(B)}[ 
(\#[\psi_\Bg^\og]\cap G)
\snbc(T^\og;\ge\bec\xi,G,k) ]
\end{equation} 
has a $(B,\nu)$-bounded expansion 
$$
c_0(k)+\cdots+c_{r-1}(k)/n^{r-1}+ O(1) c_r(k)/n^r,
$$
to order $r$; the bases of
the coefficients in the expansion are some subset of the
eigenvalues of the model, and $c_i(k)=0$ for $i$ less than the
order of all $B$-graphs that contain both a walk of 
homotopy type $T^\og$
and a subgraph isomorphic to $\psi_\Bg$.
\end{theorem}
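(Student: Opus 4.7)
The plan is a three-step reduction: decompose $f(k,n)$ combinatorially into expectations of subgraph counts that algebraicity can handle, apply the (strong) algebraic expansion to each, and aggregate the pieces to verify the claimed $(B,\nu)$-bounded form.

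First, I would rewrite
\begin{equation*}
\snbc(T^\og,\ge\bec\xi;G,k)=\sum_{\mec k\ge\bec\xi}\sum_{W}m(T^\og,\mec k,k)\,\#\bigl([\VLG^\og(T^\og,W)]\cap G\bigr),
\end{equation*}
where the inner sum runs over $B$-wordings $W$ of $T$ of edge-lengths $\mec k$, and $m(T^\og,\mec k,k)$ counts abstract SNBC walks of length $k$ in $\VLG(T,\mec k)$ whose visited subgraph is all of $\VLG(T,\mec k)$ and whose homotopy type is $T^\og$. The product of the two subgraph counts $(\#[\psi_\Bg^\og]\cap G)\cdot\#[\VLG^\og(T^\og,W)]\cap G$ then rewrites as a sum, over isomorphism classes of ordered $B$-graphs $S_\Bg^\og$ that arise as ``gluings'' of $\psi_\Bg^\og$ and $\VLG^\og(T^\og,W)$, of combinatorial weights times $\#[S_\Bg^\og]\cap G$. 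Taking expectations turns $f(k,n)$ into a triple sum over $(\mec k,W,S_\Bg^\og)$ of weights times $\EE_G[\#[S_\Bg^\og]\cap G]$.

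Second, I would invoke algebraicity of $\cC_n(B)$: the $B$-wordings of $T$ (and of each gluing shape) partition into finitely many $B$-types on which the restricted model is strongly algebraic. On each such type,
\begin{equation*}
\EE_G\bigl[\#[S_\Bg^\og]\cap G\bigr]=c_0(S_\Bg)+\cdots+c_{r-1}(S_\Bg)/n^{r-1}+O\bigl(g(\#E_S)/n^r\bigr),
\end{equation*}
with $c_i(S_\Bg)=0$ for $i<\ord(S)$ and $c_{\ord(S)+j}(S_\Bg)$ a polynomial $p_j(\mec a_{S_\Bg},\mec b_{S_\Bg})$ in the fibre-counting data. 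Substituting and collecting powers of $1/n$ produces an order-$r$ expansion whose $i$-th coefficient $c_i(k)$ is a triple sum over $(\mec k,W,S_\Bg^\og)$. The vanishing claim $c_i(k)=0$ for $i$ less than the minimum order among $B$-graphs that jointly contain a walk of homotopy type $T^\og$ and a copy of $\psi_\Bg$ is then immediate from $c_i(S_\Bg)=0$ for $i<\ord(S_\Bg)$.

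The hardest step is verifying that each aggregated $c_i(k)$ is $(B,\nu)$-bounded in $k$ with bases among the model's eigenvalues, and that the remainder is uniformly $O(1)c_r(k)/n^r$ on $k\le n^{1/2}/C$. For a single gluing topology and $B$-type, $p_j(\mec a,\mec b)$ is polynomial in $\mec k$, while the number of $B$-wordings of edge-lengths $\mec k$ is polyexponential in $|\mec k|$ with bases among the eigenvalues of the corresponding regular languages (hence of the model). The factor $m(T^\og,\mec k,k)$ counts SNBC walks of length $k$ in $\VLG(T,\mec k)$ using each edge, and the monotonicity $\mu_1(\VLG(T,\mec k))\le\mu_1(\VLG(T,\bec\xi))$ for $\mec k\ge\bec\xi$ (subdivision cannot increase the Perron--Frobenius eigenvalue of the Hashimoto matrix) bounds it by a polynomial in $k$ times $\mu_1(\VLG(T,\bec\xi))^k$. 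Assembling these growth estimates via a transfer-matrix/generating-function argument on the finitely many $B$-types preserves the polyexponential structure, and every base appearing in $c_i(k)$ is either bounded by $\mu_1(B)$ (from the $B$-type eigenvalues) or by $\mu_1(\VLG(T,\bec\xi))$, yielding the claimed $(B,\nu)$-bounded expansion. The main obstacle is the uniformity over the infinite family of gluings $S_\Bg^\og$ as $\mec k$ varies: one must show that the combinatorial weights assemble into polyexponentials in $k$, rather than settling for a term-by-term bound that would destroy the $(B,\nu)$-bounded structure.
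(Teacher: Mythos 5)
This is a theorem the paper \emph{imports} from Article~II of the series; it is stated in Section~\ref{se_art_II_theorems} as a prerequisite, and no proof is given in the present paper. So there is no ``paper's own proof'' here to compare against, and I can only assess your sketch on its internal merits.

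Your decomposition into (i) a sum over edge-lengths $\mec k\ge\bec\xi$ and $B$-wordings $W$, (ii) a rewriting of the product $(\#[\psi_\Bg^\og]\cap G)\cdot\#([\VLG^\og(T^\og,W)]\cap G)$ as a sum over unions/gluings, and (iii) an application of the algebraic-model expansion to each union, is the right shape of argument and is almost certainly parallel to what Article~II does. The vanishing of $c_i(k)$ for $i$ below the minimum order of a joint container also follows cleanly from $c_i(S_\Bg)=0$ for $i<\ord(S)$, as you say.

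However, there is a genuine gap, and you have in effect named it yourself in the last sentence and then left it unresolved: the aggregation of the infinite family of gluings $S_\Bg^\og$ into coefficients $c_i(k)$ that are $(B,\nu)$-bounded with bases among the model's eigenvalues. The term-by-term bound $m(T^\og,\mec k,k)\le (\#\Edir)\,\mu_1(\VLG(T,\bec\xi))^k$ by Lemma~\ref{le_vlg_compare}, combined with a polynomial-in-$k$ count of $\mec k$'s of a given total, gives only a \emph{growth} bound, which is not the same as the required polyexponential structure with controlled bases (it also gives no handle on the $1\le k\le n^{1/2}/C$ uniformity of the error term). Making ``the combinatorial weights assemble into polyexponentials'' precise is the entire content of the theorem. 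Concretely, this requires (a) organizing the gluings by a finite list of \emph{homotopy types of pairs} $T'$ — pruned, possibly disconnected graphs with many internal beads where $\psi$ attaches, which is exactly why the definition of \emph{algebraic} imposes its $B$-type condition on all pruned $T$, not just homotopy types of single walks — and (b) running a generating-function/transfer-matrix argument on those finitely many types, with the walk-multiplicity factor $m(T^\og,\mec k,k)$ and the $B$-type wording counts entering as weights. Your sketch gestures at (b) but does not set up (a), so the key uniformity that the theorem asserts is not actually established; as written, the proposal restates the difficulty rather than overcoming it.
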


We note that the conclusions of this theorem also hold for the function
\begin{equation}\label{eq_psi_empty}
f(k,n)=\EE_{G\in\cC_n(B)}[
\snbc(T^\og;\ge\bec\xi,G,k) ] 
\end{equation}
(for the first part of this paper, dealing with expansions for the
expected certified traces, we use only need this particular
$f(k,n)$).
The reason is, as we now explain, in that the special case where
$\psi_\Bg^\og=\emptyset_\Bg^\og$ is the empty graph,
\eqref{eq_subgraphs_times_walks} reduces to
\eqref{eq_psi_empty};
the reader who dislikes the empty graph is free to simply
view the above theorem as also applying to \eqref{eq_psi_empty}
(this special case is stated both in Articles~I and~II), and with
$c_i(k)=0$ if $i<\ord(T)$.
The empty graph refers to the graph whose vertex and directed edge
sets are both the empty set, $\emptyset$; since there is a unique
map from $\emptyset$ to any set, there are unique heads, tails,
and edge involution, and a unique $B$-structure and ordering
for this graph.  Hence $[\emptyset_\Bg^\og]$ consists of this
single graph, and the $B$-graph $\emptyset_\Bg$ is a subgraph
of any $B$-graph; for this reason
$[\emptyset_\Bg^\og]\cap G_\Bg$ equals $\emptyset_\Bg^\og$ for
any $B$-graph, $G$, and hence $\#[\emptyset_\Bg^\og]\cap G_\Bg=1$
for all $G_\Bg$;
hence \eqref{eq_subgraphs_times_walks} reduces to
\eqref{eq_psi_empty} in this case.
Moreover, any $B$-graph contains the empty $B$-graph, and so
the condition on $i$ to have $c_i(k)=0$ amounts to $i$ being
less than the order of any $B$-graph of homotopy type $T^\og$,
which implies $i<\ord(T)$
(and is equivalent to $i<\ord(T)$ assuming at least one 
ordered $B$-graph of homotopy type $T^\og$ occurs in $\{\cC_n(B)\}$).

\section{Certified Traces and Theorem~\ref{th_main_two_results}}
\label{se_certified}

In this section we describe our approach to proving the main part of 
Theorem~\ref{th_main_two_results},
which is the existence of asymptotic expansions for
\eqref{eq_main_tech_result1} and facts about the coefficients $c_i=c_i(k)$;
the existence of asymptotic expansions for
\eqref{eq_main_tech_result2} follows easily from the facts we develop
for \eqref{eq_main_tech_result1}.

\subsection{Motivation for Modified Traces}

If $B$ is a bouquet of $d/2$ whole-loops (so $d$ is even), and
$\cC_n(B)$ is the permutation model, then \cite{friedman_alon} proves that
$$
f(k,n) \eqdef
\EE_{G\in\cC_n(B)}[\snbc(G,k)]
=\EE_{G\in\cC_n(B)}[\trace(H_G^k)]
$$ 
fails to have a 
$(B,\nu)$-Ramanujan expansion to all orders; the reason
is mainly due to the existence of
$(\ge\nu,<r)$-tangles that occur as subgraphs of graphs
in $\cC_n(B)$, where $\nu>\sqrt{d-1}$ and $r$ is of order $d^{1/2}$;
see the proof of Theorem~2.12 of
\cite{friedman_alon}.
Our remedy, as in \cite{friedman_alon},
will be to replace $\snbc(G,k)$ by a ``modification''
or ``regularization'' of this count, by counting
elements of $\SNBC(G,k)$ that satisfy a restrictive condition.
In \cite{friedman_alon}, these modified versions of $\snbc(G,k)$
were called {\em selective traces}; in this series of articles we use
the simpler {\em certified traces}.

\subsection{Definition of Certified Traces}

\begin{definition}\label{de_certified_traces}
Let $\nu>1$ be a real number, $r,k\in\integers_{\ge 0}$, and $G$ be a graph.
We define the set
of {\em $(<\nu,<r)$ (strictly) certified walks}
(respectively, {\em $(\le \nu,< r)$ (weakly) certified}),
denoted ${\rm CERT}_{<\nu,<r}(G,k)$ (respectively ${\rm CERT}_{\le\nu,<r}(G,k)$)
to be the set of SNBC walks in $G$
of length $k$ whose visited subgraph $S$ satisfies
$\mu_1(S)<\nu$ (respectively, $\mu_1(S)\le\nu$) and $\ord(S)<r$.
We define the {\em $(<\nu,<r)$ (weakly) certified trace}
(respectively, {\em $(\nu,<r)$ (strictly) certified trace}) 
{\em of $G$ of length $k$},
denoted ${\rm cert}_{<\nu,<r}(G,k)$
(respectively ${\rm cert}_{\le\nu,<r}(G,k)$)
to be the cardinality of ${\rm CERT}_{<\nu,<r}(G,k)$
(respectively, ${\rm CERT}_{\le\nu,<r}(G,k)$).
\end{definition}

The fundamental fact about certified traces is that 
\begin{equation}\label{eq_certified_versus_snbc}
G\in{\rm TangleFree}(\ge\nu,<r)\ \ \implies\quad
{\rm cert}_{<\nu',<r}(G,k) 
={\rm cert}_{\le\nu',<r}(G,k) 
= \snbc_{<r}(G,k)
\end{equation}
for any $r$ and $\nu'\ge \nu$.  
For this reason, the certified traces are sort of
``regularized'' SNBC count, that agrees with $\snbc_{<r}(G,k)$ for
$G$ without $\nu$-tangles of small order, but remains well controlled
for $G$ with such tangles.
The {\em selective traces} of \cite{friedman_alon} are another collection
of ``regularized traces,'' but are more cumbersome to define and utilize.
In this article we work with
strongly-certified traces ${\rm cert}_{<\nu,<r}(G,k)$,
although one could equally well work with
weakly-certified traces ${\rm cert}_{\le\nu,<r}(G,k)$.
All that our trace methods require is that we apply
\eqref{eq_certified_versus_snbc} with
$r\to\infty$ and $(d-1)^{1/2}<\nu\le\nu'\le (d-1)^{1/2}+\epsilon$
with $\epsilon\to 0$, and that we work with either
${\rm cert}_{<\nu',<r}(G,k)$ or
${\rm cert}_{\le\nu',<r}(G,k)$ there;
we prefer to take $\nu'=\nu$ for simplicity.
By contrast, we must work with 
${\rm TangleFree}(\ge\nu,<r)$,
rather than the analogously defined ${\rm TangleFree}(>\nu,<r)$, since
we need the number of (isomorphism classes of)
{\em minimal $(\ge\nu,<r)$-tangles} to be 
finite
(see the remarks concerning Lemma~\ref{le_finite_min_tangles}
in Section~\ref{se_finish_proof}).

\subsection{Statement of the Expansion Theorems for Certified Traces}

The main theorem in this paper is proven using the following two results.

\begin{theorem}\label{th_main_tech_result_for_cert}
Let $B$ be a connected graph with $\chi(B)<0$, and let 
$\{\cC_n(B)\}_{n\in N}$ be
an algebraic model over $B$.
Let $r'\in\naturals$ and $\nu\ge\mu_1^{1/2}(B)$ be a real number.
Then
\begin{equation}\label{eq_cert_exp_thm}
f(k,n)\eqdef
\EE_{G\in\cC_n(B)}[ {\rm cert}_{<\nu,<r'}(G,k) ]
\end{equation}
has a $(B,\nu)$-bounded asymptotic expansion to any order $r\in\naturals$,
$$
f(k,n)=c_0(k)+\cdots+c_{r-1}(k)/n^{r-1}+O(1)c_r(k)/n^r,
$$
where for some function, $h=h(k)$, of growth $(d-1)^{1/2}$ we have
\begin{equation}\label{eq_c_0}
c_0(k)=\sum_{k'|k} \Trace(H_B^{k'})  - h(k)
\end{equation} 
(the sum being over all positive integers, $k'$, dividing $k$),
and where the larger
bases of each $c_i(k)$ (with respect to $\mu_1^{1/2}(B)$)
is some subset of any set of eigenvalues
of the model.
Also, the function $h(k)$ in \eqref{eq_c_0} is precisely
the function described in condition~(1) of the definition
of {\em algebraic model}.
The same theorem holds if the (strictly-)certified trace in
\eqref{eq_cert_exp_thm} is replaced with the weakly-certified trace
${\rm cert}_{\le\nu,<r}(G,k)$.
\end{theorem}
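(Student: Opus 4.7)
The plan is to stratify the SNBC walks counted by ${\rm cert}_{<\nu,<r'}(G,k)$ by the homotopy type of their visited subgraph, and then invoke Theorem~\ref{th_main_certified_pairs} of Article~II for each stratum. A walk contributes to the certified trace iff its visited subgraph $S$ is pruned, has order less than $r'$, and satisfies $\mu_1(S)<\nu$; writing $S$ as $\VLG(T,\mec k)$ for its reduction $T^\og$ and edge lengths $\mec k$, this is a condition on the pair $(T^\og,\mec k)$.

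First, I would write
$$
{\rm cert}_{<\nu,<r'}(G,k) \;=\; \sum_{T^\og}\ \sum_{\mec k\in X_T(\nu)}\snbc(T^\og,\mec k;G,k),
$$
where the outer sum runs over isomorphism classes of ordered pruned graphs $T^\og$ of order less than $r'$ that can arise as the reduction of some SNBC walk, and
$$
X_T(\nu)\ \eqdef\ \bigl\{\mec k\from E_T\to\naturals \ :\ \mu_1(\VLG(T,\mec k))<\nu\bigr\}.
$$
The outer sum has only finitely many terms, because after bead suppression a pruned reduced graph has all non-exceptional vertices of degree $\ge 3$, so order less than $r'$ bounds its number of edges and vertices, leaving only finitely many ordered isomorphism classes.

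Next, I would use the standard fact that subdividing an edge cannot increase the Perron eigenvalue of the Hashimoto matrix; hence $\mu_1(\VLG(T,\mec k))$ is componentwise nonincreasing in $\mec k$, so $X_T(\nu)$ is upward-closed in $\naturals^{E_T}$. Dickson's lemma then gives a finite set of minimal elements $\bec\xi_1,\dots,\bec\xi_m\in X_T(\nu)$, and inclusion-exclusion yields
$$
\sum_{\mec k\in X_T(\nu)}\snbc(T^\og,\mec k;G,k) \;=\; \sum_{\emptyset\ne I\subseteq[m]}(-1)^{|I|+1}\snbc(T^\og,\ge\bec\xi_I;G,k),
$$
where $\bec\xi_I\eqdef\max_{i\in I}\bec\xi_i$ componentwise. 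Since $\bec\xi_I\ge\bec\xi_i$ for any $i\in I$, monotonicity gives $\mu_1(\VLG(T,\bec\xi_I))<\nu$, so $\max(\mu_1^{1/2}(B),\mu_1(\VLG(T,\bec\xi_I)))\le\nu$. Applying Theorem~\ref{th_main_certified_pairs} with $\psi_\Bg^\og=\emptyset_\Bg^\og$ then gives a $(B,\nu)$-bounded expansion to order $r$ for each expectation $\EE_{G\in\cC_n(B)}[\snbc(T^\og,\ge\bec\xi_I;G,k)]$, whose larger bases lie in any prescribed set of eigenvalues of the model; taking the finite signed sum over $T^\og$ and $I$ produces the required expansion for $f(k,n)$.

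For the leading coefficient $c_0(k)$: Theorem~\ref{th_main_certified_pairs} forces $c_i(k)=0$ for $i<\ord(T)$, so only $T^\og$ with $\ord(T)=0$ (i.e.\ $T$ a cycle) contribute at order $n^0$. For each such $T$, condition~(3) of the strongly-algebraic definition (which algebraic models inherit) gives a contribution of exactly $1$ per occurring cyclic $B$-wording. Summing over divisors $\ell\mid k$ and over closed non-backtracking walks of length $\ell$ in $B$ produces the naive count $\sum_{k'\mid k}\Trace(H_B^{k'})$, and subtracting the contribution from cyclic $B$-wordings that do \emph{not} occur in $\cC_n(B)$ yields precisely $-h(k)$, by condition~(1) of the definition of algebraic. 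The weakly-certified case is identical, with $X_T(\nu)$ replaced by $\{\mec k:\mu_1(\VLG(T,\mec k))\le\nu\}$, which is still upward-closed. The main obstacles are the careful enumeration of ordered homotopy types in the presence of possibly unsuppressed bead start/end vertices, and producing a clean inclusion-exclusion over $X_T(\nu)$ that uses only finitely many applications of Theorem~\ref{th_main_certified_pairs}; neither is deep, but both must be executed precisely.
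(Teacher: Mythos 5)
Your proof follows the paper's argument essentially step for step: decompose the certified trace by homotopy type $T^\og$, observe via Lemma~\ref{le_vlg_compare} that the admissible edge-length vectors form an upper set in $\naturals^{E_T}$, invoke finiteness of minimal elements (the paper's Lemma~\ref{le_finite_upper_mins}, proved via Hilbert's basis theorem though Dickson's lemma works equally well), apply inclusion-exclusion, and feed each term into Theorem~\ref{th_main_certified_pairs} with $\psi_\Bg^\og$ the empty $B$-graph. The computation of $c_0(k)$ is also the same in substance; the paper routes it through Corollary~\ref{co_zeroth_order_coefs}, while you invoke conditions~(1) and~(3) of the algebraic-model definition directly, which if anything makes the origin of the $-h(k)$ term slightly more transparent.
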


Let ${\rm HasTangles}(\ge\nu,<r)$ denote the set of graphs, $G$, that
contain a $(\ge\nu,<r)$-tangle (as a subgraph); then 
${\rm HasTangles}(\ge\nu,<r)$ is the complement of 
${\rm TangleFree}(\ge\nu,<r)$, and so
$$
\II_{{\rm HasTangles}(\ge\nu,<r)}(G)
=
1- \II_{{\rm TangleFree}(\ge\nu,<r)}(G) \ .
$$

\begin{theorem}\label{th_main_tech_result_for_has_tangles}
Let $B$ be a connected graph with $\chi(B)<0$, and let $\{\cC_n(B)\}_{n\in N}$ 
be
an algebraic model over $B$.
Let $r,r',r''>0$ be integers and $\nu>1$ be a real number.
Then
\begin{equation}\label{eq_cert_has_tangles_exp_thm}
f(k,n)\eqdef
\EE_{G\in\cC_n(B)}[ 
\II_{{\rm HasTangles}(\ge\nu,<r'')}(G)
{\rm cert}_{<\nu,<r'}(G,k) ]
\end{equation}
has a $(B,\nu)$-bounded asymptotic expansion to order $r$, 
$$
f(k,n)=c_0(k)+\cdots+c_{r-1}(k)/n^{r-1}+O(1)c_r(k)/n^r,
$$
such that the
bases of the $c_i(k)$ are the eigenvalues
of the model; moreover,
$c_i$ vanishes if $i$ is less than the smallest
order of a $(\ge\nu,<r'')$-tangle that occurs in $\cC_n(B)$
provided that $i<r$
(i.e., occurs with positive probability in $\cC_n(B)$ for some $n$,
and hence for every $n$ sufficiently large).
In particular, $c_0(k)=0$ since $\nu>1$.
The same theorem holds if the strictly certified trace in
\eqref{eq_cert_exp_thm} is replaced with the weakly-certified trace
${\rm cert}_{\le\nu,<r'}(G,k)$.
\end{theorem}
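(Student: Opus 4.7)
The plan is to reduce Theorem~\ref{th_main_tech_result_for_has_tangles} to Theorem~\ref{th_main_certified_pairs} by (i) decomposing the certified trace as a finite sum of $\snbc$-terms indexed by homotopy type, and (ii) approximating $\II_{{\rm HasTangles}(\ge\nu,<r'')}$ by a polynomial in subgraph counts $\#[\psi_\Bg^\og]\cap G$ ranging over a finite list of minimal tangles.

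First I would unfold the certified trace. Any walk in ${\rm CERT}_{<\nu,<r'}(G,k)$ has visited subgraph $S$ with $\mu_1(S)<\nu$ and $\ord(S)<r'$. Reducing $S$ by bead suppression gives an ordered homotopy type $T^\og$ whose underlying graph is pruned with vertices of degree $\ge 3$ (or half-loops), so $|V_T|,|E_T|\le C(r')$ and there are only finitely many such $T^\og$ up to isomorphism. Inside a fixed $T^\og$, walks are indexed by their edge-length vector $\mec k\from E_T\to\naturals$ with $\sum_e k(e)$ at most $k$ times a constant; the condition $\mu_1(\VLG(T,\mec k))<\nu$ is automatic for $\mec k$ sufficiently large (by monotonicity in the edge lengths, or by reducing to a finite check via the spectral-radius of a variable-length graph argument used in Section~\ref{se_finiteness_min_tang}). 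Thus
$$
{\rm cert}_{<\nu,<r'}(G,k) = \sum_{T^\og}\sum_{\bec\cert}
\pm\,\snbc(T^\og;\ge\bec\cert;G,k),
$$
an inclusion-exclusion combination over the finite family of $T^\og$ and a finite set of base-certificates $\bec\cert$ capturing the $(\mu_1<\nu)$-constraint.

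Second, I would apply Lemma~\ref{le_finite_min_tangles} to obtain the finite list $\psi_1,\ldots,\psi_m$ of minimal $(\ge\nu,<r'')$-tangles (as isomorphism classes of graphs); over each $\psi_j$, the set of $B$-graph refinements $\psi_{j,\Bg}$ that occur in $\{\cC_n(B)\}$ is also finite. A graph $G$ has a tangle iff $\sum_{j,\Bg}\#[\psi_{j,\Bg}^\og]\cap G>0$. Applying the standard inclusion-exclusion truncated at depth $s$, as set up in Section~\ref{se_indicator}, writes
$$
\II_{{\rm HasTangles}(\ge\nu,<r'')}(G)
= I_s(G) + R_s(G),
$$
where $I_s$ is a polynomial of degree $s$ in the counts $\#[\psi_{j,\Bg}^\og]\cap G$ and $R_s$ is bounded in absolute value by a polynomial of degree $s+1$ in these same counts. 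Expanding products $\prod_\ell \#[\psi_{j_\ell,\Bg_\ell}^\og]\cap G$ as a sum of $\#[\widetilde\psi_\Bg^\og]\cap G$ by summing over all ways of identifying vertices/edges across copies (the standard ``tuple-to-subgraph'' identity), the whole polynomial $I_s$ becomes a finite linear combination $\sum_\alpha \lambda_\alpha\,\#[\widetilde\psi_{\alpha,\Bg}^\og]\cap G$ with $\ord(\widetilde\psi_\alpha)\ge\min_j\ord(\psi_j)=:j_0$.

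Then for each pair $(\widetilde\psi_{\alpha,\Bg}^\og,T^\og,\bec\cert)$, Theorem~\ref{th_main_certified_pairs} gives a $(B,\nu)$-bounded expansion to order $r$ of $\EE[\,(\#[\widetilde\psi_{\alpha,\Bg}^\og]\cap G)\,\snbc(T^\og;\ge\bec\cert;G,k)]$, with bases among the eigenvalues of the model and with $c_i=0$ for $i<\ord(\widetilde\psi_\alpha)$. Summing the finite collection of expansions gives the claimed expansion for $\EE[I_s\cdot{\rm cert}_{<\nu,<r'}]$, and the minimum-order statement yields $c_i=0$ for $i<j_0$. The main obstacle is the error $\EE[R_s\cdot{\rm cert}_{<\nu,<r'}]$: we need this to be $O(c_r(k)/n^r)$ uniformly in $k\le n^{1/2}/C$. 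Here we use that $R_s$ is dominated by a polynomial of degree $s+1$ in tangle counts, which (after the same tuple-to-subgraph expansion) is a sum of $\#[\widehat\psi_\Bg^\og]\cap G$ with $\ord(\widehat\psi)\ge(s+1)j_0$; choosing $s$ so that $(s+1)j_0\ge r$ and invoking Theorem~\ref{th_main_certified_pairs} again bounds this residual in the required form. Finally, $c_0(k)=0$ is automatic since $\nu>1$ forces $j_0\ge 1$, which completes the argument.
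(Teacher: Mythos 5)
Your reduction of the certified trace to a finite inclusion--exclusion over homotopy types $T^\og$ and certificates $\bec\cert$ is essentially the paper's argument (Lemmas~\ref{le_vlg_compare} and~\ref{le_finite_upper_mins} plus equation~\eqref{eq_inclusion_exclusion_M}), and your use of Lemma~\ref{le_finite_min_tangles} to get a finite generating set of tangles is also right. The problem is in your approximation of $\II_{{\rm HasTangles}(\ge\nu,<r'')}$.

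You claim that truncating inclusion--exclusion at depth $s$ writes $\II_{{\rm HasTangles}} = I_s + R_s$ with $I_s$ a degree-$s$ polynomial in the counts $\#\bigl([\psi_{j,\Bg}^\og]\cap G\bigr)$ and $|R_s|$ controlled by a degree-$(s+1)$ polynomial, and then that the ``tuple-to-subgraph'' rewriting of the degree-$(s+1)$ monomials produces only counts $\#\bigl([\widehat\psi_\Bg^\og]\cap G\bigr)$ with $\ord(\widehat\psi)\ge (s+1)j_0$. That last step is false. Expanding a product $\prod_{\ell=1}^{s+1}\#\bigl([\psi_{j_\ell,\Bg}^\og]\cap G\bigr)$ as a sum over tuples of subgraphs and grouping by the isomorphism class of the union $\widehat\psi = S_1\cup\cdots\cup S_{s+1}$ necessarily includes diagonal configurations where the $S_\ell$ coincide or overlap heavily; in particular when all $S_\ell$ are equal, $\widehat\psi$ has order exactly $j_0$, not $(s+1)j_0$. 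Order is only guaranteed to increase under \emph{strict} inclusion of pruned subgraphs (Lemma~\ref{le_pruned_inclusions}), not multiplicatively under products of counts. Consequently your choice of $s$ with $(s+1)j_0 \ge r$ does not make the residual $O(n^{-r})$; the worst terms remain at $n^{-j_0}$, which is the size of the quantity you are trying to expand, so you get no error control at all.

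The paper avoids this by truncating a M\"obius sum by \emph{order} rather than an inclusion--exclusion by \emph{depth}: the approximant $I_r(\Psi,G_\Bg) = \sum_{[S_\Bg]\in\Psi^+_{<r}} N(S_\Bg,G_\Bg)\,\mu_{\Psi^+}[S_\Bg]$ is indexed over the poset $\Psi^+_{<r}$ of unions of tangles of order less than $r$, not over tuples. Theorem~\ref{th_mobius_inversion_truncation} shows $I_r$ equals the indicator exactly when $\ord_\Psi(G)<r$, and Lemma~\ref{le_morphisms_factor} shows that when $\ord_\Psi(G)\ge r$, every injective morphism from a $\psi\in\Psi^+_{<r}$ factors through an element of $\Psi^+$ of order in $[r,r+s)$; this is what gives the $O(n^{-r})$ bound on the error (Lemma~\ref{le_large_ind_order}). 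If you want to salvage your approach you would have to reorganize the tuple expansion so as to discard the low-order diagonal unions and show the remaining combination is still an approximation to the indicator --- which is, in effect, reconstructing the M\"obius function on $\Psi^+$ from scratch.
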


Subtracting the above two results yields an expansion theorem to order $r$ for
$$
f(k,n)\eqdef
\EE_{G\in\cC_n(B)}[
\II_{{\rm TangleFree}(\ge\nu,<r)}(G)
{\rm cert}_{<\nu,<r}(G,k) ]
$$
with $c_0(k)$ given as in \eqref{eq_c_0} for $\nu>1$; in view of 
\eqref{eq_certified_versus_snbc}, this function is the same as
$$
f(k,n)\eqdef
\EE_{G\in\cC_n(B)}[
\II_{{\rm TangleFree}(\ge\nu,<r)}(G)
\snbc_{<r}(G,k) ] \ ;
$$
by \eqref{eq_algebraic_order_bound} we may replace $\snbc_{<r}(G,k)$
by $\snbc(G,k)$ at an additive cost bounded by
$C k^{2r}\Trace(H_B^k)/n^r$.
This proves the expansion theorem
in Theorem~\ref{th_main_two_results} for
\eqref{eq_main_tech_result1}.  The expansion theorem 
\eqref{eq_main_tech_result2} easily follows from the methods we use
to prove Theorem~\ref{th_main_tech_result_for_has_tangles}.

\subsection{Generalizations of Tangle Free Sets}

Theorems~\ref{th_main_tech_result_for_cert} and
\ref{th_main_tech_result_for_has_tangles} involve indicator function
for the set ${\rm TangleFree}(\ge\nu,<r)$.  In fact, we will prove
more general results where ${\rm TangleFree}(\ge\nu,<r)$ is replaced
with any set of graphs, $\cT$, subject to certain restrictions,
which we now describe.

\begin{definition}\label{de_positive}
We say that a graph is {\em pruned} if each of its vertices is of degree at
least two, and {\em positive} if it is pruned and moreover each of its
connected components is of positive order.
We say that a $B$-graph (or ordered graph, etc.) is positive if its
underlying graph is positive.
\end{definition}

\begin{definition}\label{de_meets_avoids}
Let $\cT$ be a class of graphs (respectively, of $B$-graphs, of ordered graphs,
etc.).
We say that a graph ($B$-graph, etc.)
$G$ {\em meets $\cT$} if $G$ has a non-empty subgraph ($B$-subgraph, etc.)
that is isomorphic to an element of $\cT$, and otherwise we say
$G$ {\em avoids $\cT$};
we use ${\rm Meets}(\cT)$ and ${\rm Avoids}(\cT)$ respectively
to be the class of graphs (or $B$-graphs, etc.) that meet and avoid
$\cT$.
\end{definition}
Although ${\rm TangleFree}(\ge\nu,<r)$ describes a class of graphs, for
various reasons we will want to work with the class of $B$-graphs
whose underlying graph lies in
${\rm TangleFree}(\ge\nu,<r)$; for this reason we make the above definition
for class of graphs and $B$-graphs.
The above notion of meeting and avoiding also makes sense for ordered graphs
and ordered $B$-graphs (and in other settings), but we will only be 
interested in graphs and $B$-graphs.

\begin{definition}\label{de_finitely_generated}
We say that a class of graphs ($B$-graphs, etc.)
$\cT$ is {\em finitely generated} if there is a finite
set $\cT'$ for which ${\rm Meets}(\cT)={\rm Meets}(\cT')$,
and {\em finitely positively generated} if there exists such a $\cT'$
such that each of its elements is positive.
\end{definition}
It is easy to see that if $\cT'\subset \cT$,
then ${\rm Meets}(\cT)={\rm Meets}(\cT')$ iff
$\cT'$ contains at least one graph in each isomorphism class of graphs
that are minimal with respect to inclusion (of graphs, of $B$-graphs, etc.)
in $\cT$.

Our interest in finitely positively generated classes is due to the following
proposition.
\begin{lemma}\label{le_finite_min_tangles}
For any real $\nu>1$ and $r\in\integers$,
${\rm TangleFree}(\ge\nu,<r)$ is finitely positively generated.
\end{lemma}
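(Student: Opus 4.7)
The plan is to take $\cT'$ to be a set of representatives, one from each isomorphism class, of the minimal $(\ge\nu,<r)$-tangles (minimal with respect to the subgraph relation). Since every $(\ge\nu,<r)$-tangle contains a minimal one as a subgraph, a graph $G$ has a $(\ge\nu,<r)$-tangle subgraph if and only if $G$ meets $\cT'$; equivalently, ${\rm TangleFree}(\ge\nu,<r)={\rm Avoids}(\cT')$. The substantive task is therefore to show that $\cT'$ is finite and that each element of $\cT'$ is positive.

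First I would make two reductions. If $\psi$ is any $(\ge\nu,<r)$-tangle with a leaf $v$ (a vertex of degree one), then the graph $\psi'$ obtained by deleting $v$ and its unique incident edge is connected, has the same order as $\psi$, and satisfies $\mu_1(\psi')=\mu_1(\psi)$: the directed edges incident to a leaf cannot appear in any SNBC walk and so contribute nothing to the spectrum of the Hashimoto matrix. Thus $\psi'\subsetneq\psi$ is still a $(\ge\nu,<r)$-tangle, contradicting minimality. Hence every $\psi\in\cT'$ is pruned. Next, since $\psi$ is connected, pruned, and has $\mu_1(\psi)\ge\nu>1$, the equivalence ``$\mu_1>1$ iff $\chi<0$'' for connected pruned graphs (stated in Section~\ref{su_B_ordered_strongly_alg} and proved as Lemma~6.4 of Article~III) forces $\ord(\psi)\ge 1$, so $\psi$ is positive.

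For finiteness, I would write each minimal tangle as $\psi=\VLG(T,\mec k)$, where $T$ is the homotopy type of $\psi$ (obtained by suppressing beads) and $\mec k$ is its edge-length vector. Then $\ord(T)=\ord(\psi)$, so $1\le\ord(T)\le r-1$. Since $T$ has no beads and is pruned, a short degree count---using that any degree-two vertex of $T$ must be incident to a half-loop---yields $\#V(T)\le 2\ord(T)\le 2(r-1)$, so only finitely many $T$ occur up to isomorphism. For each such $T$, I would bound all $k(e)$ by a constant $K=K(T,\nu)$ as follows: if some $k(e_0)$ were arbitrarily large, one can analyse $\mu_1(\psi)$ by ``cutting'' the $e_0$-path, showing that as $k(e_0)\to\infty$ the spectral radius $\mu_1(\psi)$ converges to the maximum of the spectral radii of the pieces obtained from the cut; so for $k(e_0)$ sufficiently large one of those pieces would itself be a $(\ge\nu,<r)$-tangle properly contained in $\psi$, contradicting minimality.

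The hard part is this last step: the quantitative asymptotic analysis of $\mu_1(\VLG(T,\mec k))$ as a single component of $\mec k$ grows. The footnote in Section~\ref{su_tangles} explains precisely why the analogous claim fails for strict inequality $(>\nu)$: when the ``core'' piece has $\mu_1$ exactly $\nu$, stretching a single edge keeps $\mu_1>\nu$ for all finite lengths while the stretched graph remains minimal for $>\nu$. The weak inequality $\mu_1\ge\nu$ sidesteps this, because the core itself is then a $(\ge\nu)$-tangle properly contained in the stretched graph, violating minimality. This is also where the authors signal a minor correction to the corresponding argument (Lemma~9.2) of \cite{friedman_alon}.
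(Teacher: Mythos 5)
Your plan matches the paper's decomposition exactly: take $\cT'$ to be a set of representatives of the minimal $(\ge\nu,<r)$-tangles, show each is positive, and show there are finitely many of them up to isomorphism. The paper establishes these as Lemma~\ref{le_min_tang_ord_pos} and Theorem~\ref{th_tangle_finite}, respectively. Your positivity argument is essentially the paper's: a leaf can be deleted without changing $\mu_1$ or the order, so a minimal tangle is pruned, and a connected pruned graph with $\mu_1>1$ has positive order via Lemma~\ref{le_pruned_cycle_or_pos_ord}. Your count bounding the number of homotopy types $T$ of order $<r$ is also fine.

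The gap is in the step ``bound all $k(e)$ by a constant $K=K(T,\nu)$.'' You analyze $\mu_1(\VLG(T,\mec k))$ as a single component $k(e_0)\to\infty$, implicitly holding the other components fixed; but in a putative infinite family of minimal tangles $\VLG(T,\mec k^i)$ of the same homotopy type $T$, the remaining components $k^i(e)$ for $e\neq e_0$ also vary with $i$, so the ``pieces obtained from the cut'' change with $i$ and there is no single limiting spectral radius to compare against $\nu$. The paper's proof of Theorem~\ref{th_tangle_finite} resolves this by first passing to a diagonal subsequence on which \emph{every} component $k^i(e)$ either stabilizes or tends to infinity, and then invoking the multi-component continuity statement Lemma~\ref{le_shannon_cont} (whose hypothesis $\mu_1\ge\nu>1$ is the correction to \cite{friedman_alon} you correctly flagged). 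Your single-variable sketch therefore needs to be upgraded to that diagonal-subsequence / multi-edge version before it is a proof.
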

We will prove this in Section~\ref{se_finish_proof}, using the ideas 
of Lemma~9.2 of \cite{friedman_alon} and its proof there.

\subsection{Main Theorem for Indicator Functions}

\begin{theorem}\label{th_exp_ind}
Let $B$ be a graph, $\{\cC_n(B)\}_{n\in N}$ an algebraic model of $B$,
and let $\cT$ be a finitely positively generated class of graphs
or of $B$-graphs;
let $j$ be the smallest order of a graph in $\cT$ that occurs in
$\cC_n(B)$ (if $j$ doesn't exist we take $j=+\infty$).
Then the function
$$
f(n) \eqdef
\EE_{G\in\cC_n(B)}[ \II_{{\rm Meets}(\cT)}(G)]
=\Prob_{G\in\cC_n(B)}[{\rm Meets}(\cT)]
$$
has an asymptotic expansion in $1/n$ to any order $r$
$$
c_0 + c_1 n^{-1} + \cdots + c_{r-1} n^{-r+1} + O\bigl( n^{-r} \bigr)
$$
with $c_i=0$ if $i<j$ and, if $j\ne +\infty$, then $c_j>0$.
\end{theorem}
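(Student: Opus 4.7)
The plan is to reduce to a finite generating set, then apply a Bonferroni-style inclusion-exclusion and invoke the algebraic-model subgraph expansion \eqref{eq_expansion_S} for each resulting term.

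Since $\cT$ is finitely positively generated I first replace $\cT$ by a finite collection $\cT'=\{T_1,\ldots,T_m\}$ of positive $B$-graphs with ${\rm Meets}(\cT)={\rm Meets}(\cT')$ (in the graph case, this reduction enumerates the finitely many $B$-structures on each generator that occur in $\cC_n(B)$); by discarding redundant elements I may also assume that no $T_l$ is $B$-isomorphic to a proper sub-$B$-graph of another $T_{l'}$. Let $Y_l(G)$ denote the number of sub-$B$-graphs of $G$ isomorphic to $T_l$. The identity $\II_{X\ge 1}=\sum_{k\ge 1}(-1)^{k+1}\binom Xk$ for $X\in\ZZ_{\ge 0}$ combined with $\II_{{\rm Meets}(\cT')}=1-\prod_l(1-\II_{Y_l\ge 1})$ yields the exact expansion
$$
\II_{{\rm Meets}(\cT')}(G)
= \sum_{\mec k\in\ZZ_{\ge 0}^m\setminus\{\mec 0\}}(-1)^{|\mec k|+1}\prod_l\binom{Y_l(G)}{k_l},
$$
where $|\mec k|=\sum_l k_l$ and $\binom{Y_l}{0}=1$. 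For each $\mec k$ the summand counts configurations of $|\mec k|$ pairwise distinct sub-$B$-graphs of $G$, and grouping these by the $B$-isomorphism class of their union $U$ rewrites it as a non-negative integer linear combination of $\#\bigl([U_\Bg^\og]\cap G\bigr)$ over finitely many ordered $B$-graphs $U_\Bg^\og$ satisfying $\#E_U\le|\mec k|\cdot\max_l\#E_{T_l}$.

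Fix a truncation parameter $K=K(r,\cT')$ and let $\Phi_K(G):=\sum_{0<|\mec k|\le K}(-1)^{|\mec k|+1}\prod_l\binom{Y_l(G)}{k_l}$. By the preceding paragraph $\Phi_K$ is a fixed integer linear combination of finitely many $\#\bigl([U_\Bg^\og]\cap G\bigr)$, so \eqref{eq_expansion_S} delivers an asymptotic expansion of $\EE[\Phi_K(G)]$ to order $r$. The main technical step is then to bound
$$
\bigl|\EE[\II_{{\rm Meets}(\cT')}(G)]-\EE[\Phi_K(G)]\bigr|
\le\sum_{|\mec k|>K}\EE\Bigl[\prod_l\binom{Y_l(G)}{k_l}\Bigr]
$$
by $O(n^{-r})$ for $K$ large enough. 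For this I would prove a combinatorial lemma: there is a function $f$ with $f(N)\to\infty$ (one can take $f(N)\ge cN^{1/e}$ with $e=\max_l\#E_{T_l}$ and $c>0$ depending only on $\cT'$) such that any $B$-graph $U$ containing $N$ pairwise distinct sub-$B$-graphs each isomorphic to some $T_l\in\cT'$ satisfies $\ord(U)\ge f(N)$. The key input is that for a positive pruned graph $T$ every proper sub-$B$-graph has strictly smaller order: the edge-vertex counting implicit in pruning forces the removed part of any order-preserving proper sub-$B$-graph to be a disjoint union of cycles, which is excluded by the positivity of every component. Combining this strict order drop with a polynomial upper bound on the number of $T_l$-subgraphs of $U$ in terms of $\ord(U)$ (obtained via bead-suppression of $U$ to a graph of bounded size for fixed $\ord$) yields the lemma, and then \eqref{eq_expansion_S} applied to each union isomorphism type gives $\EE\bigl[\prod_l\binom{Y_l}{k_l}\bigr]=O(n^{-f(|\mec k|)})$; choosing $K$ so that $f(K+1)\ge r$ bounds the truncation error by $O(n^{-r})$.

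For the leading-order analysis, the single-copy terms $|\mec k|=1$ contribute $\sum_l\EE[Y_l(G)]$, whose expansion by \eqref{eq_expansion_S} has leading term $\sum_l c_{\ord(T_l)}(T_l)n^{-\ord(T_l)}+\cdots$, with $c_{\ord(T_l)}(T_l)>0$ precisely when $T_l$ occurs in $\cC_n(B)$. For $|\mec k|\ge 2$, applying the strict order drop to a union $U=\psi_1\cup\psi_2$ of two distinct $\cT'$-copies (using minimality of $\cT'$ to guarantee $\psi_2\not\subseteq\psi_1$, so that $\psi_1\cap\psi_2\subsetneq\psi_2$ forces $\ord(\psi_1\cap\psi_2)<\ord(\psi_2)$) gives $\ord(U)\ge j+1$, where $j=\min\{\ord(T_l):T_l\text{ occurs in }\cC_n(B)\}$. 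Hence $c_i=0$ for $i<j$, and when $j<\infty$ the coefficient $c_j$ equals the positive sum $\sum c_{\ord(T_l)}(T_l)$ over those $l$ with $\ord(T_l)=j$ and $T_l$ occurring in $\cC_n(B)$; in particular $c_j>0$. The principal obstacle is the combinatorial lemma in the third paragraph: the tempting slogan ``each new distinct copy of a positive pruned subgraph adds at least one to $\ord(U)$'' is false because distinct copies can nest inside a union of prior copies (for instance, $\binom s3$ distinct theta-subgraphs fit inside a bouquet of $s$ parallel edges between two vertices, which has order only $s-2$), so one must settle for the sublinear-but-unbounded bound $\ord(U)\ge cN^{1/e}$, which is nonetheless enough to make the truncation go through.
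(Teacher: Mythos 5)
Your route---a Bonferroni-style inclusion-exclusion via $\II_{X\ge 1}=\sum_{k\ge 1}(-1)^{k+1}\binom{X}{k}$ followed by a truncation at $|\mec k|\le K$---is genuinely different from the paper's, which builds a M\"obius function $\mu_{\Psi^+}$ on the poset of isomorphism classes of $B$-graphs under injection (Definition~\ref{de_poset_B_graphs}, Theorem~\ref{th_mobius_inversion_truncation}), defines the truncated indicator $I_r(\Psi,G_\Bg)=\sum_{[S_\Bg]\in\Psi^+_{<r}}N(S_\Bg,G_\Bg)\mu[S_\Bg]$ over the \emph{fixed finite} set $\Psi^+_{<r}$, and controls the discrepancy $I_r-\II_{\rm Meets}$ via Lemmas~\ref{le_morphisms_factor} and~\ref{le_large_ind_order}. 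Your combinatorial lemma---strict order drop for proper pruned inclusions, hence $\ord(U)\ge cN^{1/e}$ for a union $U$ of $N$ distinct copies---is essentially the same device as the paper's Lemmas~\ref{le_pruned_inclusions} and~\ref{le_finiteness_and_positivity}, and your leading-order analysis (minimality of $\cT'$, strict drop for $\psi_1\cap\psi_2\subsetneq\psi_2$) is sound.

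However, there is a genuine gap in your truncation-error step. You write that each $\EE\bigl[\prod_l\binom{Y_l}{k_l}\bigr]$ is $O(n^{-f(|\mec k|)})$ with $f(|\mec k|)\ge c|\mec k|^{1/e}$, and then that choosing $K$ with $f(K+1)\ge r$ bounds $\sum_{|\mec k|>K}\EE[\prod_l\binom{Y_l}{k_l}]$ by $O(n^{-r})$. That inference is not warranted: the implicit constant in each $O(n^{-f(|\mec k|)})$ depends on $\mec k$ and grows at least exponentially in $|\mec k|$, because (a) the number of isomorphism classes $[U_\Bg]$ of unions with $\#E_U\le |\mec k|\cdot\max_l\#E_{T_l}$ grows super-polynomially in $|\mec k|$, (b) the multiplicities $M_{\mec k}(U)$ (number of ways to realize $U$ as a union of $k_l$ distinct $T_l$-copies) grow super-polynomially, and (c) the error term in \eqref{eq_expansion_S} carries a factor $g(\#E_U)$ of growth $1$ which can itself be subexponential but unbounded. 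Meanwhile the gain $n^{-f(|\mec k|)}=n^{-c|\mec k|^{1/e}}$ decays only like $\exp\bigl(-c|\mec k|^{1/e}\log n\bigr)$, i.e., sub-exponentially in $|\mec k|$. So the tail $\sum_{|\mec k|>K}$ is not controlled term-by-term. (Note also that for a fixed $n$ the sum really is finite---terms with $|\mec k|>\sum_l Y_l(G)$ vanish---so one cannot simply argue convergence; one must show the whole finite tail is $O(n^{-r})$ uniformly in $n$, and that is precisely what is missing.) The paper sidesteps this entirely: $I_r$ is by construction a linear combination over a fixed finite index set $\Psi^+_{<r}$, and when $\ord_\Psi(G)\ge r$ the factoring Lemma~\ref{le_morphisms_factor} bounds $|I_r(\Psi,G_\Bg)|$ by a fixed finite sum $C\sum_{[\psi_\Bg]\in\Psi^+_{<r+s}\setminus\Psi^+_{<r}}\#([\psi_\Bg]\cap G)$, so only boundedly many subgraph counts with uniform constants ever appear, and Lemma~\ref{le_large_ind_order} then gives $O(n^{-r})$. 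To salvage your version you would need an analogous factoring argument rewriting the tail through the finite set $\Psi^+_{<r+s}\setminus\Psi^+_{<r}$, at which point the two proofs essentially converge.
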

We remark that the case $j=+\infty$ in the above theorem is not
particularly interesting, since then $f(n)=0$ for all $n$.
Also, similar to a remark after Theorem~\ref{th_main_two_results}, the above
theorem implies that $c_i$ are given inductively as
$$
c_i =  \lim_{n\to\infty} 
\bigl( f(n) - c_0 + c_1 n^{-1} + \cdots + c_{i-1} n^{-i+1}\bigr)n^i,
$$
so that fact that $f(n)$ is a probability of some event implies that
$f(n)\in[0,1]$, and hence
$c_i\in\reals$ for all $i$, and the first nonzero $c_i$ must be positive.

\subsection{Main Theorems for Certified Traces with Indicator Functions}

\begin{theorem}\label{th_exp_ind_cert}
Let $\{\cC_n\}_{n\in N}$ be an algebraic model of random covering
maps over a graph $B$, with $\chi(B)<0$.
Let $r'\in\naturals$ and $\nu\ge \mu_1^{1/2}(B)$, and
let $\cT$ be a finitely positively generated class of graphs (or 
of $B$-graphs).
Then
$$
f(k,n)\eqdef
\EE_{G\in\cC_n(B)}[ \II_{{\rm Meets}(\cT)}(G){\rm cert}_{<\nu,<r'}(G;k) ]
$$
has a $(B,\nu)$-bounded asymptotic expansion to any order $r$,
$$
f(k,n)=c_0(k)+\cdots+c_{r-1}(k)/n^{r-1}+O(1)c_r(k)/n^r,
$$
where the bases of the $c_i$ 
are a subset of any set of eigenvalues of the model;
moreover $c_i(k)$ vanishes
for all $i$ less than the minimum order of a graph (or of a $B$-graph)
that contains
both some element of $\cT$ and a $(<\nu,<r')$-certified walk,
provided that $i<r$.
\end{theorem}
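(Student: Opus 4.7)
The plan is to combine the certificate decomposition of ${\rm cert}_{<\nu,<r'}(G;k)$ developed in Section~\ref{se_cert_proof} with a polynomial approximation of $\II_{{\rm Meets}(\cT)}(G)$ in subgraph counts, and then apply Theorem~\ref{th_main_certified_pairs} term-by-term. Since $\cT$ is finitely positively generated, fix a finite set $\{\psi_1,\dots,\psi_m\}$ of positive graphs (or $B$-graphs) with ${\rm Meets}(\{\psi_1,\dots,\psi_m\})={\rm Meets}(\cT)$; if $\cT$ is a class of graphs, replace each $\psi_i$ by the finitely many $B$-graph structures $\psi_i\to B$, so that we may assume each $\psi_i$ is a positive $B$-graph.

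First, I would build a polynomial approximation
\[
\II_{{\rm Meets}(\cT)}(G) = P_J\bigl(N_1(G),\dots,N_m(G)\bigr) + R_J(G),
\]
where $N_i(G)=\#\bigl([\psi_i]\cap G\bigr)$, $P_J$ arises from truncating the Bonferroni/inclusion--exclusion expansion of $1-\prod_i(1-\II_{N_i\ge 1})$ at degree $J$, and $R_J$ is pointwise bounded by a polynomial in the $N_i$ of degree $J+1$. Each monomial in $P_J$ and $R_J$ can, by enumerating the overlap patterns of tuples of copies of the $\psi_i$'s, be rewritten as a finite non-negative combination of counts $\#[\sigma_\Bg^\og]\cap G$, where $\sigma_\Bg^\og$ ranges over ordered $B$-graphs built by amalgamation from the $\psi_i$'s, each itself positive and of order at least $\min_i\ord(\psi_i)$; this is the content that I expect Section~\ref{se_indicator} supplies.

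Second, using the finiteness-of-certificates statement proved in Section~\ref{se_cert_proof} en route to Theorem~\ref{th_main_tech_result_for_cert}, I would decompose
\[
{\rm cert}_{<\nu,<r'}(G;k) = \sum_{(T^\og,\bec\xi)\in\Cert} \snbc(T^\og;\ge\bec\xi,G,k)
\]
over a finite set $\Cert$ of pairs $(T^\og,\bec\xi)$ with $\ord(T)<r'$, where $\bec\xi$ certifies $\mu_1(\ViSu(w))<\nu$ for every walk $w$ counted. Multiplying the $P_J$-part of the indicator by this decomposition and taking expectations writes the main contribution to $f(k,n)$ as a finite linear combination of
\[
\EE_{G\in\cC_n(B)}\bigl[(\#[\sigma_\Bg^\og]\cap G)\,\snbc(T^\og;\ge\bec\xi,G,k)\bigr],
\]
to each of which Theorem~\ref{th_main_certified_pairs} supplies a $(B,\nu)$-bounded expansion to order $r$ with bases in the prescribed set of model eigenvalues. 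The vanishing of $c_i(k)$ below the stated threshold then follows directly from the vanishing clause of Theorem~\ref{th_main_certified_pairs}: every $\sigma_\Bg$ that appears contains some $\psi_j\in\cT$, and every $T^\og$ that appears encodes a $(<\nu,<r')$-certified walk.

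The main obstacle will be controlling the remainder $\EE\bigl[R_J(G)\,{\rm cert}_{<\nu,<r'}(G;k)\bigr]$. The certified trace can be of size $\mu_1(B)^k$ for $k$ as large as $n^{1/2}/C$, so a naive pointwise bound on $R_J$ is far too weak; instead one must rewrite $R_J$ into $\#[\sigma_\Bg^\og]\cap G$-terms of order $\ge J+1$ and apply Theorem~\ref{th_main_certified_pairs} to each term to extract the factor $n^{-(J+1)}$, then choose $J$ so that $J+1\ge r$. Ensuring that this amalgam construction produces only finitely many positive $\sigma_\Bg^\og$ of controlled order, so that the ``$O(1)$'' constants coming from Theorem~\ref{th_main_certified_pairs} can be summed uniformly in $k$, is the main bookkeeping challenge.
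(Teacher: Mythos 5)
Your high-level plan — decompose $\II_{{\rm Meets}(\cT)}$ into a ``main'' part that is a controlled linear combination of subgraph counts plus a remainder supported on high-order configurations, decompose ${\rm cert}_{<\nu,<r'}$ via the finite certificate set from Section~\ref{se_cert_proof}, and feed the products into Theorem~\ref{th_main_certified_pairs} — is exactly the skeleton of the paper's proof, and your identification of the vanishing threshold is correct. The difference lies in how the indicator is approximated. You propose truncating a Bonferroni-type expansion of $1-\prod_i(1-\II_{N_i\ge 1})$ at degree $J$, producing a polynomial $P_J$ in the counts $N_i$ plus a remainder $R_J$ ``bounded by a polynomial of degree $J+1$,'' and then rewriting monomials via amalgamation into counts of larger subgraphs. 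As stated this step is underjustified: $1-\prod_i(1-\II_{N_i\ge 1})$ is already exact, and $\II_{N_i\ge 1}$ is not a polynomial in $N_i$, so it is not clear what is being truncated; the natural polynomial substitute (e.g.\ $\sum_{j\ge 1}(-1)^{j+1}\binom{N_i}{j}$) has a remainder after truncation that requires work to convert into a sum of counts of order $\ge r$. The paper avoids this entirely by introducing the poset $\Psi^+$ of derived $B$-graph classes (all finite unions of copies of the $\psi_i$) and defining a M\"obius function $\mu_{\Psi^+}$ on it; the truncated indicator $I_r(\Psi,G_\Bg)=\sum_{[S_\Bg]\in\Psi^+_{<r}}N(S_\Bg,G_\Bg)\,\mu[S_\Bg]$ is \emph{already} a linear combination of the quantities $N(S_\Bg,G_\Bg)=\#([S_\Bg^\og]\cap G)$ that Theorem~\ref{th_main_certified_pairs} handles, equals $\II_{{\rm Meets}(\Psi)}(G_\Bg)$ \emph{exactly} when $\ord_\Psi(G_\Bg)<r$ (Theorem~\ref{th_mobius_inversion_truncation}), and in the complementary case is bounded by a finite sum of counts of subgraphs of order $\ge r$ (via Lemma~\ref{le_morphisms_factor}); Lemma~\ref{le_large_ind_order} then kills the error after multiplication by ${\rm cert}_{<\nu,<r'}(G,k)$ at cost $O(n^{-r}\mu_1(B)^k)$, matching your observation about the size of the certified trace. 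In short: your amalgamation-of-monomials step is doing by hand what the M\"obius inversion on $\Psi^+$ does structurally, and the M\"obius route gives you exact agreement below the truncation order for free (rather than an error term you must absorb separately), which is precisely what makes the bookkeeping you flag at the end tractable. To repair your proposal along its own lines you would need to prove the degree-$(J+1)$ remainder bound and the ``only finitely many amalgams of bounded order'' claim; both are in fact supplied by the paper's Lemma~\ref{le_finiteness_and_positivity} and Lemma~\ref{le_morphisms_factor}, so it would be cleaner to replace the Bonferroni step with the poset $\Psi^+$ and $I_r(\Psi,\cdot)$ outright.
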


\subsection{Remarks on Theorems~\ref{th_exp_ind} 
and~\ref{th_exp_ind_cert} and Isomorphism Classes of $B$-Graphs}

In Sections~\ref{se_cert_proof}--\ref{se_ind_cert_proof}
we prove 
Theorems~\ref{th_exp_ind} and \ref{th_exp_ind_cert}, where $\cT$
is a finitely positively generated class of $B$-graphs (rather than graphs).
There is no harm in passing to $B$-graphs, since our models are
algebraic, and all our proof techniques work with $B$-graphs.
The case where $\cT$ is a set of graphs is equivalent to the
case where $\cT$ is replaced with the set of all $B$-graphs whose
underlying graph lies in $\cT$: if $\cT$ is a finitely generated
set of graphs, then the set of all possible $B$-graph structures
on the finite set of generators is finite.

We warn the reader of another change
in Sections~\ref{se_cert_proof}--\ref{se_ind_cert_proof}:
we work sets, $\Psi$, of {\em isomorphism classes} of $B$-graphs, rather
than $B$-graphs.
It is simpler to state 
Theorems~\ref{th_exp_ind} and \ref{th_exp_ind_cert} with $\Psi$ being
a set of $B$-graphs, which is why we have done so.
However, to prove these theorems we will work with
formulas---including those for 
{\em M\"obius functions}
and {\em indicator function approximations}---that are simpler
to define using
{\em isomorphism classes} of $B$-graphs.
Moreover,
some notions discussed already, such as being {\em finitely generated},
can be stated in terms of 
a finite number of
{\em isomorphism classes} of graphs (or of $B$-graphs).
In \cite{friedman_alon}, the symbol $\Psi$ with various subscripts
refers either to isomorphism classes of graphs, or to a set of representatives
in each isomorphism classes; in this article, we find it conceptually
simpler to give the proofs of the above theorems
using $\Psi$ to refer to a
set of {\em isomorphism classes} of $B$-graphs.

The small cost of working with isomorphism classes of $B$-graphs is
that one has to get used to slightly different terminology.
So in the next few sections we replace a class of graphs or $B$-graphs, 
$\cT$, with
a finite set of {\em isomorphism classes} of $B$-graphs
$$
\Psi = \{ [\psi_B^1], \ldots, [\psi_B^m] \}.
$$
One has to get used to speaking of $B$-graphs, $\psi_B$, lying in
{\em an element} of $\Psi$ (or of $\Psi^+$ or $\Psi^+_{<r}$, defined
in Section~\ref{se_indicator}), meaning $\psi_B\in [\psi_B^i]$ for some $i$
(rather than $\psi_B\in\cT$).
Hence we make the following definition.

\begin{definition}\label{de_meets_isom_classes}
Let $B$ be a graph and
$\Psi$ be a set of isomorphism classes of $B$-graphs.
We use ${\rm Meets}(\Psi)$ to denote the class of graphs, $G$, such that
some nonempty subgraph of $G$ is contained in an element $\Psi$.
\end{definition}

\section{Proof of Theorem~\ref{th_main_tech_result_for_cert}}
\label{se_cert_proof}

In this section we prove Theorem~\ref{th_main_tech_result_for_cert}.
The first two subsections each prove an easy preliminary lemma.

\subsection{VLG Comparisons}

\begin{lemma}\label{le_vlg_compare}
Let $T$ be a graph, and $\mec k,\mec k'$ be 
two maps $E_T\to\naturals$ with $\mec k\le \mec k'$
(i.e., $k(e)\le k'(e)$ for all $e\in E_T$).  Then
\begin{equation}\label{eq_vlg_compare}
\mu_1\bigl( \VLG(T,\mec k) \bigr)
\ge
\mu_1\bigl( \VLG(T,\mec k') \bigr) \ .
\end{equation} 
\end{lemma}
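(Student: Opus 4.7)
The plan is to reduce the comparison between the two Hashimoto spectral radii to a comparison of Perron eigenvalues of two nonnegative matrices on the common index set $\Edir_T$. Write $G = \VLG(T,\mec k)$, and note that for each $e \in \Edir_T$ the ``expanded'' directed path above $e$ in $G$ consists of $k(e)$ directed edges, partitioning $\Edir_G$ into these paths. Let $\phi \ge 0$ be a Perron-Frobenius eigenvector of $H_G$ with eigenvalue $\mu = \mu_1(G)$, and define $\psi\from \Edir_T \to \reals_{\ge 0}$ by $\psi(e) \eqdef \phi(f^e_1)$, where $f^e_1$ denotes the first directed edge above $e$. Because each internal vertex of an expanded path has degree $2$, propagating $H_G\phi = \mu\phi$ along the path forces $\phi(f^e_j) = \mu^{j-1}\psi(e)$, and the eigenvalue equation at the terminal edge collapses to
$$
\mu^{k(e)}\,\psi(e) \;=\; \sum_{e' \,:\, e \to e' \text{ NB in } T} \psi(e').
$$

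Next I would introduce the nonnegative matrix $\widetilde M_{\mec k}(\mu)$ indexed by $\Edir_T$, whose $(e,e')$ entry is $\mu^{-k(e)}$ if $e \to e'$ is an NB step in $T$ and zero otherwise; the displayed equation says precisely that $\psi$ is a nonnegative eigenvector of $\widetilde M_{\mec k}(\mu)$ with eigenvalue $1$. When $T$ is connected with $\chi(T) < 0$, so that $\mu > 1$ and (by Perron-Frobenius applied to $H_G$) $\psi > 0$, this characterizes $\mu_1(G)$ as the unique $\mu > 1$ at which $\rho\bigl(\widetilde M_{\mec k}(\mu)\bigr) = 1$; indeed each entry of $\widetilde M_{\mec k}(\mu)$, and hence its Perron eigenvalue, is continuous and strictly decreasing in $\mu$ on $(1,\infty)$ with limit $0$ as $\mu \to \infty$.

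Given this characterization, the monotonicity \eqref{eq_vlg_compare} is immediate. For $\mec k \le \mec k'$ and any $\mu \ge 1$, one has $\mu^{-k(e)} \ge \mu^{-k'(e)}$ for every $e \in \Edir_T$, so $\widetilde M_{\mec k}(\mu) \ge \widetilde M_{\mec k'}(\mu)$ entrywise; by the standard Perron-Frobenius monotonicity theorem the Perron eigenvalue of the former dominates that of the latter. Applying this at $\mu = \mu_1\bigl(\VLG(T,\mec k')\bigr)$ gives $\rho\bigl(\widetilde M_{\mec k}(\mu)\bigr) \ge 1$, and combining with the monotonic decrease of $\rho(\widetilde M_{\mec k}(\cdot))$ in $\mu$ forces $\mu_1\bigl(\VLG(T,\mec k)\bigr) \ge \mu$, as desired.

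The hard part of the write-up will be the degenerate case $\mu_1(\VLG(T,\mec k')) \le 1$, which falls outside the Perron-Frobenius characterization just given. I would handle this by reducing to connected components of $T$: a pruned connected graph has $\mu_1 > 1$ if and only if its Euler characteristic is negative, and in the remaining cases (trees, isolated vertices and half-loops, or cycles) $\mu_1(\VLG(T,\mec k))$ does not depend on $\mec k$, equalling $0$ or $1$ depending on whether the component contains a cycle. Since $\mu_1$ of a disjoint union is the maximum of $\mu_1$ over components, this componentwise argument takes care of general (not necessarily connected, not necessarily pruned) $T$, completing the proof.
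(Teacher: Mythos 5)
Your proof is correct in outline and is essentially a self-contained derivation of the same characterization that the paper invokes as a black box: the matrix $\widetilde M_{\mec k}(\mu)$ you construct is precisely the matrix $M_{\mec k}(z)$ with $z=1/\mu$ that appears later in the proof of Lemma~\ref{le_shannon_cont}, and the fact that $\mu_1(\VLG(T,\mec k))$ is characterized by $\rho\bigl(\widetilde M_{\mec k}(\mu)\bigr)=1$ is exactly the content of ``Shannon's algorithm'' as cited from \cite{friedman_alon}. Once that characterization is in hand, the entrywise monotonicity and Perron--Frobenius comparison are the same ``majorization'' step the paper uses. So this is the same route; you have simply re-proved the cited ingredient rather than referencing it.

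Two remarks, though. First, the paper also includes a bracketed \emph{alternative} proof that is genuinely different and notably simpler: it observes that every SNBC walk in $\VLG(T,\mec k)$ can be cyclically shifted to begin at a vertex of $T$, sets $W_m$ to be the number of SNBC walks of length at most $m$ beginning and ending at vertices of $T$, and uses the length-nondecreasing bijection between such walks in $\VLG(T,\mec k)$ and $\VLG(T,\mec k')$ to get $W_m \ge W_m'$ directly. That argument gives $\mu_1(\VLG(T,\mec k)) = \lim_m W_m^{1/m} \ge \lim_m (W_m')^{1/m} = \mu_1(\VLG(T,\mec k'))$ uniformly, with no irreducibility hypotheses, no reduction to connected pruned components, and no separate treatment of $\mu_1\le1$. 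Second, your degenerate-case sketch is a little incomplete as a taxonomy: after pruning a connected component with $\chi\ge0$, the remaining possibilities (per the proof of Lemma~\ref{le_pruned_cycle_or_pos_ord}) also include a path with half-loops at both endpoints and a single vertex with two half-loops, not just ``trees, isolated vertices and half-loops, or cycles.'' The claim that $\mu_1(\VLG(T_0,\mec k))$ is independent of $\mec k$ on these components is still true, so the gap is repairable, but it illustrates why the paper's combinatorial alternative, which avoids any such case analysis, is the cleaner write-up.
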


Its proof
is a standard consequence of ``Shannon's algorithm,''
and {\em majorization}
as described just above Theorem~3.5 of
\cite{friedman_alon}.
In the terminology there,
each entry of the matrix $Z_G(z)$, where $G$ is the oriented line graph of
$\VLG(T,\mec k)$, majorizes each of $Z_H(z)$ where $H$ is the oriented
line graph of
$\VLG(T,\mec k')$; hence each entry of $M_G(z)$ majorizes that of
$M_H(z)$; hence
equation~(12) and Theorem~3.5 of \cite{friedman_alon}
imply \eqref{eq_vlg_compare}.

[One can also prove \eqref{eq_vlg_compare} without Shannon's algorithm:
note that every
SNBC walk in $G=\VLG(T,\mec k)$ can be cyclically shifted by at most $\#E_G$
places to an SNBC walks in $G$ beginning at some vertex of $T$
(viewing $V_T$ as a subset of $V_G$); it follows that
$\mu_1\bigl( \VLG(T,\mec k) \bigr)$ is the limit as $m\to\infty$
of $W_m^{1/m}$, where $W_m$ is the number of SNBC walks beginning and
ending at a vertex of $T$ of length at most $m$.
But if $W'_m$ is the same quantity for
$G'=\VLG(T,\mec k')$, then we have $W_m\ge W'_m$
in view of the one-to-one correspondence of such SNBC
walks in $G'$ with those in $G$ (and with those in $T$),
for which the length of the walk in $G$ is at most the length of that in $G'$
since $\mec k\le\mec k'$.]

\subsection{The Finiteness of Minimal Elements in an Upper Subset of
$\naturals^n$}

The basis of our analysis of certified traces is the following 
finiteness lemma,
which we give after some definitions.

\begin{definition}
For an integer $n\ge 1$, endow $\naturals^n$ with the partial order
$\mec k\le \mec k'$ to mean that $k(i)\le k'(i)$ for all $i=1,\ldots,n$.
By an {\em upper set} in $\naturals^n$ we mean a subset, $U$, such
that if $\mec u\in U$ and $\mec u\le \mec u'$, then $\mec u'$.
\end{definition}

\begin{lemma}\label{le_finite_upper_mins}
Any upper set of $\naturals^n$ has a finite number of
minimal elements.
\end{lemma}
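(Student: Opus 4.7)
The plan is to prove this by the classical Dickson's lemma argument. First I would establish the one-dimensional case, which is immediate: the well-ordering of $\naturals$ implies that any nonempty upper set of $\naturals$ has a unique minimum, so in particular a finite (size-one) set of minimal elements. More importantly, I would note the combinatorial fact that any infinite sequence in $\naturals$ contains an infinite weakly increasing subsequence---this is elementary, since either some value in the sequence occurs infinitely often (giving a constant, hence weakly increasing, subsequence) or else the sequence is unbounded and one may greedily select terms that exceed all previously chosen terms.

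Next, by iterating this one-dimensional fact coordinate-by-coordinate, I would show that any infinite sequence $\mec u_1,\mec u_2,\ldots$ in $\naturals^n$ admits an infinite subsequence $\mec u_{i_1},\mec u_{i_2},\ldots$ that is weakly increasing in every coordinate simultaneously, i.e.,
$$
\mec u_{i_1}\le \mec u_{i_2}\le \mec u_{i_3}\le \cdots
$$
in the product order. Indeed, apply the one-dimensional fact to the first-coordinate values to extract a subsequence weakly increasing in coordinate $1$; then apply it again to the second-coordinate values of this subsequence; after $n$ such extractions one obtains a subsequence weakly increasing in all coordinates simultaneously.

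With these tools in hand, I would conclude the lemma as follows. Suppose toward a contradiction that some upper set $U\subset \naturals^n$ has infinitely many minimal elements, and enumerate them as a sequence $\mec u_1,\mec u_2,\ldots$ of pairwise distinct vectors. Applying the previous paragraph, there exist indices $i<j$ with $\mec u_i \le \mec u_j$. Since $\mec u_j$ is a minimal element of $U$ and $\mec u_i \in U$ satisfies $\mec u_i \le \mec u_j$, minimality forces $\mec u_i=\mec u_j$, contradicting distinctness. Therefore the set of minimal elements of $U$ must be finite.

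There is no real obstacle to this argument---it is the standard proof of Dickson's lemma---so the only thing worth being careful about is the order in which the subsequence extractions are performed and the justification that each extraction preserves the weakly-increasing property established in the previous coordinates (which it does, because passing to a subsequence of a weakly increasing sequence leaves it weakly increasing). One could alternatively argue by induction on $n$ by fixing one coordinate at each value below the corresponding coordinate of any chosen minimal element and invoking the inductive hypothesis, but the Dickson-style proof above is more direct.
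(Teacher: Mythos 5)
Your proof is correct, but it takes a genuinely different route from the paper's primary argument. The paper encodes the upper set $U$ as a monomial ideal $I\subset\complex[x_1,\ldots,x_n]$ (the span of monomials $\mec x^{\mec u}$ with $\mec u\in U$), invokes Hilbert's Basis Theorem to obtain finitely many polynomial generators $p_1,\ldots,p_m$, and then observes that any monomial appearing in $\sum p_iq_i$ is divisible by some monomial appearing in a $p_i$; hence the finitely many exponent vectors of monomials of the $p_i$ dominate every element of $U$ from below, so a minimal element of $U$ must be one of them. Your proof is instead the classical combinatorial proof of Dickson's lemma: extract, coordinate by coordinate, an infinite subsequence of any putative infinite antichain of minimal elements that is weakly increasing in every coordinate, and derive a contradiction with minimality. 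Both are correct. Your argument is more elementary and self-contained (no commutative algebra), while the paper's is shorter once Hilbert's Basis Theorem is available. It is worth noting that the paper does mention, in a single sentence after its proof, an alternative ``direct'' proof by induction on $n$ via a product-of-posets lemma; that sketch is closer in spirit to yours, though still organized differently (inductive on the dimension rather than by subsequence extraction). One minor stylistic remark: for the contradiction you only need two comparable minimal elements, so the full weakly increasing infinite subsequence is slight overkill, but this does no harm.
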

\begin{proof}
Let $U\subset \naturals^n$ be an upper set.  Let $x_1,\ldots,x_n$
be $n$ indeterminates, and let
$I\subset \complex[x_1,\ldots,x_n]$ be the set of polynomials
that are linear combinations of monomials
$$
\mec x^{\mec u} = x_1^{u_1}\ldots x_n^{u_n}
$$
with $\mec u\in U$.  Since $U$ is an upper set, $I$ is an ideal
of the ring $\complex[x_1,\ldots,x_n]$;
by Hilbert's Basis Theorem,
$I$ is finitely generated by polynomials, $p_1,\ldots,p_m\in I$.
For any $i=1,\ldots,m$ and
$q_i\in \complex[x_1,\ldots,x_n]$, each monomial $\mec x^{\mec w}$
appearing in
$p_iq_i$ arises as the product of some monomial in $p_i$ and some 
monomial in $q_i$.
It follows that any monomial $\mec x^{\mec w}$ that appears in a sum
$$
p_1 q_1 + \cdots p_m q_m
$$
has a corresponding monomial $\mec x^{\mec u}$ that appears in one of
$p_1,\ldots,p_m$, with $\mec u\le\mec w$.
But for any $\mec w\in U$,
since $\mec x^{\mec w}\in I$, we have
$$
\mec x^{\mec w} = p_1 q_1 + \cdots p_m q_m,
$$
for some $q_1,\ldots,q_m\in\complex[x_1,\ldots,x_n]$,
and hence there is a monomial $\mec x^{\mec u}$ appearing in $p_1,\ldots,p_m$
for which $\mec u\le\mec w$.
Hence the finite set of $\mec u$ such that $\mec x^{\mec u}$ appears in
one of $p_1,\ldots,p_m$, is a set of certificates for $U$.
\end{proof}
One can alternatively prove the above lemma directly: clearly
it holds for $n=1$; one can then prove the more general
lemma that if $\cP_1,\cP_2$ are
posets where any upper set has a finite number of minimal
elements, then the same is true of $\cP_1\times\cP_2$.

\subsection{The Zeroth Order Coefficient}

In this section we make the following observations about algebraic
models.  First we need a simple lemma.

\begin{lemma}\label{le_pruned_cycle_or_pos_ord}
Let $S$ be a connected and pruned graph.
Then either $S$ is a cycle, or $S$ is of positive order.
Also, $\mu_1(S)>1$ iff $\chi(S)<0$.
\end{lemma}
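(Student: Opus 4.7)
The plan is to handle the two assertions in turn.

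For the order assertion, I would use a handshake argument. Let $h$ denote the number of half-loops of $S$; by the paper's degree convention, $\sum_{v \in V_S} \deg(v) = 2\#E_S - h$, since non-half-loop edges (including whole-loops) contribute $2$ to the degree total while half-loops contribute $1$. Pruning ($\deg v \ge 2$) then forces $2\#E_S - h \ge 2\#V_S$, i.e., $\ord(S) = \#E_S - \#V_S \ge h/2 \ge 0$. If $\ord(S) = 0$, equality gives $h = 0$ and every vertex of degree exactly $2$; combined with connectedness this forces $S$ to be a cycle (with the single-vertex whole-loop counted as a degenerate cycle of length one).

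For the spectral assertion, I first compute $\chi(S) = \#V_S - \#\Edir_S/2 = -\ord(S) + h/2$, so Part~1 yields $\chi(S) \le 0$ with equality iff every vertex of $S$ has degree exactly~$2$. A short structural argument then classifies connected pruned $S$ with every vertex of degree~$2$: they are cycles, single vertices carrying two half-loops, or paths with a half-loop at each endpoint (``chains of half-loops''). In each of these $H_S$ acts as a permutation on $\Edir_S$, so $\mu_1(S) = 1$. This establishes the direction $\chi(S) = 0 \Rightarrow \mu_1(S) = 1$.

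The remaining direction $\chi(S) < 0 \Rightarrow \mu_1(S) > 1$ is the main obstacle and would proceed via the oriented line graph $\Line(S)$, whose adjacency matrix is $H_S$. The hypothesis $\chi(S) < 0$ rules out the boundary case above, so some vertex of $S$ has degree at least $3$; hence some $e \in \Edir_S$ admits at least two non-backtracking continuations. Pruning and connectedness guarantee that the relevant strongly connected component of $\Line(S)$ covers all of $S$, and a Perron-Frobenius argument applied to that component then yields $\mu_1(S) > 1$. A concrete alternative, avoiding any delicate strong-connectivity bookkeeping in the presence of half-loops, is to build exponentially many SNBC walks explicitly: at a degree-$\ge 3$ vertex $v_0$ select two closed non-backtracking loops $\gamma_1, \gamma_2$ through $v_0$ whose endpoints permit concatenation of the $\gamma_i$ in either order without backtracking; then free concatenations produce at least $2^{k/L}$ distinct SNBC walks of length $kL$ for some fixed $L = L(S)$, so $\mu_1(S) \ge 2^{1/L} > 1$.
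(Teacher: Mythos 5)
Your proof tracks the paper's quite closely in both parts: the same handshake count gives $\ord(S)\ge 0$ with the boundary case forcing all degrees equal to $2$ (the paper writes the identity via a modified degree $\deg'$ in which a half-loop counts $2$, but the arithmetic is the same), the same short list of connected pruned graphs with all degrees $2$ settles the direction $\chi(S)=0\Rightarrow\mu_1(S)=1$ (your observation that $H_S$ is a permutation matrix in each of these cases is actually a bit cleaner than the paper's ``at most two SNBC walks of a given length''), and $\chi(S)<0\Rightarrow\mu_1(S)>1$ is handled, as in the paper, by building exponentially many SNBC walks through a vertex of degree $\ge 3$.

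One caution on your ``concrete alternative'' for the last step: you assert without argument the existence of two closed NB loops $\gamma_1,\gamma_2$ at $v_0$ whose \emph{free} concatenations (including $\gamma_i\gamma_i$) are all non-backtracking, and your count of $2^{k/L}$ walks of a single length $kL$ tacitly assumes $\gamma_1$ and $\gamma_2$ have equal length. Neither is automatic, and both require a short argument. The paper's construction avoids these issues: it fixes, for each admissible pair $(e,e')$ of directed edges at $v$, a short NB walk $w_{e,e'}$ from $e$ to $e'$, and then counts the number of admissible next choices when stringing together $w_{e_1,e_2}w_{e_3,e_4}\cdots$ --- there are at least $3$ choices initially and at least $2$ thereafter, which gives the exponential lower bound without ever needing two particular loops of equal length to be mutually (and self-) compatible.
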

We will specifically need the first statement of the lemma in this
section.
The second statement in the above lemma is used in a number of
places in this series of articles,
to interchangeably use the conditions $\mu_1(S)>1$ and $\chi(S)<0$
for pruned, connected graphs.
Since both statements are based on a similar principle, we
prove both of them here.
\begin{proof}
To prove the first claim, note that
$$
\ord(S) = (1/2) \sum_{v\in V_S} \bigl( \deg'_S(v)-2 \bigr),
$$
where $\deg'$ is the usual degree of a vertex except that
each half-loop about a vertex, $v$, contributes
$2$ to its degree (instead of $1$ used for the usual degree and
Euler characteristic).
Since $S$ is pruned, we have $\deg'_S(v)\ge 2$ for each $v\in V_S$,
and hence $\ord(S)=0$ iff for $\deg'_S(v)=2$ for all $v\in V_S$.
It follows that each vertex of $V_S$
is either (1) incident upon two edges that 
are not self-loops, or (2) incident upon exactly one self-loop.
In case any vertex is incident upon a self-loop, then
the graph has one vertex and must be incident upon a whole-loop
(for otherwise the vertex would be of degree one);
hence $S$ is a cycle of length $1$.
Otherwise all vertices are of case (1), and therefore $S$ is a cycle.

To prove the second claim, we similarly note that
$$
\chi(S) = (1/2) \sum_{v\in V_S} \bigl( \deg_S(v)-2 \bigr).
$$
So if $S$ is pruned and $\chi(S)\le 0$, then 
$\deg_S(v)=2$ for
all $v$; it follows that $\chi(S)\le 0$ implies that
$S$ is either (1) a cycle, (2) a path
where each endpoint is incident upon an additional half-loop, 
(3) a single vertex incident upon a single whole-loop, 
or (4) a single vertex incident upon two half-loops.
In all these cases we easily check that $\mu_1(S)\le 1$,
since we easily see that there are at most two SNBC walks
of a given length about any vertex of $S$.

If $\chi(S)<0$, then some vertex of $S$ has degree $3$, say $v$.
Let us show that $\mu_1(S)>1$.

First we
claim that for any $e\in \Edir_S$ with $t_Se=v$, there is a
non-backtracking walk, $w$, about $v$ beginning with $e$: to see this,
we keep walking in a non-backtracking fashion, which we can do since
each vertex is of degree two, until we reach a repeated vertex; then 
return to $v$.  
Similarly, for any $e$ with $h_Se=v$, there is a non-backtracking
walk about $v$ ending in $e$.
So consider all pairs $(e,e')$ such that $t_Se=h_Se'=v$ and that
there exists a non-backtracking walk beginning in $e$ and ending in $e'$;
for each such pair, choose such a non-backtracking walk, $w_{e,e'}$; let
$m$ be an upper bound on the lengths of all these walks.
We now claim that for any $k$, the number of SNBC walks about $v$
of length at most $km$ is at least $2^{k-2}3$.
To see this, consider which walks of the form
$$
w_{e_1,e_2} w_{e_3,e_4} \ldots w_{e_{2k-1},e_{2k}}
$$
are SNBC: 
we may choose $e_1$ to be any of at least $3$ edges leaving $v$;
choosing some $e_2$ such that $w_{e_1,e_2}$ exists, we choose
$e_3$ to be any of at least $2$ edges leaving $v$ not equal to
$\iota e_2$; for $i=2,\ldots,k-2$ we similarly choose $e_{2i+1}$
to be an edge leaving $v$ not equal to $\iota e_{2i}$, of which there
are at least $2$ choices; finally
we choose $e_{2k-1}$ to be an edge leaving $v$ such that $w_{e_{2k-1},e_{2k}}$
exists and $e_{2k-1}\ne\iota e_{2k-2}$ and $e_{2k}\ne\iota e_1$,
of which there must be at least one choice.
Hence the trace of $H_S^{km}$ must be at least $2^{k-2}3$, and hence
$$
2^{k-2}3 \le \Trace(H_S^{km}) \le (\#\Edir_S)\mu_1^{km}(S);
$$
taking $k\to\infty$ we have
$\mu_1(S)\ge 2^{1/m}>1$.
\end{proof}

We easily see that no vertex of an SNBC walk can be of degree zero or one,
and hence we conclude the following corollary.

\begin{corollary}
Let $T^\og$ be the homotopy type of an SNBC walk.  Then either
$T^\og$ is the homotopy type of a cycle
(i.e., $T$ is the bouquet of a single whole-loop),
or else $\ord(T)\ge 1$.
\end{corollary}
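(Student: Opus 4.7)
The plan is to deduce this corollary directly from Lemma~\ref{le_pruned_cycle_or_pos_ord}, together with the simple fact that bead suppression preserves the order of a graph. Let $w$ be an SNBC walk with homotopy type $T^\og$, and let $S = \ViSu(w)$ be its visited subgraph, so that $T$ is obtained from $S$ by suppressing some proper bead subset $V' \subset V_S$. I would first recall two facts already noted in Section~\ref{se_defs_review}: (i) $S$ is connected, since it is the visited subgraph of a single walk; and (ii) $S$ is pruned, since $w$ is SNBC---explicitly, every vertex of $S$ is incident upon either a whole-loop or two distinct edges (a length-one walk around a half-loop is excluded from being SNBC by definition).

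Having established that $S$ is connected and pruned, Lemma~\ref{le_pruned_cycle_or_pos_ord} gives a dichotomy: either $S$ is a cycle, or $\ord(S) \ge 1$. The next step in my plan is to observe that bead suppression preserves order. Indeed, suppressing a single bead $v$ removes one vertex and replaces its two incident edges with a single beaded-path edge, so $\#E_{S/\{v\}} - \#V_{S/\{v\}} = \#E_S - \#V_S$; iterating gives $\ord(S/V') = \ord(S)$ for any proper bead subset $V'$, and in particular $\ord(T) = \ord(S)$.

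In the case $\ord(S) \ge 1$, we immediately conclude $\ord(T) \ge 1$ and are done. In the case where $S$ is a cycle, either $S$ is a single vertex supporting a whole-loop (in which case $T = S$ is already the bouquet of a whole-loop, since there are no beads to suppress), or $S$ is a longer cycle in which every vertex has degree two and is incident upon no self-loop, hence every vertex is a bead; in this latter case the proper bead subset $V'$ used in the reduction consists of all beads except the initial/terminal vertex of the closed walk $w$, and suppressing $V'$ yields a graph with exactly one vertex and one whole-loop, i.e.\ the bouquet of a single whole-loop.

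The argument is essentially a routine unpacking of the definitions together with Lemma~\ref{le_pruned_cycle_or_pos_ord}; the only point requiring a bit of care is verifying that the reduction of a cycle walk really is the single-whole-loop bouquet rather than the empty graph, which comes down to the fact that the definition of reduction excludes the first/last vertex of $w$ from the suppressed set, ensuring $V'$ is a proper bead subset as required.
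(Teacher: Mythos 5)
Your proposal is correct and takes essentially the same route as the paper: the paper's argument is the single observation that every vertex of the visited subgraph of an SNBC walk has degree at least two, so the visited subgraph is connected and pruned, and the corollary follows from Lemma~\ref{le_pruned_cycle_or_pos_ord} together with the (unstated but immediate) fact that bead suppression preserves order. You have simply spelled out the details---in particular the case analysis confirming that when $S$ is a cycle its reduction really is a one-vertex whole-loop bouquet---which the paper leaves implicit.
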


\begin{corollary}\label{co_zeroth_order_coefs}
Let $B$ be a graph and ${\cC_n(B)}_{n\in N}$ an algebraic model over $B$.
Then there is a constant, $C$, and
a function $g$ of growth $\mu_1(B)$ such that
for $1\le k\le C/n^{1/2}$ we have
\begin{align*}
\EE_{G\in\cC_n(B)}[\snbc_0(G,k)] & = c_0(k) + g(k)O(1)/n, \\
\EE_{G\in\cC_n(B)}[\snbc_{\ge 1}(G,k)] & = g(k)O(1)/n ,
\end{align*}
where $c_0(k)$ is given in \eqref{eq_give_c_zero_k}.
\end{corollary}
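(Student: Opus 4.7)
The plan is to handle the two statements separately, using the first as a direct consequence of the algebraic definition and the second by decomposing the count into cycle subgraph counts and applying the algebraic expansion termwise.

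For the second equation, the bound $\EE[\snbc_{\ge 1}(G,k)] = g(k)O(1)/n$ is immediate from condition (1) of the definition of (strongly) algebraic model, namely \eqref{eq_algebraic_order_bound} applied with $r=1$. The only adjustment needed is to choose $C$ large enough so that $k\le n^{1/2}/C$ (which I assume is the intended range, rather than $C/n^{1/2}$) forces $k \le n/4$.

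The first equation requires more work. The key structural observation is that if $w$ is an SNBC walk with $\ord(\ViSu(w))=0$, then $\ViSu(w)$ is connected and pruned, so by the first statement of Lemma~\ref{le_pruned_cycle_or_pos_ord} it is a cycle, of some length $k'$ that necessarily divides $k$. Moreover, the first-encountered ordering on $\ViSu^\og_\Bg(w)$ pins down both the starting vertex and the orientation of the walk, so the map $w\mapsto \ViSu^\og_\Bg(w)$ is a bijection between SNBC walks of length $k$ with $\ord(\ViSu)=0$ and ordered cycle $B$-subgraphs of $G$ of length $k'\mid k$. Hence
$$
\snbc_0(G,k) \;=\; \sum_{k'\mid k}\;\sum_{[S_\Bg^\og]}\#\bigl([S_\Bg^\og]\cap G\bigr),
$$
where the inner sum is over isomorphism classes of ordered étale cycle $B$-graphs of length $k'$.

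Next I would apply the algebraic expansion \eqref{eq_expansion_S} with $r=1$ to each term: $\EE[\#([S_\Bg^\og]\cap G)] = c_0(S_\Bg) + O(g_0(k')/n)$, where by \eqref{eq_zero_S_in_G} and condition~(3) we have $c_0(S_\Bg)=1$ when $S_\Bg$ occurs in $\cC_n(B)$ and $c_0(S_\Bg)=0$ otherwise, and where $g_0$ has growth $1$ with a universal $O(1)$-bound. The key identification is that ordered étale cycle $B$-graphs of length $k'$ correspond bijectively to SNBC walks in $B$ of length $k'$ (specifying a starting directed edge and then reading off directed edges as one traverses the cycle; the étale condition at each vertex is precisely non-backtracking), so there are exactly $\Trace(H_B^{k'})$ such classes.

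Summing, the leading term becomes the number of \emph{occurring} ordered cycle classes of length $k'\mid k$, giving $\sum_{k'\mid k}\bigl(\Trace(H_B^{k'})-h_\mathrm{ord}(k')\bigr)$, where $h_\mathrm{ord}(k')$ counts non-occurring ordered cycle classes of length $k'$. The $1/n$ error, summed over the $\Trace(H_B^{k'})$ classes times $g_0(k')$ of growth $1$, is bounded by $O(k\mu_1(B)^k)/n$, which has the desired form $g(k)O(1)/n$ with $g$ of growth $\mu_1(B)$. The main obstacle is accounting for the non-occurring-cycle correction $\sum_{k'\mid k}h_\mathrm{ord}(k')$, which has no $1/n$ factor: since $h_\mathrm{ord}(k')\le 2k'\,h(k')$ and $h$ has growth $(d-1)^{1/2}=\mu_1^{1/2}(B)$ by condition~(1), this correction is of growth $\mu_1^{1/2}(B)$, which is strictly smaller than $\mu_1(B)$; thus it can be folded into $c_0(k)$ (matching the looser form $\sum_{k'\mid k}\Trace(H_B^{k'})$ in \eqref{eq_give_c_zero_k}, since the cycles that do not occur contribute terms of growth at most $\mu_1^{1/2}(B)$ which are negligible compared with the $\mu_1(B)$-growth function absorbed into the remainder).
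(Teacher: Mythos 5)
Your proof follows essentially the same route as the paper's: the second bound is read directly off the order-$\ge 1$ part of condition~(1)/\eqref{eq_algebraic_order_bound} of the algebraic-model definition; and for the first equation you decompose $\snbc_0(G,k)$ by divisor $k'\mid k$, identify SNBC walks of length $k$ with order-$0$ visited subgraph with ordered cycle $B$-subgraphs of length $k'$, identify the ordered \'etale cycle $B$-graph classes of length $k'$ with SNBC walks in $B$ of length $k'$ (hence with $\Trace(H_B^{k'})$ of them), and apply the $r=1$ expansion in the algebraic-model definition termwise. You correctly flagged the typo in the stated range for $k$.

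You also, unlike the paper's terse proof, notice the genuine subtle point: the paper writes ``each such $S_\Bg^\og$ has $c_0(S_\Bg)=1$,'' which is only true for the cycle classes that actually \emph{occur} in $\cC_n(B)$; the non-occurring ones contribute $c_0=0$. So the exact zeroth coefficient is $\sum_{k'\mid k}\bigl(\Trace(H_B^{k'})-h_{\rm ord}(k')\bigr)$, not $\sum_{k'\mid k}\Trace(H_B^{k'})$. However, your attempted reconciliation does not work: a correction term with no $1/n$ factor cannot be absorbed into a remainder of the form $g(k)O(1)/n$ (fix $k$ and let $n\to\infty$: the left side is a nonzero constant while the right side tends to $0$). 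You cannot simultaneously ``fold it into $c_0(k)$'' (which changes $c_0(k)$) and claim $c_0(k)$ is given by \eqref{eq_give_c_zero_k}. The honest conclusion is that the corollary's statement is slightly imprecise as written and the correct $c_0(k)$ should carry a subtracted term of growth $\mu_1^{1/2}(B)$; indeed this is exactly what \eqref{eq_c_0} in Theorem~\ref{th_main_tech_result_for_cert} records (the ``$-h(k)$''). Your proof is more careful than the paper's on this point, but the concluding parenthetical should be replaced by the observation that \eqref{eq_give_c_zero_k} needs the $h$-correction of \eqref{eq_c_0}.
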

\begin{proof}
The second equation follows from \eqref{eq_algebraic_order_bound}
(in the definition of strongly algebraic and algebraic).
If $S_\Bg^\og$ is a $B$-graph than is the visited subgraph of an SNBC
walk, $w$, of order $0$, then $S_\Bg^\og$ is necessarily of the homotopy
type of a cycle, and therefore the length $k'$, of $S$, must divide
$k$; furthermore, the directed edges over $B$ that lie over $w$ in
the first encountered ordering yield an SNBC walk in $B$ of length $k'$.
Conversely, every SNBC walk in $B$ of length $k'$ gives rise to an
ordered graph $S_\Bg^\og$, unique up to an isomorphism of $B$-graphs.
Since
each such $S_\Bg^\og$ has $c_0(S_\Bg)=1$, the first equation
of the corollary follows, since $\snbc(B,k')=\Trace(H_B^{k'})$.
\end{proof}

\subsection{Conclusion of The Proof of 
Theorem~\ref{th_main_tech_result_for_cert}}

\begin{proof}[Proof of Theorem~\ref{th_main_tech_result_for_cert}]
The elements of ${\rm CERT}_{<\nu,<r'}(G;k)$ are walks of a finite number
of homotopy types, $T^\og$.  
Hence it suffices to prove that for any fixed $T^\og$
$$
f(k,n) \eqdef 
\EE_{G\in\cC_n(B)}\Bigl[ 
\#\bigl( {\rm CERT}_{<\nu,<r'}(G,k) \cap \SNBC(T^\og;G,k) \bigr) 
\Bigr]
$$
has a $(B,\nu)$-asymptotic expansion to any order $r$, and that
its zero-th coefficient, $c_0(k)$ is given by
\begin{enumerate}
\item the formula \eqref{eq_give_c_zero_k} when
$T^\og$ is the homotopy type of a cycle (i.e., $T$ is the
bouquet of a single whole-loop), and
\item $c_0(k)=0$ otherwise.
\end{enumerate}

So fix an ordered graph, $T^\og$, and
consider the set
$$
U = \{ \mec k \from E_T\to\naturals \ | \ \VLG(T^\og,\mec k)<\nu \}  
$$
which is clearly an upper set by Lemma~\ref{le_vlg_compare}; according to
Lemma~\ref{le_finite_upper_mins}, this upper set has a finite number of
minimal elements
$$
\bec\xi^1,\ldots,\bec\xi^s \ .
$$

[Intuitively we think of each $\bec\xi_j$ as a {\em certificate} for
belonging in $U$, in that the condition $\mec k\ge\bec\xi_j$ certifies
(or guarantees)
that $\mec k\in U$.  The usefulness of the certified trace is due,
in part, to the fact that the condition $\VLG(T^\og,\mec k)<\nu$
is equivalent to being certified so by one of finitely many certificates.
This is why we use the name {\em certified trace}.]

For each $M\subset [s]$, we have
$$
\bigcap_{m\in M} \{ \mec k \ |\  \mec k\ge \bec\xi^m \} = 
\{ \mec k \ |\  \mec k\ge \bec\xi^M \}
$$
where
$$
\bec\xi^M \eqdef \max_{m\in M}(\bec\xi^m)
$$
is the component-wise maximum.
By inclusion/exclusion we have
\begin{equation}\label{eq_inclusion_exclusion_M}
\#\bigl( {\rm CERT}_{<\nu,<r'}(G,k) \cap \SNBC(T^{\og};G,k) \bigr)
=
\sum_{M\subset [s],\ M\ne\emptyset} (-1)^{1+(\#M)}
\snbc(T^{\og},\ge \bec\xi^M;G,k).
\end{equation} 
By Theorem~\ref{th_main_certified_pairs}, we have that for any $r>0$ and
any $M\subset [s]$ with $M\ne \emptyset$,
$$
f_M(k,n)  \eqdef
\EE_{G\in\cC_n(B)}[\snbc(T^{\og},\ge \bec\xi^M;G,k)]
$$
has a $(B,\nu_M)$-bounded expansion to order $r$ with
$$
\nu_M =  
\max\Bigl( \mu_1^{1/2}(B), 
\mu_1\bigl(\VLG(T,\bec\xi^M)\bigr) \Bigr).
$$
Since $U$ is an upper set, we have $\mec\xi^M\in U$ for all $M\ne \emptyset$,
and hence $\nu_M\le\nu$.
It follows that each $f_M(k,n)$ has an expansion that satisfies the
conditions in the statement of the theorem.  
Since $\cC_n(B)$ is a finite probability space,
we may take expected values in \eqref{eq_inclusion_exclusion_M} to conclude
that
$$
\EE_{G\in\cC_n(B)}\Bigl[
\#\bigl( {\rm CERT}_{<\nu,<r'}(G,k) \cap \SNBC(T^{\og};G,k) \bigr)
\Bigr]
=
\sum_{M\subset [s],\ M\ne\emptyset} (-1)^{1+(\#M)} f_M(k,n) \ ;
$$
since the RHS of this equation is a finite sum of functions with 
$(B,\nu)$-expansions to any order $r$,
so is each function
$$
f(k,n) = \EE_{G\in\cC_n(B)}
\Bigl[ 
\#\bigl( {\rm CERT}_{<\nu,<r'}(G,k) \cap \SNBC(T^{\og};G,k) \bigr)
\Bigr] \ .
$$
Summing over all the types, $T^{\og}$, of walks of order less than
$r'$, we conclude the same for
$$
f(k,n) = \EE_{G\in\cC_n(B)}
\bigl[ {\rm cert}_{<\nu,<r'}(G,k) \bigr],
$$
which proves the theorem for the expected value of the
strictly-certified trace in \eqref{eq_cert_exp_thm}.

It remains to compute $c_0(k)$.  For $T$ of order one or greater,
$$
\EE_{G\in\cC_n(B)}\Bigl[
\#\bigl( {\rm CERT}_{<\nu,<r'}(G,k) \cap \SNBC(T^{\og};G,k) \bigr)
\Bigr] \le
\EE_{G\in\cC_n(B)}\bigl[ \snbc_{\ge 1}(G,k) \bigr]
$$
which is bounded by $g(k)O(1)/n$ by Corollary~\ref{co_zeroth_order_coefs}.
On the other hand, any graph that is the homotopy type of a cycle
has $\mu_1=1<\nu$, and therefore if $T^\og$ is the homotopy type of a
cycle, we have
$$
\#\bigl( {\rm CERT}_{<\nu,<r'}(G,k) \cap \SNBC(T^{\og};G,k) \bigr)
=\snbc_0(G,k)
$$
for $k\ge 1$.  Hence for this $T$ we have
$$
\EE_{G\in\cC_n(B)}\Bigl[
\#\bigl( {\rm CERT}_{<\nu,<r'}(G,k) \cap \SNBC(T^{\og};G,k) \bigr)
\Bigr] =
\EE_{G\in\cC_n(B)}\bigl[ \snbc_0(G,k) \bigr]
$$
whose zeroth order coefficient, $c_0(k)$, is given
by Corollary~\ref{co_zeroth_order_coefs} to be
as in \eqref{eq_give_c_zero_k}.

This concludes the proof for strongly-certified traces.
The proof for weakly-certified traces is the same, with $U$ replaced by
$$
U' = \{ \mec k \from E_T\to\naturals \ | \ \VLG(T^\og,\mec k)\le\nu \} \ .
$$
This may change the set of certificates, i.e., of minimal elements,
$\bec\xi^1,\ldots,\bec\xi^s$, but everything else in the proof remains
the same.
\end{proof}

\section{Finiteness of Minimal Tangles}
\label{se_finiteness_min_tang}

We show that there are, up to isomorphism, only a finite number of minimal
$(\ge\nu,<r)$-tangles for any real $\nu>0$ and integer $r>0$.
This fact is not generally true of $(>\nu,<r)$-tangles.
This fact is essentially Lemma~9.2 of \cite{friedman_alon}, stated 
in the terms we use in this article.

\begin{definition}\label{de_minimal_tangle}
Let $\nu>0$ be a real number and $r>0$ an integer.  
We say that a graph, $\psi$, is
a {\em minimal} $(\ge\nu,<r)$-tangle if $\psi$ is a 
$(\ge\nu,<r)$-tangle, i.e.,
$\mu_1(\psi)\ge\nu$ and $\ord(\psi)<r$,  but all
of proper subgraphs of $\psi$ are not $(\ge\nu,<r)$-tangles.
\end{definition}

Let us recall Lemma~9.2 of \cite{friedman_alon} and its simple proof
based on the following ``continuity lemma'' regarding VLG's.

\begin{lemma}\label{le_shannon_cont}
Let $T$ be a fixed graph, and $E',E''$ partition of $E_T$ into
two sets.  Let $\mec k^1,\mec k^2,\ldots$ a sequence
of elements of $\naturals^{E_T}$ such that
\begin{enumerate}
\item $\mec k^i(e')$ is independent of $i$ for $e'\in E'$, and
\item $\mec k^i(e'')\to\infty$ as $i\to\infty$ for each $e''\in E''$.
\end{enumerate}
Let $T'$ be the graph obtained from $T$ by discarding $E''$ from $E_T$,
and let $\mec k'$ be the restriction of $\mec k^i$ to $E'$.
Say that there is a $\nu>1$ such that for all $i$ we have
\begin{equation}\label{eq_bounded_away_from_one}
\mu_1\bigl( \VLG(T,\mec k^i) \bigr) \ge \nu.
\end{equation} 
Then
$$
\lim_{i\to\infty}
\mu_1\bigl( \VLG(T,\mec k^i) \bigr)
=
\mu_1\bigl( \VLG(T',\mec k') \bigr) .
$$
\end{lemma}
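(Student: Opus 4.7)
The plan is to establish the two inequalities $\liminf_{i\to\infty}\mu_1(\VLG(T,\mec k^i)) \ge \mu_1(\VLG(T',\mec k'))$ and $\limsup_{i\to\infty}\mu_1(\VLG(T,\mec k^i)) \le \mu_1(\VLG(T',\mec k'))$ separately. First I would note, using Lemma~\ref{le_vlg_compare} together with the bound $\mu_1(\VLG(T,\mec k^i)) \le \mu_1(\VLG(T,\mec k_{\min}))$ with $\mec k_{\min}(e)\eqdef \min_i k^i(e)<\infty$, that $\mu_i\eqdef \mu_1(\VLG(T,\mec k^i))$ is bounded. By a diagonal extraction one may pass to a subsequence along which $\mec k^i$ is componentwise non-decreasing, and then Lemma~\ref{le_vlg_compare} makes $\mu_i$ non-increasing with a limit $\mu^*\ge\nu>1$. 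Since any subsequence of $\{\mu_i\}$ admits such a further extraction, it suffices to show every subsequential limit equals $\mu_1(\VLG(T',\mec k'))$.

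The lower bound is immediate: $\VLG(T',\mec k')$ is a subgraph of $\VLG(T,\mec k^i)$, since the latter is obtained from the former by gluing in the additional paths corresponding to $E''$. Every closed non-backtracking walk in $\VLG(T',\mec k')$ therefore lifts to one in $\VLG(T,\mec k^i)$, so $\Tr(H_{\VLG(T',\mec k')}^m)\le \Tr(H_{\VLG(T,\mec k^i)}^m)$ for all $m$; taking $m$-th roots and $m\to\infty$ yields $\mu_1(\VLG(T',\mec k'))\le\mu_i$, and letting $i\to\infty$ gives $\mu_1(\VLG(T',\mec k'))\le\mu^*$.

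For the upper bound I plan to invoke Shannon's algorithm (cf.\ equation~(12) and Theorem~3.5 of \cite{friedman_alon}), which provides a finite-dimensional matrix $M(z;\mec k)$ indexed by $\Edir_T$ whose $(e_1,e_2)$-entry carries the factor $z^{k(e_2)}$ times an indicator that $(e_1,e_2)$ is non-backtracking in $T$, and for which $\mu_1(\VLG(T,\mec k))=1/z^*$ with $z^*$ the smallest positive root of $\det(I-M(z;\mec k))=0$. Setting $z=1/\mu$, the columns of $M(1/\mu_i;\mec k^i)$ indexed by $e''\in E''$ carry the factor $\mu_i^{-k^i(e'')}$, which tends to $0$ as $i\to\infty$ (since $\mu_i\ge\nu>1$ and $k^i(e'')\to\infty$), while the columns indexed by $e'\in E'$ are unchanged because $k^i(e')=k'(e')$. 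The limit matrix is block upper-triangular with zero columns on $E''$, so its determinant equals $\det(I-M_{T',\mec k'}(1/\mu^*))$; this uses the crucial observation that the non-backtracking relations among $E'$-edges in $T$ and in $T'$ coincide. Continuity of the determinantal condition then forces $\mu^*$ to be a positive root of the Shannon equation for $\VLG(T',\mec k')$, and Perron--Frobenius---together with the already established lower bound $\mu^*\ge\mu_1(\VLG(T',\mec k'))$---identifies $\mu^*$ as the Perron root, giving $\mu^*=\mu_1(\VLG(T',\mec k'))$.

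The hard part will be verifying the exact form of the Shannon reduction so that its limit produces precisely the Shannon matrix of $\VLG(T',\mec k')$ and no spurious ``ghost'' root; in particular one must check that the off-diagonal and normalization factors in \cite{friedman_alon}, equation~(12) restrict correctly to $E'$. An alternative approach I would pursue if the symbolic identification proved unwieldy is a direct combinatorial estimate: decompose closed non-backtracking walks of length $m$ in $\VLG(T,\mec k^i)$ into those confined to $\VLG(T',\mec k')$ and those that traverse at least one $E''$-path (necessarily of length $\ge \min_{e''}k^i(e'')\to\infty$), show the second contribution is negligible relative to $(\mu^*)^m$ after fixing $\epsilon>0$, and extract $\mu^*\le\mu_1(\VLG(T',\mec k'))+\epsilon$ by taking $m\to\infty$ and then $i\to\infty$ in the appropriate order, exploiting the hypothesis $\mu^*>1$ to suppress the long-path contributions exponentially.
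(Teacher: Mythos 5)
Your proposal is correct and follows essentially the same route as the paper's own proof: Shannon's algorithm expressing $\mu_1$ as the reciprocal of the smallest positive root of $\det(I-M(z))=0$, a block decomposition of $M(z)$ according to the $E'$/$E''$ partition of directed edges, passage to the limit of the matrices as the $E''$-columns vanish (using $\mu^i\ge\nu>1$ so $z_i\le 1/\nu<1$), continuity of the determinant to conclude $1/\mu^*$ is a positive root of the $T'$-Shannon equation, and the subgraph inclusion $\VLG(T',\mec k')\subset\VLG(T,\mec k^i)$ for the reverse inequality. Your diagonal-extraction argument for the existence of the limit, and the final identification via ``$1/\mu^*$ is a positive root $\le z_0$ hence $=z_0$'' (you call this Perron--Frobenius, but it is really just minimality of $z_0$ among positive roots), mirror what the paper does; the paper additionally spells out the check that the $E'\times E'$ block of the $T$-Shannon matrix coincides with the full $T'$-Shannon matrix, which you correctly flag as the ``crucial observation'' and would need to carry out for the argument to close.
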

We use 
{\em Shannon's algorithm}, as in
\cite{friedman_alon}, Theorem~3.6 and its proof;
we correct a minor error there: it is necessary that
\eqref{eq_bounded_away_from_one} hold with $\nu>1$, since if
$T$ has one vertex, one whole-loop, and
$E'=\emptyset$, then $\mu_1\bigl( \VLG(T,\mec k^i) \bigr) = 1$ for all
$i$, but $\mu_1\bigl( \VLG(T',\mec k') \bigr)=0$.
\begin{proof}
For any $\mec k\in\naturals^{E_T}$, 
let $M_{\mec k}(z)$ be the square matrix indexed on $\Edir_T$ whose $(e,e')$
entry is $0$ if the corresponding of $H_T$ is zero (i.e., $e,e'$ are
not the directed edges of a non-backtracking walk of length $2$),
and otherwise this entry is $z^{k(e)}$.  Let us prove that
$\mu_1\bigl( \VLG(T,\mec k) \bigr)$ is the reciprocal of the smallest
positive root $z=z_{\mec k}$ of the polynomial equation
\begin{equation}\label{eq_M_z_det_zero}
\det\bigl( I - M(z) \bigr) = 0 , \quad M(z)=M_{\mec k}(z) .
\end{equation} 
To see this,
let $G=\VLG(T,\mec k)$, and let $G'=\Line(G)$ be the oriented line graph
of $G$; hence $V_{G'}=\Edir_G$, and for 
$e,e'\in V_{G'}=\Edir_G$ there is one or zero edges from
$e$ to $e'$ according to whether or not $e,e'$ are the directed edges
of a non-backtracking walk of length two.
For each directed edge of $T$, $e_T\in\Edir_T$, let $\tilde e_T\in V_{G'}$
denote the first directed edge in the (beaded) directed walk path in $G'$
corresponding to $e_T$; let $\tilde E$ be the union of all the $\tilde e_T$
with $e_T\in\Edir_T$.
Then we easily see that
$G'$ is the same graph as the variable-length graph on its subset
of vertices, $\tilde E$, where two vertices, $\tilde e_T,\tilde e_T'$
have either one or zero edges from $\tilde e_T$ to $\tilde e_T'$
iff they form a non-backtracking walk of length two in $\Edir_T$,
and, if so, the length of the edge is $k(e_T)$.
Hence $\mu_1(G)$, which equals the Perron-Frobenius eigenvalue 
of $G'$, is given by \eqref{eq_M_z_det_zero}
in view of 
{\em Shannon's algorithm} (e.g., Theorem~3.5 of \cite{friedman_alon}, but
see the much earlier references in Section~3.2 of \cite{friedman_alon}).

Next say that an element of $V_{\Line(T)}=\Edir_T$ {\em belongs to $E'$}
if its $\iota_T$ orbit, i.e., its corresponding edge, lies in $E'$,
and similarly for $E''$.  Then we may partition $\Edir_T$ into two
sets: those belonging to $E'$ and those belonging to $E''$, and this
gives a block representation of $M(z)$ as
$$
M_{\mec k^i}(z) = 
\begin{bmatrix} M_1(z) & M_{3,i}(z) \\ M_2(z) & M_{4,i}(z) 
\end{bmatrix},
$$
where $M_1(z),M_2(z)$ are blocks that are independent of $i$,
since the edge-lengths $\mec k^i$ are constant on directed edges belonging
to $E'$, and each non-zero entry of $M_{3,i}(z),M_{4,i}(z)$ is a power
of $z$ than tends to infinity as $i\to\infty$.
In view of the last paragraph,
we have that $\mu_1( \VLG(T,\mec k^i))$ is the
reciprocal of the smallest positive root $z_i$ to
$$
\det\bigl( I - M_{\mec k^i}(z) \bigr) = 0.
$$
On the other hand, the above paragraph also shows that 
$\mu_1(\VLG(T',\mec k'))$ is the reciprocal of the smallest positive root,
$z$, of
\begin{equation}\label{eq_M_one_det}
\det\bigl( I' - M_1(z) \bigr) = 0
\end{equation} 
where $I'$ is the identity matrix indexed on directed edges belonging
to $E'$; by convention, we allow $z=+\infty$ if 
\eqref{eq_M_one_det} has no positive roots, in which case
$\mu_1(\VLG(T',\mec k'))=0$.  It remains to prove that
\begin{equation}\label{eq_limit_equals_z_0}
\lim_{i\to\infty} z_i = z_0.
\end{equation} 

Let us first show that the above limit exists.
Since $z_i< 1/\nu$, the sequence $\{z_i\}$ is bounded above; let
$z_\infty $ be its least upper bound; clearly $z_\infty\le 1/\nu<1$.
By definition, there exist $i_1,i_2,\ldots$ such that
$z_{i_n}\to z_\infty$ as $n\to\infty$.
Let $i=i_a$ for a fixed $a\in\naturals$.
We have $\mec k^i\le \mec k^j$ for $j$ sufficiently large, and hence
then Lemma~\ref{le_vlg_compare} implies that for such $j$,
$1/z_i>1/z_j$, i.e., $z_i<z_j$.  Hence
$$
\liminf_{n\to\infty} z_j \ge z_i=z_{i_a}.
$$
Taking $a\to\infty$ shows that
$$
\liminf_{j\to\infty} z_j \ge z_\infty = \limsup_{n\to\infty} z_i.
$$
Hence the limit in \eqref{eq_limit_equals_z_0} exists.  
To prove the lemma it suffices to show that
$$
z_\infty \le z_0 \quad\mbox{and}\quad z_\infty \ge z_0.
$$

Since $\VLG(T',\mec k')$ is a subgraph of each $\VLG(T,\mec k^i)$, 
we have
$$
1/z_0 = \mu_1\bigl( \VLG(T',\mec k') \bigr) 
\le
\mu_1\bigl( \VLG(T,\mec k^i) \bigr)  = 1/z_i ,
$$
and hence, taking $i\to\infty$, $1/z_0 \le 1/z_\infty$.
Hence $z_\infty\le z_0$.
Now let us show $z_0 \le z_\infty$.

Since $z_i\le z_\infty\le 1/\nu < 1$ for each $i$, we have that 
$M_{3,i}(z),M_{4,i}(z)$ tend to zero as $i\to\infty$, and hence
$$
M_{\mec k^i}(z_i) =
\begin{bmatrix} M_1(z_i) & M_{3,i}(z_i) \\ M_2(z) & M_{4,i}(z_i)
\end{bmatrix}
\to
\begin{bmatrix} M_1(z_\infty) & 0 \\ M_2(z_\infty) & 0
\end{bmatrix}
$$
as $i\to\infty$; then \eqref{eq_M_z_det_zero} (and the continuity
of the determinant) implies that
$$
\det\left( 
\begin{bmatrix} I' & 0 \\ 0 & I''
\end{bmatrix}
-
\begin{bmatrix} M_1(z_\infty) & 0 \\ M_2(z_\infty) & 0
\end{bmatrix} \right)
= 0,
$$
where $I''$ is the block identity matrix corresponding to directed edges
belonging to $E''$.  Hence
$$
\det\left( 
\begin{bmatrix} I'- M_1(z_\infty) & 0 \\ - M_2(z_\infty) & I''
\end{bmatrix} \right) 
= 0
$$
and therefore
$$
\det \bigl( I'- M_1(z_\infty) \bigr) 
=
\det\left(
\begin{bmatrix} I'- M_1(z_\infty) & 0 \\ - M_2(z_\infty) & I''
\end{bmatrix} \right)
= 0.
$$
Hence $z_\infty$ is a positive root of 
\eqref{eq_M_one_det}.  Therefore $z_0\le z_\infty$.
\end{proof}

\ignore{\tiny
Now let us partition the edges of the oriented line graph, $\Line(T)$, of $T$,
into three sets, $E_1,E_2,E_3$:
say that an element of $V_{\Line(T)}=\Edir_T$ {\em belongs to $E'$}
if its $\iota_T$ orbit, i.e., its corresponding edge, lies in $E'$,
and similarly for $E''$; an edge
from $e_1$ to $e_2$ in $\Line(T)$ is placed in $E_1$ if 
$e_1,e_2$ both belong to $E'$,
placed in $E_2$ if $e_1$ belongs to $E'$ but $e_2$ belongs to $E''$,
and otherwise 
(i.e., if $e_1$ belongs to $E''$) in 
$E_3$.
In view of the last paragraph,
we have that $\mu_1( \VLG(T',\mec k'))$ is the
reciprocal of the smallest positive root $z_1$ to
$$
\det\bigl( I - M_1(z) \bigr) = 0.
$$
On the other hand, we claim that
$$
\det\bigl( I' - M_1(z) \bigr) = \det\bigl( I - M_1(z) \bigr) = 0.
$$
where $I'$ is the square matrix indexed on the subset of $\Edir_T$
belonging to $E'$ (these two determinants are equal
because $I-M_1$ is a $2\times 2$ block diagonal matrix indexed by 
directed edges belonging
to $E'$ versus $E''$, 
whose $E'\times E'$-block is $I'-M_1$ and whose $E''\times E''$-block is
the identity matrix of that dimension.
It suffices to prove that
$$
z_1 = \lim_{i\to\infty} z_{\mec k^i}
$$
Since $M(z)$ has all positive coefficients, the power series
$$
I + M(z) + M^2(z) + \cdots
$$
has larger coefficients (in any entry) than the power series
$$
I + M_1(z) + M_1^2(z) + \cdots
$$
It follows that $z_1 \ge z_{\mec k^i}$ for all $i$.  Hence it suffices
to show that
$$
z_0 \ge z_1, \quad\mbox{where}\quad
z_0 = \liminf_{i\to\infty} z_{\mec k^i}.
$$
By passing to a subsequence we may assume that
$$
z_0 = \lim_{i\to\infty} z_{\mec k^i},
$$
and
according to \eqref{eq_bounded_away_from_one} we have
$z_{\mec k^i}\le 1/\nu$ for all $i$, and hence
$z_0\le 1/\nu$.
The assumptions on $\mec k^i$ imply that $M_1(z),M_2(z)$ are
independent of $i$, and that the non-zero entries of $M_3(z)$
equal a positive integer power of $z$ that tends to infinity
as $i\to\infty$.  It follows that limit
$$
\lim_{i\to\infty} I - M(z_{\mec k^i})  = I - M_1(z_0) - M_2(z_0),
$$
and by continuity of the determinant we have
$$
\det\bigl( I - M_1(z_0) - M_2(z_0) \bigr) = 0.
$$
Since $M_2(z)M_1(z)=0$ (since $M_1$ only has nonzero columns corresponding
only to directed edges whose associated edge lies in $E'$, and $M_2$ has
nonzero rows corresponding only to $E''$).  Hence 
$I-M_1-M_2$ is a block triangular matrix whose diagonal elements
are only in $M_1$; hence for all $z$ we have
$$
\det\bigl( I - M_1(z) - M_2(z) \bigr) 
\det\bigl( I - M_1(z) \bigr) = 0.
$$
It follows that
$$
\det\bigl( I - M_1(z_0) \bigr) = 0.
$$
Therefore $z_0\ge z_1$, and therefore $z_0=z_1$.
}

\begin{theorem}\label{th_tangle_finite}
Let $\nu>1$ be a real number and $r\ge 0$ an integer.  The number
of (isomorphism classes of)
minimal $(\ge\nu,<r)$-tangles is finite.
\end{theorem}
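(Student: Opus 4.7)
The plan is to argue by contradiction: assume there are infinitely many isomorphism classes of minimal $(\ge\nu,<r)$-tangles. I would first restrict such tangles to a finite number of possible ``homotopy types'' (bead-suppressed graphs), then within a single homotopy type derive a contradiction from Lemma~\ref{le_shannon_cont}.

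First, I would verify that any minimal $(\ge\nu,<r)$-tangle $\psi$ is connected and pruned, with $\ord(\psi)\ge 1$. Disconnectedness fails minimality: a connected component $\psi_1$ carrying the Perron eigenvalue has $\mu_1(\psi_1)=\mu_1(\psi)\ge\nu$ and $\ord(\psi_1)\le\ord(\psi)<r$, giving a proper $(\ge\nu,<r)$-subtangle. A leaf (together with its incident edge) may be pruned without affecting $\mu_1$, since no SNBC walk uses an edge to a leaf, and without changing the order, again violating minimality. Since $\nu>1$, Lemma~\ref{le_pruned_cycle_or_pos_ord} then forces $\ord(\psi)\ge 1$, so $\psi$ is in particular not a cycle.

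Next, I would bound the homotopy type. Let $T$ be the graph obtained from $\psi$ by suppressing all beads (possible since $\psi$ is connected but not a cycle); then $\ord(T)=\ord(\psi)$ and every vertex of $T$ either has degree at least three, or has degree two while incident upon a self-loop. Using the identity
$$
\ord(T)=\tfrac12\sum_{v\in V_T}\bigl(\deg'(v)-2\bigr),
$$
where $\deg'$ counts each half-loop with multiplicity two (as in the proof of Lemma~\ref{le_pruned_cycle_or_pos_ord}), a short case check shows that every vertex of $T$ except one incident upon a single whole-loop and nothing else contributes at least $1$ to the sum. Connectedness of $T$ combined with $\ord(T)\ge 1$ excludes whole-loop-only vertices (any such vertex would be isolated from the rest of $T$), so $\#V_T\le 2\ord(T)<2r$, and consequently $\#E_T=\#V_T+\ord(T)<3r$. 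Thus there are only finitely many isomorphism classes of such $T$.

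By the pigeonhole principle, an infinite family of minimal tangles contains an infinite subfamily $\psi_i=\VLG(T,\mec k^i)$ of a common homotopy type $T$, with pairwise distinct edge-length assignments $\mec k^i\colon E_T\to\naturals$. Passing to a subsequence, I would partition $E_T=E'\sqcup E''$ so that $\mec k^i$ is constant on $E'$, equal to some $\mec k'$, and $\mec k^i(e'')\to\infty$ for each $e''\in E''$; since the $\psi_i$ are pairwise non-isomorphic, $E''\ne\emptyset$. Let $T'$ be $T$ with the edges of $E''$ removed. The hypothesis $\mu_1(\VLG(T,\mec k^i))\ge\nu>1$ permits an application of Lemma~\ref{le_shannon_cont}, yielding
$$
\mu_1\bigl(\VLG(T',\mec k')\bigr)=\lim_{i\to\infty}\mu_1\bigl(\VLG(T,\mec k^i)\bigr)\ge\nu.
$$
Since also $\ord(\VLG(T',\mec k'))=\ord(T)-\#E''<r$, the graph $\VLG(T',\mec k')$ is a $(\ge\nu,<r)$-tangle properly contained in every $\psi_i$, contradicting the minimality of $\psi_i$. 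The main obstacle is this limiting step, which is exactly the content of Lemma~\ref{le_shannon_cont}; it also explains why the analogous finiteness fails for $(>\nu,<r)$-tangles, where strict inequality can be lost in passing to the limit.
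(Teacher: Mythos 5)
Your proof is correct and follows essentially the same strategy as the paper's: reduce to a single homotopy type by pigeonhole, extract a subsequence in which each edge length is constant or tends to infinity, and then invoke Lemma~\ref{le_shannon_cont} to produce a proper sub-tangle contradicting minimality. The one genuine addition is your explicit argument that there are only finitely many homotopy types of order less than $r$ (bounding $\#V_T\le 2\ord(T)$ via the degree identity), a fact the paper takes for granted in a single clause; otherwise the two proofs are structurally identical, including the (shared, minor) omission of a remark that one should pass to the relevant pruned connected component of $\VLG(T',\mec k')$ before calling it a tangle.
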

\begin{proof}
Assume, to the contrary, that the theorem is false for some $r$; since
there are only finitely many homotopy types of order less than $r$,
then must exist an ordered graph,
$T^\og$,
and an infinite sequence of distinct minimal $\nu$-tangles
$\psi^1,\psi^2,\ldots$ of homotopy type $T^\og$ such that
$\mu_1(\psi^i)\ge \nu$.  Then we have $\psi^i=\VLG(T;\mec k^i)$
for a sequence of distinct vectors $\mec k^i\in\naturals^{E_T}$.
Let $E_T=\{e_1,\ldots,e_m\}$;
by passing to a subsequence of the $\mec k^i$, 
we may assume that 
either (1) $k^i(e_1)\to\infty$ as $i\to\infty$,
or that (2) the $k^i(e_1)$ are bounded, and hence---by passing to a
further subsequence---that $k^i(e_1)$ is independent of $i$;
we then repeat this process to show that for $j=2,\ldots,m$,
by passing to a subsequence we may assume that $k^i(e_j)\to\infty$
as $i\to\infty$ or $k^i(e_j)$ is independent of $i$.
At this point the $\mec k^i$ satisfy the
hypotheses of Lemma~\ref{le_shannon_cont}.
It follows that
$\VLG(T',\mec k')$ is a proper subgraph of $\VLG(T,\mec k^i)\isom\psi^i$
for all $i$, and $\mu_1(\VLG(T',k'))\ge \nu$,
contradicting the minimality of the $\psi^i$.
\end{proof}

We remark that the above lemma would be false for 
$(>\nu,<r)$-tangles, defined the same but with strict inequality between
$\mu_1$ and $\nu$, as was explained in a footnote in
Subsection~\ref{su_tangles}.

\section{Indicator Function Approximation}
\label{se_indicator}

In this section we develop some foundations regarding approximations
$I_r(\Psi,G)$
we shall use for 
the indicator function 
$$
\II_{{\rm Meets}(\Psi)}(G_\Bg)
$$
where $\Psi$ is a collection of $B$-graphs that satisfy slighter
milder assumptions
than those of Theorems~\ref{th_exp_ind} and \ref{th_exp_ind_cert}.
The results in this section are
adaptations of results of Section~9 of \cite{friedman_alon}.
In the first subsection we will state all the definitions and results
we will use in Sections~\ref{se_ind_cert_proof} and \ref{se_ind_proof};
the remaining subsections are devoted to their proofs.

\subsection{The Main Results}

\begin{definition}\label{de_derived_graphs}
Let $B$ be a graph, and let
$$
\Psi = \{ [\psi_\Bg^1],\ldots,[\psi_\Bg^m] \}
$$
be a finite set of isomorphism classes of $B$-graphs.
By {\em the set
derived $B$-graphs of $\Psi$}, denoted \( \Psi^+ \),
we mean the isomorphism classes $[\psi_\Bg]$ of $B$-graphs 
such that
$\psi$ can be written as the union of $B$-subgraphs each isomorphic
to some $\psi_\Bg^i$.
We use
$\Psi^+_{<r}$ to denote classes $[\psi_\Bg]$ of $\Psi^+$ with
$\ord(\psi)<r$.
If $G_\Bg$ is any $B$-graph, the {\em $\Psi$-image in $G$},
denoted, $\Psi^+\cap G_\Bg$, is the union of
all $B$-subgraphs of $G_\Bg$ that lie in some class, $[\psi_\Bg^i]$, of
$\Psi$ (clearly $\Psi^+\cap G_\Bg$ is largest subgraph of $G_\Bg$ that lies
in a class of $\Psi^+$); we use $\ord_\Psi(G)$ to denote
$\ord(\Psi^+\cap G_\Bg)$.
\end{definition}

\begin{definition}\label{de_injective_morphisms}
Let $B$ be a graph and
$\psi_\Bg,G_\Bg$ be two $B$-graphs.
By an {\em injective morphism} $\psi_\Bg\to G_\Bg$ we mean a morphism
that is injective as a map of vertex sets and of edge sets.
We use $N(\psi_\Bg,G_\Bg)$ to denote the number of injective maps
$\psi_\Bg\to G_\Bg$.
\end{definition}

The proposition below is worth stating, but easy to prove.
\begin{proposition}
For any graph $B$ and any two $B$-graphs $\psi_\Bg,G_\Bg$ we have that
for any ordering $\psi_\Bg^\og$ on $\psi_\Bg$,
$$
N(\psi_\Bg,G_\Bg) = \# ([\psi^\og_\Bg]\cap G_\Bg).
$$
\end{proposition}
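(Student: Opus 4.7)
The plan is to construct an explicit bijection between the set of injective $B$-graph morphisms $\psi_\Bg \to G_\Bg$ and the set $[\psi_\Bg^\og] \cap G_\Bg$, which will give equality of cardinalities; the entire argument rests on the fact, already noted in the excerpt just after the definition of ordered graphs, that any automorphism of an ordered graph is the identity, so that isomorphisms between ordered graphs, when they exist, are unique.

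First, given an injective morphism $\phi\from\psi_\Bg\to G_\Bg$, let $\phi(\psi_\Bg)\subset G_\Bg$ denote the image $B$-subgraph, namely the $B$-subgraph of $G_\Bg$ whose vertices and directed edges are the images under $\phi$ of those of $\psi_\Bg$. Because $\phi$ is injective on both vertices and directed edges, the corestriction $\psi_\Bg\to\phi(\psi_\Bg)$ is an isomorphism of $B$-graphs; we transport the ordering $\psi_\Bg^\og$ across this isomorphism to obtain an ordering on $\phi(\psi_\Bg)$, making it an element of $[\psi_\Bg^\og]\cap G_\Bg$. Call this map $\phi\mapsto F(\phi)$.

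Conversely, given $G'_\Bg{}^\og\in[\psi_\Bg^\og]\cap G_\Bg$, there exists, by definition, some isomorphism $\psi_\Bg^\og\to G'_\Bg{}^\og$ of ordered $B$-graphs, and by the uniqueness noted above this isomorphism is unique; composing it with the inclusion $G'_\Bg\hookrightarrow G_\Bg$ yields an injective morphism of $B$-graphs $\psi_\Bg\to G_\Bg$. Call this map $G'_\Bg{}^\og\mapsto H(G'_\Bg{}^\og)$.

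The last step is to check that $F$ and $H$ are mutually inverse. The composition $F\circ H$ is immediate: $H(G'_\Bg{}^\og)$ has image exactly $G'_\Bg$, and the ordering transported from $\psi_\Bg^\og$ is by construction the ordering of $G'_\Bg{}^\og$. For $H\circ F$, given $\phi$, both $\phi$ and $H(F(\phi))$ are injective morphisms $\psi_\Bg\to G_\Bg$ whose image is $\phi(\psi_\Bg)$ and which, after corestricting to $\phi(\psi_\Bg)$ with its $F(\phi)$-ordering, are isomorphisms of ordered $B$-graphs; by uniqueness of such isomorphisms the two corestrictions agree, hence $\phi=H(F(\phi))$. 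The main (minor) obstacle is simply being careful that the ordering on the image in the definition of $F$ is the one that makes the corestriction of $\phi$ an isomorphism of ordered $B$-graphs, which is true by construction. This yields $N(\psi_\Bg,G_\Bg)=\#([\psi_\Bg^\og]\cap G_\Bg)$, independent of the choice of ordering $\psi_\Bg^\og$.
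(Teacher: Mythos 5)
Your proof is correct and takes essentially the same approach as the paper: you construct the same bijection (image with transported ordering in one direction, unique ordered isomorphism composed with inclusion in the other), just spelling out the verification that the two maps are mutually inverse in more detail than the paper does.
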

\begin{proof}
Fix an ordering $\psi^\og_\Bg$; 
each injective
morphism $u\from\psi_\Bg\to G_\Bg$ determines an element
$S^\og_\Bg=u(\psi_\Bg^\og)$ of $[\psi^\og_\Bg]\cap G_\Bg$;
furthermore, for
$S^\og_\Bg\in [\psi^\og_\Bg]\cap G_\Bg$ there is a unique
isomorphism $\psi^\og_\Bg\to S^\og_\Bg$ giving rise to
an injection $\psi_\Bg\to G_\Bg$.
We easily check that this correspondence between injective maps $u$
and elements of $[\psi^\og_\Bg]\cap G_\Bg$ are inverses
of each other.
\end{proof}

We now state three results that will be proven
in the subsections that follow this one.

\begin{lemma}\label{le_finiteness_and_positivity}
Let $B$ be a graph, and let
$$
\Psi = \{ [\psi_\Bg^1],\ldots,[\psi_\Bg^m] \}
$$
be a finite set of isomorphism classes of $B$-graphs.
If each $\psi^i$ is positive, then for each $r\in\naturals$,
$\Psi^+_{<r}$ is finite.
\end{lemma}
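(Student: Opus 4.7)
My plan is to show that any $B$-graph $\psi$ representing a class in $\Psi^+_{<r}$ has $\#V_\psi$ and $\#E_\psi$ bounded above by constants depending only on $r$ and $\Psi$; since $B$ is fixed and there are only finitely many isomorphism classes of $B$-graphs with bounded vertex and edge counts, finiteness of $\Psi^+_{<r}$ follows.

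Given such a $\psi$, I would start from any decomposition $\psi = \psi_1 \cup \cdots \cup \psi_N$ with each $\psi_j$ isomorphic to some $\psi_\Bg^i \in \Psi$, and discard redundant pieces to make the decomposition minimal, meaning no $\psi_j$ is contained in $\bigcup_{\ell \ne j} \psi_\ell$. Writing $\Phi_j = \psi_1 \cup \cdots \cup \psi_j$ with $\Phi_0 = \emptyset$, a direct inclusion--exclusion on vertex and edge sets gives
\[
\ord(\Phi_j) - \ord(\Phi_{j-1}) \;=\; \ord(\psi_j) - \ord(\psi_j \cap \Phi_{j-1}),
\]
and telescoping yields $\ord(\psi) = \sum_j \bigl[ \ord(\psi_j) - \ord(\psi_j \cap \Phi_{j-1}) \bigr]$. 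The aim is to bound each summand below by $1$; once this is done, $N \le \ord(\psi) < r$, so $\#V_\psi \le N \cdot \max_i \#V_{\psi^i}$ and $\#E_\psi \le N \cdot \max_i \#E_{\psi^i}$ are bounded by constants depending only on $r$ and $\Psi$.

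The main obstacle is the following combinatorial sub-lemma, which I would state and prove separately: \emph{if $D$ is a pruned graph each of whose connected components has positive order, and $H \subsetneq D$ is a proper subgraph, then $\ord(H) \le \ord(D) - 1$.} I would apply it to $D = \psi_j$ (which is positive by hypothesis on $\Psi$) and $H = \psi_j \cap \Phi_{j-1}$; the latter is a proper subgraph of $\psi_j$ because minimality forces $\psi_j \not\subseteq \Phi_{j-1} \subseteq \bigcup_{\ell \ne j} \psi_\ell$. To prove the sub-lemma, I would argue component by component, reducing to a connected pruned $C$ with $\ord(C) \ge 1$ and $H_C := H \cap C \subsetneq C$. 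The easy sub-cases $H_C = \emptyset$ and $V_{H_C} = V_C$ are immediate; in the remaining case I would set $S = V_C \setminus V_{H_C}$, and a handshaking count---using $\deg_C(v) \ge 2$ for every $v \in S$ (pruning) together with the existence of at least one edge between $S$ and $V_{H_C}$ (connectedness of $C$ plus non-emptiness of both sides)---gives $\ord(C) - \ord(H_C) \ge 1/2$, hence $\ge 1$ by integrality. Summing over components of $D$ and using that at least one $C$ has $H_C \subsetneq C$ (because $H \subsetneq D$) completes the sub-lemma.

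Beyond this sub-lemma I anticipate no real difficulty: the passage from the bound on $N$ to bounded $\#V_\psi$ and $\#E_\psi$, and from there to finiteness of $\Psi^+_{<r}$, is routine bookkeeping.
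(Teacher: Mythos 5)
Your proof is correct, and it is a genuine if mild variant of the paper's argument. The paper builds a strictly increasing chain $\psi_0\subset\psi_1\subset\cdots\subset\psi_t=\psi$, with each $\psi_j$ obtained from $\psi_{j-1}$ by adjoining one more copy of some $\psi^i$, and applies Lemma~\ref{le_pruned_inclusions} (a proper inclusion of pruned graphs strictly increases the order) at each step; since every $\psi_j$ is a union of pruned graphs, both sides of each inclusion stay pruned, which is exactly what that lemma requires. You instead fix one minimal decomposition $\psi=\psi_1\cup\cdots\cup\psi_N$ at the outset and telescope $\ord(\Phi_j)-\ord(\Phi_{j-1})=\ord(\psi_j)-\ord(\psi_j\cap\Phi_{j-1})$, which forces you to prove a sub-lemma about the inclusion $\psi_j\cap\Phi_{j-1}\subsetneq\psi_j$ in which the \emph{smaller} graph need not be pruned (intersections of pruned graphs are generally not pruned); your sub-lemma compensates by requiring the \emph{larger} graph to be positive, not merely pruned. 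So the two key lemmas are incomparable but each is exactly adapted to its own bookkeeping. Both routes give the same bound $N\le\ord(\psi)<r$ on the number of pieces and then conclude identically from bounded $\#V_\psi$, $\#E_\psi$, and finiteness of $B$.

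One small caution in your sub-lemma: the handshaking count in the third sub-case must respect the paper's conventions on self-loops, where a whole-loop contributes $2$ to a vertex's degree but a half-loop contributes only $1$, and both contribute $1$ to $\#E$. Writing $m$ for the number of edges of $C$ incident to $S$ and splitting $m$ into non-loop edges within $S$, edges from $S$ to $V_{H_C}$, whole-loops on $S$, and half-loops on $S$, the degree sum still gives $m-\#S\ge e_{SH}/2$, so $\ord(C)-\ord(H_C)\ge 1/2$ and integrality finishes the sub-case as you intended; but it is worth saying explicitly since the paper uses this half-loop convention throughout.
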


\begin{lemma}\label{le_morphisms_factor}
Let $B$ be a graph, and let
$$
\Psi = \{ [\psi_\Bg^1],\ldots,[\psi_\Bg^m] \}
$$
be a finite set of isomorphism classes of $B$-graphs such $\Psi^+_{<r}$
is finite for all $r\in\naturals$; let $s$ be the largest number of
edges among the graphs $\psi^1,\ldots,\psi^m$.
\begin{enumerate}
\item 
If $[\psi_\Bg]\in\Psi^+$ and $G_\Bg$ is any $B$-graph, then
any injective morphism $\psi_\Bg\to G_\Bg$ factors as an injective
map $\psi_\Bg\to \Psi^+\cap G_\Bg$ followed by the inclusion of
$\Psi^+\cap G_\Bg$ in $G_\Bg$.
\item
If $r\in\naturals$
and $G_\Bg$ is a $B$-graph with $\ord_{\Psi}(G)\ge r$, 
and $[\psi_\Bg]\in\Psi^+_{<r}$, then
any injective map
$\psi_\Bg\to G_\Bg$ factors as two injective maps 
$\psi_\Bg\to\psi'_\Bg$ and $\psi'_\Bg\to G_\Bg$ where
$\psi'_\Bg$ lies in an element of $\Psi^+_{<r+s}\setminus \Psi^+_{<r}$,
i.e., $[\psi'_\Bg]$ 
is an element of $\Psi^+$ whose order is between $r$ and $r+s-1$.
In particular
\begin{equation}\label{eq_factor_inequality}
N(\psi_\Bg,G_\Bg) \le \sum_{\psi'_\Bg\in \Psi^+_{<r+s}\setminus \Psi^+_{<r}}
N(\psi_\Bg,\psi'_\Bg)
N(\psi'_\Bg,B_\Bg) .
\end{equation} 
\end{enumerate}
\end{lemma}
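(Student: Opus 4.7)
The plan is to rely only on the definition of $\Psi^+$ (elements are $B$-graphs covered by $B$-subgraphs from the classes in $\Psi$), together with a one-line Euler-type estimate for the order of a union.

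For part~(1), I would unwind definitions directly: if $[\psi_\Bg]\in\Psi^+$, write $\psi_\Bg=\bigcup_j\tau_\Bg^j$ with each $\tau_\Bg^j\isom\psi_\Bg^{i_j}\in\Psi$. Any injective morphism $u\from\psi_\Bg\to G_\Bg$ sends each $\tau_\Bg^j$ isomorphically onto a $B$-subgraph $u(\tau_\Bg^j)\subset G_\Bg$, which, by the very definition of $\Psi^+\cap G_\Bg$, lies inside $\Psi^+\cap G_\Bg$. Taking the union gives $u(\psi_\Bg)\subset\Psi^+\cap G_\Bg$, which is exactly the required factorization of $u$ through $\Psi^+\cap G_\Bg\hookrightarrow G_\Bg$.

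For part~(2), I would construct $\psi'_\Bg$ as the last term of a strictly increasing chain
\[
u(\psi_\Bg)=\psi_{(0)}\subsetneq\psi_{(1)}\subsetneq\cdots\subsetneq\psi_{(\ell)}\subset\Psi^+\cap G_\Bg,
\]
built greedily by $\psi_{(j+1)}=\psi_{(j)}\cup\tau_j$, where $\tau_j\subset\Psi^+\cap G_\Bg$ is isomorphic to some $\psi_\Bg^i\in\Psi$ and is not already contained in $\psi_{(j)}$. Such a $\tau_j$ exists whenever $\psi_{(j)}$ is a proper subgraph of $\Psi^+\cap G_\Bg$, since by definition the latter is covered by such pieces; and proper containment is guaranteed as long as $\ord(\psi_{(j)})<\ord(\Psi^+\cap G_\Bg)$, which by hypothesis is at least $r$. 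The quantitative estimate driving everything is
\[
\ord(\psi_{(j+1)})-\ord(\psi_{(j)})=\bigl|E_{\tau_j}\setminus E_{\psi_{(j)}}\bigr|-\bigl|V_{\tau_j}\setminus V_{\psi_{(j)}}\bigr|\le\#E_{\tau_j}\le s.
\]
Starting from $\ord(\psi_{(0)})=\ord(u(\psi_\Bg))<r$, the first index $\ell$ with $\ord(\psi_{(\ell)})\ge r$ thus satisfies $r\le\ord(\psi_{(\ell)})<r+s$. Setting $\psi'_\Bg=\psi_{(\ell)}$, the chain is inductively a union of pieces from $\Psi$ starting from $u(\psi_\Bg)\in\Psi^+$, so $[\psi'_\Bg]\in\Psi^+_{<r+s}\setminus\Psi^+_{<r}$, and the injection $u$ factors as $\psi_\Bg\to\psi'_\Bg\hookrightarrow G_\Bg$.

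The inequality~\eqref{eq_factor_inequality}---in which the last factor $N(\psi'_\Bg,B_\Bg)$ appears to be a typo for $N(\psi'_\Bg,G_\Bg)$---would then follow by classifying each injection $u\from\psi_\Bg\to G_\Bg$ by an intermediate class $[\psi'_\Bg]\in\Psi^+_{<r+s}\setminus\Psi^+_{<r}$ produced by the construction: each $u$ admits at least one such factorization, and for fixed $[\psi'_\Bg]$ the number of composable pairs of injections $\psi_\Bg\to\psi'_\Bg\to G_\Bg$ is exactly $N(\psi_\Bg,\psi'_\Bg)\cdot N(\psi'_\Bg,G_\Bg)$; finiteness of the sum is immediate from the standing hypothesis that $\Psi^+_{<r}$ is finite for every $r$. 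I do not expect any real obstacle beyond the order-increment estimate above, which is the only nontrivial piece of accounting.
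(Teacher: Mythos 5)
Your proof is correct and follows essentially the same route as the paper: for part~(1) you unwind the definition of $\Psi^+$ and observe the images of the covering pieces land in $\Psi^+\cap G_\Bg$; for part~(2) you build a strictly increasing chain of subgraphs from $u(\psi_\Bg)$ up to $\Psi^+\cap G_\Bg$ by adjoining one piece from a class of $\Psi$ at a time, bound the order increment by $s$ (the paper phrases this slightly less explicitly, just noting at most $s$ new edges, but it amounts to the same inequality $\ord(\psi_{(j+1)})\le\ord(\psi_{(j)})+s$), and stop at the first step where the order reaches $r$. You also correctly flag that $N(\psi'_\Bg,B_\Bg)$ in \eqref{eq_factor_inequality} should read $N(\psi'_\Bg,G_\Bg)$, which is confirmed by the way the paper's own proof phrases the counting at the end.
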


\begin{theorem}\label{th_mobius_inversion_truncation}
Let $B$ be a graph, and let
$$
\Psi = \{ [\psi_\Bg^1],\ldots,[\psi_\Bg^m] \}
$$
be a finite set of isomorphism classes of $B$-graphs such $\Psi^+_{<s}$
is finite for all $s\in\naturals$ and each $\psi^i$ is pruned.  
Then for each $[\psi_\Bg]\in\Psi^+$
there is a rational number $\mu[\psi_\Bg]$ such that
for any $r\in\naturals$ 
the function
$$
I_r(\Psi,G_\Bg) = \sum_{[\psi_\Bg]\in\Psi^+_{<r}} \#([\psi_\Bg]\cap G)
\mu[\psi_\Bg],
$$
satisfies
\begin{enumerate}
\item 
\begin{equation}\label{eq_trunc_exact}
I_r(\Psi,G_\Bg) = \II_{{\rm Meets}(\Psi)}(G_\Bg)
\end{equation}
whenever $\ord_\Psi(G_\Bg)<r$, and
\item
there is a constant $C\in\reals$ and $s\in\naturals$ such that if 
$\ord_\Psi(G_\Bg)\ge r$, then
\begin{equation}\label{eq_reasonable_trunc_bound}
\bigl| I_r(\Psi,G_\Bg) \bigr| \le C 
\sum_{[\psi_\Bg]\in \Psi^+_{<r+s}\setminus\Psi^+_{<r}} \#([\psi_\Bg]\cap G).
\end{equation} 
\end{enumerate}
\end{theorem}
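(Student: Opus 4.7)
The plan is to define $\mu[\psi_\Bg]$ by Möbius-style inversion on $\Psi^+$, ordered by $B$-subgraph embedding, and then to verify the two asserted properties using the combinatorial lemmas already in hand. Concretely, for each $[\phi_\Bg]\in\Psi^+$ I would define $\mu[\phi_\Bg]\in\mathbb{Q}$ by induction on $\#V_\phi+\#E_\phi$ so that
\begin{equation}\label{eq_mu_def_plan}
\sum_{[\psi_\Bg]\in\Psi^+}\mu[\psi_\Bg]\,\#\bigl([\psi_\Bg]\cap\phi_\Bg\bigr)=1
\quad\text{for every }[\phi_\Bg]\in\Psi^+.
\end{equation}
A non-isomorphism embedding $\psi_\Bg\hookrightarrow\phi_\Bg$ forces $\#V_\psi+\#E_\psi<\#V_\phi+\#E_\phi$, so the system is lower-triangular with diagonal coefficient $\#([\phi_\Bg]\cap\phi_\Bg)\ge 1$, making $\mu[\phi_\Bg]$ uniquely determined as a rational. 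The sum in \eqref{eq_mu_def_plan} has only finitely many nonzero terms for any given $\phi_\Bg$.

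A key preliminary is the following monotonicity: if $[\psi_\Bg],[\phi_\Bg]\in\Psi^+$ and $\psi_\Bg$ embeds in $\phi_\Bg$, then $\ord(\psi)\le\ord(\phi)$. Since every $\psi^i$ is pruned, every $[\zeta_\Bg]\in\Psi^+$ is pruned (any vertex of $\zeta$ sits in some embedded copy of a $\psi^j$, where it already has degree at least two, so its degree in $\zeta$ is at least two). Hence every vertex $v\in V_\phi\setminus V_\psi$ has $\phi$-degree at least two, and every edge of $\phi$ incident to such a $v$ lies in $E_\phi\setminus E_\psi$; a standard double-count on the bipartite incidence between new vertices and new edges yields $\#(E_\phi\setminus E_\psi)\ge \#(V_\phi\setminus V_\psi)$, i.e., $\ord(\phi)\ge\ord(\psi)$.

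For part (1), set $\phi_\Bg=\Psi^+\cap G_\Bg$, so $\ord(\phi_\Bg)=\ord_\Psi(G_\Bg)<r$. If $\phi_\Bg=\emptyset$, then Lemma~\ref{le_morphisms_factor}(1) forces $\#([\psi_\Bg]\cap G_\Bg)=0$ for every nonempty $[\psi_\Bg]\in\Psi^+$, so both sides of \eqref{eq_trunc_exact} vanish. Otherwise $[\phi_\Bg]\in\Psi^+_{<r}$; Lemma~\ref{le_morphisms_factor}(1) gives $\#([\psi_\Bg]\cap G_\Bg)=\#([\psi_\Bg]\cap\phi_\Bg)$ for every $[\psi_\Bg]\in\Psi^+$, and the monotonicity claim restricts nonzero contributions in \eqref{eq_mu_def_plan} to classes with $\ord(\psi)\le\ord(\phi)<r$, i.e., lying in $\Psi^+_{<r}$. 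Thus $I_r(\Psi,G_\Bg)$ coincides with the left-hand side of \eqref{eq_mu_def_plan} evaluated at $\phi_\Bg$, which equals $1=\II_{\mathrm{Meets}(\Psi)}(G_\Bg)$.

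For part (2), take $s$ to be the largest number of edges among $\psi^1,\dots,\psi^m$. When $\ord_\Psi(G_\Bg)\ge r$ and $[\psi_\Bg]\in\Psi^+_{<r}$, Lemma~\ref{le_morphisms_factor}(2) together with \eqref{eq_factor_inequality} bound $\#([\psi_\Bg]\cap G_\Bg)$ by a linear combination of $\#([\psi'_\Bg]\cap G_\Bg)$ with $[\psi'_\Bg]\in\Psi^+_{<r+s}\setminus\Psi^+_{<r}$ and coefficients depending only on $[\psi_\Bg]$ and $[\psi'_\Bg]$. Substituting into the definition of $I_r(\Psi,G_\Bg)$, applying the triangle inequality, and interchanging summation, the coefficient of each $\#([\psi'_\Bg]\cap G_\Bg)$ is a finite sum over the finite set $\Psi^+_{<r}$, hence uniformly bounded by some $C=C(\Psi,r)$; this yields \eqref{eq_reasonable_trunc_bound}. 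The principal obstacle throughout is the monotonicity claim $\ord(\psi)\le\ord(\phi)$, without which the truncation at $\Psi^+_{<r}$ would fail to recover the full Möbius sum in the low-order regime; everything else is standard Möbius-inversion bookkeeping combined with Lemma~\ref{le_morphisms_factor}.
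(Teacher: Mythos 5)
Your proof is correct and follows essentially the same route as the paper's: the same Möbius-inversion definition of $\mu[\cdot]$ on the embedding poset $\Psi^+$, the same reduction to $\phi_\Bg=\Psi^+\cap G_\Bg$ via Lemma~\ref{le_morphisms_factor}(1) for part~(1), and the same use of Lemma~\ref{le_morphisms_factor}(2)/\eqref{eq_factor_inequality} plus finiteness of $\Psi^+_{<r+s}\setminus\Psi^+_{<r}$ for part~(2). The one place you diverge is that you prove the needed weak monotonicity $\ord(\psi)\le\ord(\phi)$ directly by a degree/edge double-count, whereas the paper invokes Lemma~\ref{le_pruned_inclusions} (strict inequality for proper pruned inclusions); your version is actually slightly safer here, since the paper's proof of that lemma uses connectivity while elements of $\Psi^+$ may be disconnected, and the weak inequality you establish (valid for disconnected pruned graphs) is all that the truncation argument requires.
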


\subsection{Proof of Lemma~\ref{le_finiteness_and_positivity}}

Lemma~\ref{le_finiteness_and_positivity} follows from
the proof Lemma~9.2 of \cite{friedman_alon}, which proves the same
in the context of graphs as opposed to
$B$-graphs.
For convenience we provide a complete proof here.
Our proof will use the following graph theoretic lemma.
\begin{lemma}\label{le_pruned_inclusions}
Let $G_1\subset G$ be pruned graphs, i.e., no vertex is isolated or
is incident upon a single edge that is not a self-loop.
Then if $G_1\ne G$, $\ord(G_1)<\ord(G)$.
\end{lemma}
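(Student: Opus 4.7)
The plan is to derive the inequality from a direct incidence count. Set $V' = V_G \setminus V_{G_1}$ and $E' = E_G \setminus E_{G_1}$; then by definition
$$
\ord(G) - \ord(G_1) = \#E' - \#V',
$$
so it suffices to prove $\#E' > \#V'$.

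The first step is to note that, since $G_1\subset G$ is a subgraph, every edge of $G$ with an endpoint in $V'$ must itself lie in $E'$ (otherwise such an endpoint would already belong to $V_{G_1}$). Using the ``modified degree'' $\deg'_G$ in which each half-loop contributes $2$, as in the proof of Lemma~\ref{le_pruned_cycle_or_pos_ord}, this yields the identity
$$
\sum_{v \in V'} \deg'_G(v) \;=\; \sum_{e \in E'} f(e),
$$
where $f(e)\in\{0,1,2\}$ counts the $\deg'$-endpoints of $e$ that lie in $V'$. The pruning hypothesis gives $\deg'_G(v)\ge 2$ for every $v\in V_G$, and $f(e)\le 2$ is trivial, so
$$
2\#V' \;\le\; \sum_{v\in V'} \deg'_G(v) \;=\; \sum_{e\in E'} f(e) \;\le\; 2\#E',
$$
which already yields the weak inequality $\ord(G_1)\le\ord(G)$.

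The remaining task -- and the main subtlety -- is to promote this to strict inequality when $G_1\ne G$. Equality $\#E'=\#V'$ would force simultaneously $\deg'_G(v)=2$ for every $v\in V'$ \emph{and} $f(e)=2$ for every $e\in E'$; the latter says no new edge has any endpoint in $V_{G_1}$, so every connected component of $G$ meeting $V'$ is disjoint from $G_1$, and the former then forces each such component to have $\deg'\equiv 2$, which by the computation recalled in the proof of Lemma~\ref{le_pruned_cycle_or_pos_ord} means it is a cycle, i.e.\ a component of order $0$. The strict inequality $\ord(G_1)<\ord(G)$ therefore holds whenever $G$ contains no such ``cycle-component extension'' over $G_1$; this is automatic in the intended application to Lemma~\ref{le_finiteness_and_positivity}, where the $B$-graphs in $\Psi^+$ are unions of positive $\psi^i$ and hence have every connected component of positive order, ruling out the equality case.
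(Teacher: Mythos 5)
Your argument is correct in its inequalities and takes a genuinely different route from the paper. The paper's proof is a spanning-tree/leaf argument: assuming implicitly that $G$ is connected (and $G_1\ne\emptyset$), it grows $G_1$ to a subgraph $G'$ with $V_{G'}=V_G$ by repeatedly attaching one new vertex together with one connecting edge, so $\ord(G')=\ord(G_1)$; the last vertex attached is a leaf of $G'$ but, since $G$ is pruned, not of $G$, so $E_{G'}\subsetneq E_G$ and $\ord(G)>\ord(G')=\ord(G_1)$. Your incidence count with $\deg'$ is cleaner in that it yields $\ord(G_1)\le\ord(G)$ with no connectedness assumption and isolates exactly the equality case; this has the useful side effect of exposing that the lemma as literally stated (without a connectedness hypothesis) fails for $G$ a disjoint union of $G_1$ and a cycle, something the paper's proof quietly sidesteps by invoking connectedness.

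The one gap in your write-up is that you defer strictness to the downstream application rather than closing the lemma in its own terms. Under the paper's implicit hypotheses ($G$ connected, $G_1\ne\emptyset$) the equality case is immediately excluded: if $V'=\emptyset$ then $\#E'>0=\#V'$ directly; and if $V'\ne\emptyset$ then $f(e)=2$ on all of $E'$ forces every new edge to have both endpoints in $V'$, so the single component of the connected graph $G$ would be disjoint from $V_{G_1}$, contradicting $G_1\ne\emptyset$. That two-line remark finishes strictness exactly where the paper's proof does, without appealing to $\Psi^+$. Your positivity observation is nevertheless not wasted: in the proof of Lemma~\ref{le_finiteness_and_positivity} the present lemma is applied to the possibly disconnected graphs $\psi_j$, which the connected-case argument does not literally cover; your remark that every component of such a $\psi_j$ has positive order is precisely what rules out the equality case there.
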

This is Lemma~4.10 of \cite{friedman_alon}; for ease of reading we provide
a proof here.
\begin{proof}
Assume that $G_1\ne G$.
If $V_{G_1}=V_G$, then there is some edge in $G$ that is not in $G_1$,
and hence $\ord(G_1)<\ord(G)$.

Otherwise
$V_{G_1}\ne V_G$.  Since
$G$ is connected, there is at least one vertex $v_1\in V_G\setminus V_{G_1}$
of distance one to $V_{G_1}$; let $e_1$ be an edge connecting $v_1$ to a
vertex in $V_{G_1}$.  Continuing in this fashion, we get edges
$e_2,\ldots,e_t$ and vertices $v_2,\ldots,v_t$
such that the graph $G'$, obtained as the union
of $G_1$ and $e_1,\ldots,e_t$ and $v_1,\ldots,v_t$ has $t$ more vertices
than $V_{G_1}$, $t$ more edges (that are not self-loops), and
$V_{G'}=V_G$.
Note that $\ord(G_1)=\ord(G')$, since $G'$ contains $t$ more vertices and
$t$ more edges than $G_1$.

Since $v_t$ is a leaf in $G'$, and it cannot be a leaf in $G$,
it follows that $G$ has an edge that is not in $G'$; hence
$\ord(G)>\ord(G')$.  But $\ord(G')=\ord(G_1)$.
\end{proof}

\begin{proof}[Proof of Lemma~\ref{le_finiteness_and_positivity}]
Let $\psi$ be an element of
$\Psi^+_{<r}$.  Then there is a sequence of strictly increasing graphs
$$
\psi_0\subset\psi_1\subset\cdots\subset \psi_t=\psi
$$
where $\psi_0$ is isomorphic to
some $\psi_i$, and for each $j\in[t]$, $\psi_j$ is the union of
$\psi_{j-1}$ and a graph $\widetilde\psi_j$ that is isomorphic to
one of the $\psi_i$.
It follows that each $\psi_j$ is pruned; the lemma about increasing
order of pruned graphs shows that $\ord(\psi_t)>t+\ord(\psi_0)\ge t+1$.
Hence $t\le r-2$, and hence $\psi$ is the union of $r-1$ graphs,
each of which lies in a class in $\Psi$.  Hence
$$
\#E_\psi \le (r-1) \max_{i\in[t]} (\#E_{\psi_i})
$$
which is bounded.  Since $\psi$ is positive,
$$
\#V_\psi \le \#E_\psi - \ord(\psi) \le \#E_\psi
$$
so $\#V_\psi$ is bounded.  Hence there are only finitely many possible
isomorphism classes of graphs, $\psi$, as graphs, and hence only
finitely many possible $B$-graph classes in $\Psi^+_{<r}$.
\end{proof}

\subsection{Proof of Lemma~\ref{le_morphisms_factor}}

\begin{proof}[Proof of Lemma~\ref{le_morphisms_factor}]
The first claim is easy: if $[\psi_\Bg]\in\Psi^+$, then $\psi_\Bg$ is
the union of injective morphisms $\psi_\Bg^i\to\psi_\Bg$; hence
the image 
of any injective
map $u\from \psi_\Bg\to G_\Bg$ is the union of the images of the compositions
$\psi_\Bg^i\to\psi_\Bg$ with $u$, which are injective morphisms.
Since $\Psi^+\cap G$ contains all of these images, it contains
$u(\Psi_\Bg)$.  Hence $u$ factors through $\Psi^+\cap G$.

For the second claim, let $G^0_\Bg$ be the image of the injective
morphism $\psi_\Bg\to G_\Bg$.
Since $\Psi^+\cap G_\Bg$ is the union of images
of injective maps from the $\psi_\Bg^i$ to $G_\Bg$, there must be
a strict inclusion of graphs
$$
G^0_\Bg\subset G^1_\Bg \subset \cdots\subset G^t_\Bg=\Psi^+\cap G_\Bg
$$
where each
$G^i_\Bg$ is the union of $G^{i-1}_\Bg$ and a $B$-subgraph of 
$G_\Bg$ that lies in some element of $\Psi$.
Let $i$ be the smallest value such that $\ord(G^i)\ge r$; then $i\ge 1$
(since $\ord(\psi)<r$ by assumption and
$G^0_\Bg$ and $\psi_\Bg$ are isomorphic, so $\ord(G^0)<r$). 
Since $G^i$ has at most $s$ more edges than $G^{i-1}$, we have
$$
\ord(G^i) \le \ord(G^{i-1}) + s\le r-1+s.
$$
Hence the injection $\psi_\Bg\to G$ factors through $\psi'_\Bg\eqdef G^i_\Bg$,
and
$$
\psi'_\Bg\in \Psi^+_{<r+s}\setminus \Psi^+_{<r}
$$
as desired;
\eqref{eq_factor_inequality} follows because the number of 
injective morphisms $\psi_\Bg\to G_\Bg$ that factor through $\psi'_\Bg$
is at most the number of injective morphisms $\psi_\Bg\to\psi'_\Bg$ times
those $\psi'_\Bg\to G_\Bg$.
\end{proof}

\subsection{The Injection Count and Resulting Partial Order}

In this section we describe the partial order which we will later
use to define the rational numbers
$\mu[\psi_\Bg]$ that are fundamental to our
construction of ``approximate indicator functions'' (see 
\eqref{eq_approximate_indicator}).

\begin{definition}\label{de_poset_B_graphs}
Let $B$ be a graph.
Recall that $[S_\Bg]$ denotes the class of $B$-graphs
that are isomorphic to $S_\Bg$; write
$[S_\Bg]\le_B [T_\Bg]$ if $N(S_\Bg,T_\Bg)>0$ 
(Definition~\ref{de_injective_morphisms}), i.e., if there exists
an injection $S_\Bg\to T_\Bg$;
we sometimes refer to $\le_B$ as 
$\le$ when confusion is unlikely to occur.
\end{definition}

\begin{lemma}
Let $B$ be a graph.
Then the relation $\le_B$ in Definition~\ref{de_poset_B_graphs}
is a partial order (of isomorphism classes of $B$-graphs).
\end{lemma}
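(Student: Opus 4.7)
The plan is to verify the three axioms of a partial order: well-definedness on isomorphism classes, reflexivity, transitivity, and antisymmetry.

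First I would check that $\le_B$ is well-defined on isomorphism classes, which is immediate: given isomorphisms $S_\Bg \to S'_\Bg$ and $T_\Bg \to T'_\Bg$ of $B$-graphs, pre- and post-composing an injection $S_\Bg \to T_\Bg$ yields an injection $S'_\Bg \to T'_\Bg$, so the relation $N(S_\Bg,T_\Bg)>0$ depends only on the classes. Reflexivity is trivial since the identity morphism $S_\Bg \to S_\Bg$ is injective. Transitivity is also routine: if $u \colon S_\Bg \to T_\Bg$ and $v \colon T_\Bg \to U_\Bg$ are injective morphisms of $B$-graphs, then $v\circ u$ is a morphism of $B$-graphs (composition respects the structure maps to $B$) and is injective on vertices and edges since both $u$ and $v$ are.

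The only nontrivial part is antisymmetry. Suppose $[S_\Bg]\le_B[T_\Bg]$ and $[T_\Bg]\le_B[S_\Bg]$, so there exist injective morphisms $u\colon S_\Bg \to T_\Bg$ and $v\colon T_\Bg\to S_\Bg$. Since these are injections of finite sets, we obtain $\#V_S \le \#V_T \le \#V_S$ and $\#\Edir_S \le \#\Edir_T \le \#\Edir_S$, so equality holds in both. Hence $u$ (and $v$) is a bijection on vertices and on directed edges. I would then argue that a bijective morphism of $B$-graphs is automatically an isomorphism: $u$ intertwines the head, tail, and $\iota$ maps of $S$ and $T$ by virtue of being a morphism, so the inverse set-theoretic maps automatically intertwine them too; likewise $u$ commutes with the structure maps to $B$, so $u^{-1}$ does as well. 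Therefore $u$ is an isomorphism of $B$-graphs, giving $[S_\Bg]=[T_\Bg]$.

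I expect no genuine obstacles here; the main thing to be careful about is spelling out why a bijective $B$-graph morphism between finite $B$-graphs inverts to a $B$-graph morphism, which is a standard finite-combinatorial argument rather than a deep one.
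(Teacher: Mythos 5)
Your proposal is correct and follows essentially the same route as the paper: reflexivity and transitivity are dispatched as routine, and antisymmetry is established by counting vertices and directed edges to show the injection is bijective and then inverting it to get a $B$-graph isomorphism. The only (harmless) addition is your explicit well-definedness check on isomorphism classes, which the paper leaves implicit.
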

\begin{proof}
The relation $\le=\le_B$ is clearly reflexive and transitive, so we need 
only show that it is anti-symmetric: so assume that
$[S_\Bg]\le [T_\Bg]$ and $[T_\Bg]\le [S_\Bg]$; let us
prove that
$[S_\Bg]= [T_\Bg]$.  Since $[S_\Bg]\le [T_\Bg]$,
there is a morphism $\nu\from S_\Bg\to T_\Bg$ that is injective;
hence $\#V_S\le \#V_T$ and $\#\Edir_S\le\#\Edir_T$;
then $[T_\Bg]\le [S_\Bg]$ provides the reverse inequalities, and hence
$\#V_S=\#V_T$ and $\#\Edir_S=\#\Edir_T$.
It follows that
$\nu$ is bijective on the vertex sets and directed edge sets, and
we easily check that the inverse map on these sets yields an isomorphism
of $B$-graphs.  Hence $S_\Bg$ and $T_\Bg$ are isomorphic,
and hence $[S_\Bg]= [T_\Bg]$.
\end{proof}

\subsection{The M\"obius Function}

Now let $B$ be a graph, and
let $\cO$ be any subset of the set of isomorphism classes of
$B$-graphs;
then the partial order $\le=\le_B$ above restricts to give
a partial order on $\cO$.
We now define a {\em M\"obius function} for the partially ordered
set, $\cO$, in
the usual way: we define a function $\mu=\mu_\cO\from\cO\to\reals$
with the property that
\begin{equation}\label{eq_mobius}
\sum_{\substack{[S_\Bg]\in\cO \\ [S_\Bg]\le [T_\Bg]}} 
N([S_\Bg],[T_\Bg])
\mu_\cO([S_\Bg]) = 1
\end{equation}
for each $[T_\Bg]$, by defining 
$$
\mu([T_\Bg])=1/N([T_\Bg],[T_\Bg]) = 1 / \bigl(\#{\rm Aut}(T_\Bg) \bigr)
$$
when $[T_\Bg]$ is a minimal element of $\cO$, and then---by
structural induction on $T$, or regular
induction on $(\#V_T)+(\#\Edir_T)$---we set
\begin{equation}\label{eq_mobius_inductive}
\mu_\cO([T_\Bg]) = 
\frac{1}{N([T_\Bg],[T_\Bg])}
\Biggl( 1 - 
\sum_{\substack{[S_\Bg]\in\cO \\ [S_\Bg]< [T_\Bg]}} 
N([S_\Bg],[T_\Bg])
\mu_\cO([S_\Bg])
\Biggr).
\end{equation} 

Notice that $N(S_\Bg,S_\Bg)$ can be strictly greater than one
(e.g., which can happen if $S$ is a cycle, or a ``barbell graph'' or
``theta graph''; see Section~6 of Article~I).
For this reason, the $\mu_\cO([S_\Bg])$ 
are not necessarily integers.

\begin{definition}
Let $B$ be a graph, and $\cO$ a subset of the set of isomorphism
classes of $B$-graphs.
By the {\em M\"obius function on $\cO$} we 
mean the unique function $\mu([S_\Bg])=\mu_\cO([S_\Bg])$ satisfying
\eqref{eq_mobius},
defined inductively by 
\eqref{eq_mobius_inductive}.
\end{definition}

We remark that \cite{friedman_alon} works with $B$-graphs (and graphs); 
here we work
with $B$-graphs simply because our definition of algebraic model
makes this convenient.

\subsection{The Truncated Indicator Function and the Proof
of Theorem~\ref{th_mobius_inversion_truncation}}

\begin{definition}
Let $B$ be a graph, and $\Psi$ a set of isomorphism classes of $B$-graphs.
Then $\Psi^+$ is a set of isomorphism classes of $B$-graphs, and
gives rise to a M\"obius function $\mu_{\Psi^+}$.
For each $r$ we define the {\em order $r$ truncated $\Psi$-indicator function}
to be the function defined on $B$-graphs, $G_\Bg$, given by
\begin{equation}\label{eq_approximate_indicator}
I_r(\Psi,G_\Bg) \eqdef 
\sum_{[S_\Bg]\in \Psi^+_{<r}} N(S_\Bg,G_\Bg) \mu_{\Psi^+}([S_\Bg]) .
\end{equation}
\end{definition}
This sum is finite for any given $G_\Bg$ since $N(S_\Bg,G_\Bg)=0$ if
$S$ has more vertices than $G$; in cases of interest to us we will
require $\Psi^+_{<r}$ to be finite for all $r$, so that the above sum
involves finitely many $[S_\Bg]$.

\begin{proof}[Proof of Theorem~\ref{th_mobius_inversion_truncation}]
Let $\mu[S_\Bg]=\mu_{\Psi^+}[S_\Bg]$ be the M\"obius function
for the partially ordered set $\Psi^+$.
Let us prove the various claims in 
Theorem~\ref{th_mobius_inversion_truncation} regarding
$I_r(\Psi,G_\Bg)$ for any $B$-graph $G_\Bg$.

Set $\psi_\Bg=\Psi^+\cap G_\Bg$, and consider the three cases
where $\ord_\Psi(G)=\ord(\psi)$ is $0$, between $1$ and $r-1$, and
at least $r$.
In all cases it will be useful to note that by 
Lemma~\ref{le_morphisms_factor}, for all $[S_\Bg]\in\Psi^+$
we have that
$$
N(S_\Bg,G_\Bg) = N(S_\Bg,\Psi^+\cap G_\Bg)=N(S_\Bg,\psi_\Bg) ,
$$
and hence $I_r(\Psi,G_\Bg)=I_r(\Psi,\psi_\Bg)$.

First, consider the case where
$\ord(\psi)=0$.  Then
$\Psi^+\cap G_\Bg=\emptyset_\Bg$, the empty graph, and
$N(S_\Bg,G_\Bg)=N(S_\Bg,\psi_\Bg)=0$ for all $S_\Bg\in\Psi^+$, and hence
$$
I_r(\Psi,G_\Bg)  = 0 = \II_{{\rm Meets}(\Psi)}(G).
$$
This proves \eqref{eq_trunc_exact} in this case.

Second, consider the case where $1\le \ord(\psi)<r$, and
hence $\II_{{\rm Meets}(\Psi)}(G)=1$.
By Lemma~\ref{le_pruned_inclusions}, if $[\psi_\Bg]\in \Psi^+$
and $\ord(\psi)<r$, then for $S_\Bg\in\Psi^+$ with $\ord(S)\ge r$
we have
$$
N(S_\Bg,\psi_\Bg) = 0,
$$
and hence
\begin{align*}
I_r(\Psi,G_\Bg)=I_r(\Psi,\psi_\Bg) & =
\sum_{[S_\Bg]\in \Psi^+_{<r}} N(S_\Bg,\psi_\Bg) \mu[S_\Bg] 
\\
& = 
\sum_{[S_\Bg]\in \Psi^+} N(S_\Bg,\psi_\Bg) \mu[S_\Bg]
= 1 = \II_{{\rm Meets}(\Psi)}(G_\Bg)
\end{align*}
by \eqref{eq_mobius}.
This proves \eqref{eq_trunc_exact} in this case.

Third and lastly, consider the case that $\ord(\psi)\ge r$.
Then \eqref{eq_factor_inequality} of 
Lemma~\ref{le_morphisms_factor} implies that
for any $[S_\Bg]\in\Psi^+$ with $\ord(S)<r$ we have
$$
N(S_\Bg,G_\Bg) \le \sum_{\psi'_\Bg\in \Psi^+_{<r+s}\setminus \Psi^+_{<r}}
N(S_\Bg,\psi'_\Bg)
N(\psi'_\Bg,B_\Bg).
$$
This implies \eqref{eq_reasonable_trunc_bound} for
$$
C = \max_{\psi'_\Bg \in \Psi^+_{<r+s}\setminus \Psi^+_{<r}}\ \ 
\sum_{[S_\Bg]\in\Psi^+_{<r}}  \bigl| \mu[S_\Bg] \bigr| \, N(S_\Bg,\psi'_\Bg) ,
$$
which is a finite sum since $\Psi^+_{<r}$ and $\Psi^+_{<r+s}$
are finite sets by assumption.
\end{proof}

\section{Proof of Theorem~\ref{th_exp_ind_cert}}
\label{se_ind_cert_proof}

In this section we prove Theorem~\ref{th_exp_ind_cert}.  We build up
the proof with a sequence of lemmas; we start
the following lemma based on Theorem~\ref{th_main_certified_pairs}.

\begin{lemma}\label{le_main_pair_exp_conclusion}
Let $B$ be a graph, and $\Psi$ a finite family of $B$-graphs such that
$\Psi^+_{<s}$ is finite for all $s\in\naturals$.
Then for any fixed $\nu>0$ and $r',r''\in\naturals$ we have that
$$
f(k,n)=\EE_{G\in\cC}[I_{r'}(\Psi,G_\Bg){\rm cert}_{<\nu,<r''}(G,k)]
$$
has a $(B,\nu)$-Ramanujan expansion to any order $r\in\naturals$
$$
c_0(k)+\cdots+c_{r-1}(k)/n^{r-1}+ O(1) c_r(k)/n^r,
$$
where the bases of the $c_i=c_i(k)$ are a subset of any set
of eigenvalues of the model,
and where $c_i(k)=0$ provided that $i<r$ and $i$ is less than the order
of any $B$-graph occurring in $\cC_n(B)$ that meets $\Psi$, i.e.,
that lies in an element of $\Psi$.
\end{lemma}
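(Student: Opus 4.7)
The plan is to reduce the lemma directly to Theorem~\ref{th_main_certified_pairs} by expressing $f(k,n)$ as a finite signed combination of the elementary expected quantities handled by that theorem. First, picking an ordering $\psi_\Bg^\og$ on each representative of a class in the (by hypothesis finite) set $\Psi^+_{<r'}$, the proposition following Definition~\ref{de_injective_morphisms} together with the defining formula \eqref{eq_approximate_indicator} for $I_{r'}$ yields
\begin{equation*}
I_{r'}(\Psi, G_\Bg) = \sum_{[\psi_\Bg] \in \Psi^+_{<r'}} \mu_{\Psi^+}([\psi_\Bg]) \, \#([\psi_\Bg^\og] \cap G_\Bg),
\end{equation*}
a finite, $G_\Bg$-linear combination.

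Second, imitating the argument for Theorem~\ref{th_main_tech_result_for_cert}, I would partition ${\rm CERT}_{<\nu,<r''}(G,k)$ according to the homotopy type $T^\og$ of its walks, of which only finitely many have order $<r''$. For each such $T^\og$ the set $U_T = \{\mec k \in \naturals^{E_T} \mid \mu_1(\VLG(T,\mec k)) < \nu\}$ is an upper set by Lemma~\ref{le_vlg_compare}, and hence has finitely many minimal elements $\bec\xi^1,\ldots,\bec\xi^s$ by Lemma~\ref{le_finite_upper_mins}; inclusion/exclusion on the events $\{\mec k \ge \bec\xi^m\}$ produces
\begin{equation*}
\#\bigl({\rm CERT}_{<\nu,<r''}(G,k) \cap \SNBC(T^\og; G, k)\bigr) = \sum_{\emptyset \ne M \subseteq [s]} (-1)^{1+\#M} \snbc(T^\og, \ge \bec\xi^M; G, k),
\end{equation*}
where $\bec\xi^M$ is the coordinate-wise maximum of $\{\bec\xi^m\}_{m \in M}$.

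Combining the two decompositions and using linearity of expectation on the finite space $\cC_n(B)$ expresses $f(k,n)$ as a finite signed sum of quantities of the form $\EE_{G \in \cC_n(B)}[\#([\psi_\Bg^\og] \cap G_\Bg)\, \snbc(T^\og, \ge \bec\xi^M; G, k)]$, each of which Theorem~\ref{th_main_certified_pairs} handles. Because $\bec\xi^M \in U_T$ we have $\mu_1(\VLG(T,\bec\xi^M)) < \nu$, so the $\nu_M = \max(\mu_1^{1/2}(B), \mu_1(\VLG(T,\bec\xi^M)))$ of that theorem is at most $\max(\mu_1^{1/2}(B), \nu)$, and the resulting $(B,\nu_M)$-bounded expansions of order $r$ combine into the desired expansion of $f(k,n)$, whose coefficients inherit the eigenvalue-of-the-model property from each term.

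For the vanishing statement, Theorem~\ref{th_main_certified_pairs} makes each individual coefficient vanish for $i$ less than the order of any $B$-graph containing both a walk of homotopy type $T^\og$ and a subgraph isomorphic to $\psi_\Bg$. Since $[\psi_\Bg] \in \Psi^+$, any such $B$-graph contains a copy of some element of $\Psi$ and therefore meets $\Psi$; its order is thus at least the minimum order of a $B$-graph occurring in $\cC_n(B)$ that meets $\Psi$, and summing the finitely many vanishing contributions yields the claimed $c_i(k) = 0$. The main obstacle is really careful bookkeeping---in particular, the passage from ``contains $\psi_\Bg$'' to ``meets $\Psi$'' in the vanishing step, and the verification that a finite signed sum preserves both the $(B,\nu)$-bounded structure and the eigenvalue-of-the-model property---since Theorem~\ref{th_main_certified_pairs} already absorbs all the serious analytic content.
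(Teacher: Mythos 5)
Your proposal is correct and follows the paper's proof essentially step for step: expand $I_{r'}(\Psi,G_\Bg)$ as a finite $\mu_{\Psi^+}$-weighted combination of counts $\#([\psi_\Bg^\og]\cap G_\Bg)$, decompose ${\rm cert}_{<\nu,<r''}$ by homotopy type using the inclusion/exclusion identity \eqref{eq_inclusion_exclusion_M}, and invoke Theorem~\ref{th_main_certified_pairs} on each resulting summand $\EE[\#([\psi_\Bg^\og]\cap G_\Bg)\,\snbc(T^\og,\ge\bec\xi^M;G,k)]$. The only difference is cosmetic---you spell out the $\nu_M$ bound and the transfer of the coefficient-vanishing condition (from ``contains $\psi_\Bg$'' to ``meets $\Psi$''), details the paper leaves implicit after citing the theorem.
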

As usual, the same lemma holds for the weakly-certified trace
${\rm cert}_{\le\nu,<r''}(G,k)$, but we shall not need this result.
\begin{proof}
Since $\Psi^+_{<r'}$ is finite,
\eqref{eq_approximate_indicator} 
implies that $I_{r'}(\Psi,G_\Bg)$ is a finite linear combination of
functions $G_\Bg\mapsto N(\psi_\Bg,G_\Bg)$;
hence to prove this lemma, it suffices to prove such expansions
exist for functions of the form
$$
f(k,n)=\EE_{G\in\cC}[ N(\psi_\Bg,G_\Bg) {\rm cert}_{<\nu,<r''}(G,k)] 
$$
with $[\psi_\Bg]\in\Psi^+_{<r'}$.
By subdividing the certified walks
by their homotopy type and applying
\eqref{eq_inclusion_exclusion_M}, 
it suffices to prove such expansions
exist for functions of the form
\begin{align*}
f(k,n) & =
\EE_{G\in\cC}[ N(\psi_\Bg,G_\Bg) \snbc(T^{\og},\ge \bec\xi;G,k)] \\
& = 
\EE_{G\in\cC}[ \#([\psi_\Bg]\cap G_\Bg) \snbc(T^{\og},\ge \bec\xi;G,k)] 
\end{align*}
for a fixed $[\psi_\Bg]\in\Psi^+_{<r}$,
a fixed ordered graph, $T^{\og}$, and a fixed $\bec\xi$ with
\begin{equation}\label{eq_vlg_xi_nu_bound}
\mu_1 \bigl( \VLG(T,\bec\xi) \bigr) <\nu.
\end{equation} 
But this follows from Theorem~\ref{th_main_certified_pairs}.
\end{proof}


\begin{lemma}
\label{le_large_ind_order}
Let $B$ be a graph and $\Psi$ a collection of finite isomorphism classes
of $B$-graphs such that
$\Psi^+_{<r}$ is finite for all $r$.  
Let $\II_{\ord_\Psi^{-1}(\ge r)}(G_\Bg)$ denote
the indicator function
of those $B$-graphs $G_\Bg$ with $\ord_\Psi(G)\ge r$.
Then for any algebraic model $\cC_n(B)$ and $r\in\naturals$
there is a $C$ such that 
\begin{align*}
\EE_{G\in\cC_n(B)}[\II_{\ord_\Psi^{-1}(\ge r)}(G_\Bg)
{\rm cert}_{<\nu,<r}(G,k)]  
& \le C n^{-r+1}\mu_1(B)^k
\\
\EE_{G\in\cC_n(B)}[\II_{\ord_\Psi^{-1}(\ge r)}(G_\Bg)
I(G,\Psi,r) 
{\rm cert}_{<\nu,<r}(G,k)] 
& \le C n^{-r+1}\mu_1(B)^k 
\\
\EE_{G\in\cC_n(B)}[\II_{\ord_\Psi^{-1}(\ge r)}(G_\Bg)
I(G,\Psi,r) ]
& \le C n^{-r}
\end{align*}
for all $k\in\naturals$.
\end{lemma}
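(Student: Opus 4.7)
The plan is to reduce all three inequalities to expectations of the form $\EE_{G\in\cC_n(B)}[N(\psi'_\Bg, G_\Bg)\, X]$, in which $[\psi'_\Bg]$ ranges over the \emph{finite} set $\Psi^+_{<r+s}\setminus \Psi^+_{<r}$ (where $s$ bounds the number of edges among the generators of $\Psi$), and $X$ is either $1$ or ${\rm cert}_{<\nu,<r}(G,k)$. The vanishing of the leading Taylor coefficients, forced by the condition $\ord(\psi'_\Bg)\ge r$, then produces the powers of $1/n$ claimed.

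First I would set up the indicator reductions. Lemma~\ref{le_morphisms_factor}(2) gives that whenever $\ord_\Psi(G_\Bg)\ge r$ some $[\psi'_\Bg]\in\Psi^+_{<r+s}\setminus\Psi^+_{<r}$ satisfies $N(\psi'_\Bg,G_\Bg)\ge 1$, and hence the pointwise inequality
\[
\II_{\ord_\Psi^{-1}(\ge r)}(G_\Bg)\ \le\ \sum_{[\psi'_\Bg]\in \Psi^+_{<r+s}\setminus \Psi^+_{<r}} N(\psi'_\Bg, G_\Bg).
\]
Theorem~\ref{th_mobius_inversion_truncation}(2) shows that for those same $G_\Bg$ one has $|I(G,\Psi,r)|$ bounded by a constant multiple of the same sum, so $\II_{\ord_\Psi^{-1}(\ge r)}(G_\Bg)|I(G,\Psi,r)|$ is bounded by a constant multiple of the same sum as well. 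Substituting into the three expectations reduces the first two to the shape $\sum_{\psi'_\Bg} \EE[N(\psi'_\Bg,G)\,{\rm cert}_{<\nu,<r}(G,k)]$ and the third to $\sum_{\psi'_\Bg} \EE[N(\psi'_\Bg,G)]$, with each sum over the finite set above.

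For the first two reductions I would proceed exactly as in Lemma~\ref{le_main_pair_exp_conclusion}: fix an ordering $\psi'^{\og}_\Bg$ so that $N(\psi'_\Bg,G_\Bg)=\#([\psi'^{\og}_\Bg]\cap G_\Bg)$, and decompose ${\rm cert}_{<\nu,<r}(G,k)$ into a finite linear combination, over the finitely many homotopy types $T^\og$ of order less than $r$ and the finitely many $\bec\xi^M$ produced by the inclusion-exclusion used in the proof of Theorem~\ref{th_main_tech_result_for_cert}, of the quantities $\snbc(T^\og,\ge\bec\xi^M;G,k)$. Theorem~\ref{th_main_certified_pairs} then gives a $(B,\nu)$-bounded expansion of each $\EE\bigl[\#([\psi'^{\og}_\Bg]\cap G)\,\snbc(T^\og,\ge\bec\xi^M;G,k)\bigr]$ whose coefficient $c_i(k)$ vanishes for $i$ below the minimum order of any $B$-graph that contains both $\psi'_\Bg$ and a walk of homotopy type $T^\og$; since this order is at least $\ord(\psi'_\Bg)\ge r$, the coefficients $c_0,\ldots,c_{r-1}$ all vanish, and expansion to order $r-1$ leaves only an $O(1)c_{r-1}(k)/n^{r-1}$ error term with $c_{r-1}$ of growth $\mu_1(B)$. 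Summing the finitely many contributions yields the $Cn^{-r+1}\mu_1(B)^k$ bound.

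For the third inequality, the same reduction combined with the strongly algebraic expansion \eqref{eq_expansion_S} gives $\EE[\#([\psi'^{\og}_\Bg]\cap G)]=c_0+c_1/n+\cdots$ with $c_i=0$ for $i<\ord(\psi'_\Bg)$, and since $\ord(\psi'_\Bg)\ge r$ this yields $\EE[N(\psi'_\Bg,G)]=O(n^{-r})$; summing over the finite set of $\psi'_\Bg$ gives the $Cn^{-r}$ bound. The main technical obstacle is the bookkeeping to keep the implicit constants uniform after the finite summations over $\psi'_\Bg$, $T^\og$ and $\bec\xi^M$, and to convert each \emph{function of growth $\mu_1(B)$} arising in the error term into the pointwise $\mu_1(B)^k$ bound appearing in the lemma; this requires absorbing the standard $(\mu_1(B)+\epsilon)^k$ slack into the constant $C$ (with $\epsilon$ fixed small enough once and for all), and will only be valid in the range of $k$ where the $(B,\nu)$-bounded expansions apply, namely $k$ up to a constant times $n^{1/2}$.
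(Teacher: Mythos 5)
Your handling of the third inequality, and your use of Lemma~\ref{le_morphisms_factor} and Theorem~\ref{th_mobius_inversion_truncation} to reduce to expectations of $N(\psi'_\Bg,G_\Bg)$ over the finite set $\Psi^+_{<r+s}\setminus\Psi^+_{<r}$, are correct and match the paper. But for the first two inequalities your route diverges from the paper's, and as you yourself flag in the last paragraph, it leaves a genuine gap.

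You keep $\mathrm{cert}_{<\nu,<r}(G,k)$ inside the expectation, decompose it over homotopy types and certificates, and invoke the $(B,\nu)$-bounded expansion of Theorem~\ref{th_main_certified_pairs}. There are two obstructions you cannot paper over as ``bookkeeping.'' First, that expansion is only valid for $1\le k\le n^{1/2}/C$, while the lemma asserts the bound \emph{for all} $k\in\naturals$; you have no argument for large $k$. Second, the error coefficient $c_r(k)$ in a $(B,\nu)$-bounded expansion is merely ``of growth $\mu_1(B)$,'' which gives $|c_r(k)|\le C_\epsilon(\mu_1(B)+\epsilon)^k$ for every $\epsilon>0$ but \emph{not} $|c_r(k)|\le C\,\mu_1(B)^k$. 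The ratio $(\mu_1(B)+\epsilon)^k/\mu_1(B)^k=(1+\epsilon/\mu_1(B))^k$ is unbounded in $k$ (and, over the range $k\lesssim n^{1/2}$, is not even polynomially bounded in $n$), so it cannot be absorbed into a constant $C$. Your approach thus proves a weaker statement than the lemma claims.

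The paper's argument is both shorter and avoids both issues by not decomposing $\mathrm{cert}$ at all. It uses the deterministic pointwise chain
\[
0\ \le\ \mathrm{cert}_{<\nu,<r}(G,k)\ \le\ \mathrm{snbc}(G,k)\ \le\ \Trace(H_G^k)\ \le\ n\,\Trace(H_B^k)\ \le\ n\,(\#\Edir_B)\,\mu_1^k(B),
\]
valid for every graph $G\in\Coord_n(B)$ and every $k\in\naturals$. This factors out $n\,\mu_1^k(B)$ exactly, so the first two inequalities reduce to showing $\EE[\II_{\ord_\Psi^{-1}(\ge r)}]=O(n^{-r})$ and $\EE[\II_{\ord_\Psi^{-1}(\ge r)}\,|I_r(\Psi,\cdot)|]=O(n^{-r})$, which are $k$-independent and follow exactly as you argued for the third inequality. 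To repair your proof, replace the application of Theorem~\ref{th_main_certified_pairs} by this crude pointwise bound and reduce the first two bounds to the third.
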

\begin{proof}
For any graph $G$ we have
$$
0\le {\rm cert}_{<\nu,<r}(G,k) \le {\rm snbc}(G,k)\le \Trace(H_G^k)
\le n \Trace(H_B^k) \le  n \mu_1^k(B) (\#\Edir_B) .
$$
Hence it suffices to show that both
\begin{equation}\label{eq_prob_bound_large_order}
\EE_{G\in\cC_n(B)}[\II_{\ord_\Psi(\ge r)}(G_\Bg)]
\quad \mbox{i.e., 
$\Prob_{G\in\cC_n(B)}[\ord_\Psi(G_\Bg)\ge r]$},
\end{equation} 
and
\begin{equation}\label{eq_indicator_approx_with_large_order}
\EE_{G\in\cC_n(B)}[\II_{\ord_\Psi(\,\cdot\,)\ge r}(G_\Bg)
I_{r}(\Psi,G_\Bg)] 
\end{equation} 
are bounded by $O(n^{-r})$.  

By Lemma~\ref{le_morphisms_factor},
if $\ord_\Psi(G_\Bg)\ge r$ then $N(\psi_\Bg,G_\Bg)\ge 1$ for some
$\psi_\Bg\in\Psi^+_{<r+s}\setminus\Psi^+_{<r}$.  Hence
\begin{align*}
\EE_{G\in\cC_n(B)}[\II_{\ord_\Psi(\,\cdot\,)\ge r}(G_\Bg)]
& \le
\sum_{[\psi_\Bg]\in\Psi^+_{<r+s}\setminus\Psi^+_{<r}}
\Prob_{G\in\cC_n(B)}[ N(\psi_\Bg,G_\Bg) > 0 ]
\\
& \le
\sum_{[\psi_\Bg]\in\Psi^+_{<r+s}\setminus\Psi^+_{<r}}
\EE_{G\in\cC_n(B)} [N(\psi_\Bg,G_\Bg)] .
\end{align*}
Since the above sum is over finitely many classes $\psi_\Bg$, and
since for each $\psi_\Bg$ we have that
\begin{equation}\label{eq_reusable_inequality}
\EE_{G\in\cC_n(B)}[ N(\psi_\Bg,G_\Bg) ] =
\EE_{G\in\cC_n(B)}[ \#([\psi_\Bg^\og]\cap G_\Bg)] = O(n^{-\ord(\psi)})
= O(n^{-r})
\end{equation} 
since $\cC_n(B)$ is algebraic, we get the desired bound on 
\eqref{eq_prob_bound_large_order}.

To get the desired bound on \eqref{eq_indicator_approx_with_large_order},
since $I_r(\Psi,G_\Bg)$ is a linear combination of the functions
$N(\psi'_\Bg,G_\Bg)$ with $\psi'_\Bg\in\Psi^+_{<r}$, it suffices to prove such
a bound for each function
$$
\EE_{G\in\cC_n(B)}[\II_{\ord_\Psi^{-1}(\ge r)}(G_\Bg)N(\psi'_\Bg,G_\Bg)] \ .
$$
But if $\ord_\Psi(G)\ge r$, then we have
$$
N(\psi'_\Bg,G_\Bg) \le C 
\sum_{[\psi_\Bg]\in\Psi^+_{<r+s}\setminus\Psi^+_{<r}}
\EE_{G\in\cC_n(B)}[ N(\psi_\Bg,G_\Bg) ]
$$
according to \eqref{eq_factor_inequality}, where $C$ is the maximum
value of $N(\psi'_\Bg,\psi_\Bg)$ over 
$[\psi_\Bg]\in\Psi^+_{<r+s}\setminus\Psi^+_{<r}$;
hence it suffices to bound
$$
\EE_{G\in\cC_n(B)}[ N(\psi_\Bg,G_\Bg) ]
$$
for each $\psi_\Bg$ with $[\psi_\Bg]\in \Psi^+_{<r+s}\setminus\Psi^+_{<r}$, 
which again is implied by
\eqref{eq_reusable_inequality}.
\end{proof}

We will see that the following lemma almost immediately implies
Theorem~\ref{th_exp_ind_cert}.

\begin{lemma}\label{le_stronger}
Let $B$ be a graph, and $\Psi$ a finite family of $B$-graphs such that
$\Psi^+_{<s}$ is finite for all $s\in\naturals$.
Then for any fixed $\nu>0$ and $r''\in\naturals$ we have that
\begin{equation}\label{eq_meets_times_cert}
f(k,n)=\EE_{G\in\cC_n(B)}
[\II_{{\rm Meets}(\Psi)}(G_\Bg){\rm cert}_{<\nu,<r''}(G,k)]
\end{equation} 
has a $(B,\nu)$-Ramanujan expansion to any order $r\in\naturals$
$$
c_0(k)+\cdots+c_{r-1}(k)/n^{r-1}+ O(1) c_r(k)/n^r,
$$
where the bases of the $c_i=c_i(k)$ are the exponents of the model,
and where $c_i(k)=0$ provided that $i$ is less than the order
of any $B$-graph occurring in $\cC_n(B)$ that meets $\Psi$, i.e.,
that lies in an element of $\Psi$.
\end{lemma}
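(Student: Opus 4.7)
The plan is to reduce Lemma~\ref{le_stronger} to Lemma~\ref{le_main_pair_exp_conclusion} by replacing the indicator $\II_{{\rm Meets}(\Psi)}$ with the truncated $\Psi$-indicator $I_{r^*}(\Psi,\,\cdot\,)$ of Theorem~\ref{th_mobius_inversion_truncation}, where $r^* \eqdef r+1$ and $r$ is the desired order of the expansion. I first write
$$
\II_{{\rm Meets}(\Psi)}(G_\Bg)
= I_{r^*}(\Psi,G_\Bg) + E(G_\Bg),\qquad
E(G_\Bg)\eqdef \II_{{\rm Meets}(\Psi)}(G_\Bg)-I_{r^*}(\Psi,G_\Bg),
$$
which, multiplied by ${\rm cert}_{<\nu,<r''}(G,k)$ and averaged over $\cC_n(B)$, decomposes $f(k,n)$ as a ``main term'' $\EE[I_{r^*}(\Psi,G_\Bg)\,{\rm cert}_{<\nu,<r''}(G,k)]$ plus an ``error term'' $\EE[E(G_\Bg)\,{\rm cert}_{<\nu,<r''}(G,k)]$. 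By part~(1) of Theorem~\ref{th_mobius_inversion_truncation}, $E$ vanishes on $\{\ord_\Psi(G_\Bg)<r^*\}$, and since $|\II_{{\rm Meets}(\Psi)}|\le 1$, one has the pointwise bound $|E(G_\Bg)|\le \II_{\ord_\Psi^{-1}(\ge r^*)}(G_\Bg)(1+|I_{r^*}(\Psi,G_\Bg)|)$.

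Second, I apply Lemma~\ref{le_main_pair_exp_conclusion} with truncation level $r'=r^*$ to the main term, obtaining a $(B,\nu)$-Ramanujan expansion to order $r$ whose bases lie in any prescribed set of eigenvalues of the model, and whose coefficients $c_i(k)$ vanish for all $i<r$ that are strictly smaller than the minimum order $i_0$ of a $B$-graph meeting $\Psi$ that occurs in $\cC_n(B)$. (Here I use the hypothesis that $\Psi^+_{<s}$ is finite for all $s$ to guarantee that $I_{r^*}(\Psi,\cdot)$ is a finite linear combination of the counting functions $N(\psi_\Bg,\,\cdot\,)$ to which Lemma~\ref{le_main_pair_exp_conclusion} applies.)

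Third, I bound the error term via the trivial estimate ${\rm cert}_{<\nu,<r''}(G,k)\le \Trace(H_G^k)\le n(\#\Edir_B)\mu_1(B)^k$ combined with Lemma~\ref{le_large_ind_order} (or, more precisely, its proof) applied at the shifted level $r^*=r+1$, yielding
$$
\bigl|\EE[E(G_\Bg)\,{\rm cert}_{<\nu,<r''}(G,k)]\bigr| \le C\, n^{-r}\, \mu_1(B)^k,
$$
which has exactly the form $O(1) c_r(k)/n^r$ demanded by a $(B,\nu)$-bounded expansion of order $r$ with $c_r(k)$ of growth $\mu_1(B)$. Summing main term plus error term, and absorbing the error into $c_r(k)$ by augmenting the latter by a summand of growth $\mu_1(B)$, gives the desired expansion for $f(k,n)$; the vanishing of $c_0(k),\dots,c_{i_0-1}(k)$ is inherited from the main term since the error contributes only at order $n^{-r}$.

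The one mild technical obstacle is that Lemma~\ref{le_large_ind_order} as stated gives the bound $Cn^{-r+1}\mu_1(B)^k$ rather than $Cn^{-r}\mu_1(B)^k$; this off-by-one power of $n$ is exactly what forces the choice of truncation level $r^*=r+1$ instead of $r$, and is permissible precisely because $\Psi^+_{<s}$ is finite for every $s\in\naturals$. Once this shift is made, everything else is routine bookkeeping, and the argument reduces to invoking Lemmas~\ref{le_main_pair_exp_conclusion}, \ref{le_large_ind_order}, and Theorem~\ref{th_mobius_inversion_truncation} in the order above.
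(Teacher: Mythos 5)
Your proposal is correct and takes essentially the same approach as the paper: both shift the truncation level to $r^*=r+1$, write $\II_{{\rm Meets}(\Psi)} = I_{r^*}(\Psi,\cdot) + E$ with $E$ supported on $\{\ord_\Psi\ge r^*\}$ (the paper phrases this as $I_{r'} - I_{r'}\II_{\ge r'} + \II_{\ge r'}$, which is the same identity), apply Lemma~\ref{le_main_pair_exp_conclusion} to the main term, and bound the error via Lemma~\ref{le_large_ind_order}. Your remark that one really invokes the proof of Lemma~\ref{le_large_ind_order} rather than its literal statement (since the cert-trace parameter $r''$ need not match the indicator's truncation level) is a fair and slightly more careful reading than the paper's.
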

\begin{proof}
Let $r'=r+1$.  We have
\begin{equation}\label{eq_indicator_Psi_inv_sum}
1 = 
\II_{\ord_\Psi^{-1}(\ge r')}(G_\Bg) + 
\II_{\ord_\Psi^{-1}(< r')}(G_\Bg) 
\end{equation} 
where $\ord_\Psi^{-1}(\ge r')$ and $\ord_\Psi^{-1}(<r')$ denote the set
of $B$-graphs, $G_\Bg$, for which $\ord_\Psi(G_\Bg)$ is, respectively,
$\ge r'$ and $<r'$.
Using the facts that
\begin{enumerate}
\item $\ord_\Psi(G_\Bg)\ge r'$ implies that $G\in{\rm Meets}(\Psi)$, and
\item $\ord_\Psi(G_\Bg)<r'$ implies that 
$I_{r'}(\Psi,G_\Bg)=\II_{{\rm Meets}(\Psi)}(G_\Bg)$,
\end{enumerate}
we have
\begin{align}
\label{eq_meets_on_lhs}
\II_{{\rm Meets}(\Psi)}(G_\Bg) 
& =
\II_{{\rm Meets}(\Psi)}(G) \II_{\ord_\Psi^{-1}(< r')}(G_\Bg)
+ \II_{{\rm Meets}(\Psi)}(G)  \II_{\ord_\Psi^{-1}(\ge r')}(G_\Bg) 
\\
\nonumber
& =
I_{r'}(\Psi,G_\Bg) \II_{\ord_\Psi^{-1}(< r')}(G_\Bg)
+ \II_{\ord_\Psi^{-1}(\ge r')}(G_\Bg)  
\\
\label{eq_terms_we_like}
& =
I_{r'}(\Psi,G_\Bg) 
-
I_{r'}(\Psi,G_\Bg) \II_{\ord_\Psi^{-1}(\ge r')}(G_\Bg) 
+ \II_{\ord_\Psi^{-1}(\ge r')}(G_\Bg)  
\end{align}
Now we multiply the left-hand-side of \eqref{eq_meets_on_lhs} and
\eqref{eq_terms_we_like} by ${\rm cert}_{<\nu,<r''}(G,k)$ and taking
expected values: the left-hand-side becomes \eqref{eq_meets_times_cert},
and the individual summands of \eqref{eq_terms_we_like} become a sum of
\begin{equation}\label{eq_first_summand_in_mess}
\EE_{G\in\cC_n(B)}[I_{r'}(\Psi,G_\Bg){\rm cert}_{<\nu,<r''}(G,k)]
\end{equation}
plus terms bounded by $C n^{-r'+1}\mu_1(B)^k=Cn^{-r}\mu_1(B)$
by Lemma~\ref{le_large_ind_order}.
By Lemma~\ref{le_main_pair_exp_conclusion}, 
\eqref{eq_first_summand_in_mess} has an $(B,\nu)$-bounded order $r$
expansion, and hence so does
\eqref{eq_meets_times_cert}.
\end{proof}

\begin{proof}[Proof of Theorem~\ref{th_exp_ind_cert}]
If $\cT$ is a set of $B$-graphs, then let $\Psi$ be the isomorphism
classes of a finite set of positive generators of $\cT$; if
$\cT$ is a set of graphs, let $\Psi$
consist of all possible $B$-graph structures on
elements of a finite set of positive generators of $\cT$.

Then for all $G=G_\Bg\in\cC_n(B)$, 
$G$ meets $\cT$ iff $G_\Bg$ meets $\Psi$.
According to Lemma~\ref{le_finiteness_and_positivity}, 
$\Psi^+_{<r}$ is finite for all $r$.
Now we apply Lemma~\ref{le_main_pair_exp_conclusion}.
\end{proof}

\section{Proof of Theorem~\ref{th_exp_ind}}
\label{se_ind_proof}

In this section we easily prove Theorem~\ref{th_exp_ind} based
on the methods we have already developed.

\begin{proof}[Proof of Theorem~\ref{th_exp_ind}]
If $\cT$ is a set of $B$-graphs, then let $\Psi$ be the isomorphism
classes of a finite set of positive generators of $\cT$; if
$\cT$ is a set of graphs, let $\Psi$
consist of all possible $B$-graph structures on
elements of a finite set of positive generators of $\cT$.
Hence ${\rm Meets}(\cT)={\rm Meets}(\Psi)$.

Fix an $r\in\naturals$.
Let $s$ be the largest number of edges in a graph in a class of
$\Psi$.  According to Lemma~\ref{le_finiteness_and_positivity},
$\Psi^+_{r+s}$ is finite.

With notation as in \eqref{eq_indicator_Psi_inv_sum}, we have
\begin{align*}
\II_{{\rm Meets}(\Psi)}(G) 
& = 
\II_{{\rm Meets}(\Psi)}(G) 
\II_{\ord_\Psi^{-1}(< r)}(G_\Bg)  +
\II_{{\rm Meets}(\Psi)}(G) 
\II_{\ord_\Psi^{-1}(\ge r)}(G_\Bg) 
\\
& =
\II_{{\rm Meets}(\Psi)}(G)
\II_{\ord_\Psi^{-1}(< r)}(G_\Bg)  +
\II_{\ord_\Psi^{-1}(\ge r)}(G_\Bg)
\end{align*}
since $\ord_\Psi(G_\Bg)\ge r$ implies that $G_\Bg$ meets $\Psi$.
With $I_r(\Psi,G_\Bg)$ as in \eqref{eq_approximate_indicator}, we have
$$
\II_{{\rm Meets}(\Psi)}(G)
\II_{\ord_\Psi^{-1}(< r)}(G_\Bg)
=
I_r(\Psi,G_\Bg)
\II_{\ord_\Psi^{-1}(< r)}(G_\Bg),
$$
since $I_r(\Psi,G_\Bg)=\II_{{\rm Meets}(\Psi)}(G)$ provided that
$\ord_\Psi(G_\Bg)<r$.
Combining the above two displayed equations we have
$$
\II_{{\rm Meets}(\Psi)}(G) =
I_r(\Psi,G_\Bg)
\II_{\ord_\Psi^{-1}(< r)}(G_\Bg)  +
\II_{\ord_\Psi^{-1}(\ge r)}(G_\Bg)
$$
Taking expected values yields
\begin{equation}\label{eq_meets_and_I_r}
\EE_{G\in\cC_n(B)}[\II_{{\rm Meets}(\Psi)}(G)] =
\EE_{G\in\cC_n(B)}\bigl[
I_r(\Psi,G_\Bg)
\II_{\ord_\Psi^{-1}(< r)}(G_\Bg) 
\bigr] 
+
O(n^{-r})
\end{equation} 
in view of Lemma~\ref{le_large_ind_order}.
In view of the fact that $\Psi^+_{<r+s}\setminus\Psi^+_{<r}$ is finite,
taking expected values in \eqref{eq_reasonable_trunc_bound} yields
$$
\EE_{G\in\cC_n(B)}\bigl[
I_r(\Psi,G_\Bg)
\II_{\ord_\Psi^{-1}(\ge r)}(G_\Bg) 
\bigr] 
\le
\sum_{[\psi_\Bg]\in \Psi^+_{<r+s}\setminus\Psi^+_{<r}} O(n^{-\ord(\psi)})
=O(n^{-r}).
$$
Adding this to \eqref{eq_meets_and_I_r} yields
\begin{align*}
& \EE_{G\in\cC_n(B)}[\II_{{\rm Meets}(\Psi)}(G)]  + O(n^{-r})
\\
& = 
\EE_{G\in\cC_n(B)}\bigl[
I_r(\Psi,G_\Bg)
\II_{\ord_\Psi^{-1}(< r)}(G_\Bg) 
\bigr] 
+
\EE_{G\in\cC_n(B)}\bigl[
I_r(\Psi,G_\Bg)
\II_{\ord_\Psi^{-1}(\ge r)}(G_\Bg) 
\bigr] 
+ O(n^{-r}),
\end{align*}
and, since $\II_{\ord_\Psi^{-1}(< r)}+\II_{\ord_\Psi^{-1}(\ge r)}=1$,
\begin{equation}
\label{eq_meets_I_r_within_ord_r}
\EE_{G\in\cC_n(B)}[\II_{{\rm Meets}(\Psi)}(G)]  =
\EE_{G\in\cC_n(B)}\bigl[
I_r(\Psi,G_\Bg)
\bigr]  + O(n^{-r}).
\end{equation} 

Next consider
$$
\EE_{G\in\cC_n(B)}[I_r(\Psi,G_\Bg)] .
$$
For
each $S_\Bg\in\Psi^+_{<r}$ we have
\begin{equation}\label{eq_expected_N}
\EE_{G\in\cC_n(B)}[N(S_\Bg,G_\Bg)] = 
\EE_{G\in\cC_n(B)}[ \#[S_\Bg]\cap G_\Bg ] =
c_0+c_1/n+\ldots+c_{r-1}/n^{r-1}+O(1/n^r),
\end{equation} 
with $c_i=0$ for $i<\ord(S)$ since $\cC_n(B)$ is
algebraic; furthermore $c_i>0$ if $i=\ord(S)$ and
$S_\Bg$ occurs in $\cC_n(B)$.  Since $\Psi^+_{<r}$ is finite and 
$I_r(\Psi,G_\Bg)$ is, by definition, a finite linear combination of
functions of the form \eqref{eq_expected_N}, we have
\begin{equation}\label{eq_trunc_indic_has_series}
\EE_{G\in\cC_n(B)}[I_r(\Psi,G_\Bg)]
=
c_0+c_1/n+\ldots+c_{r-1}/n^{r-1}+O(1/n^r),
\end{equation}
where if $j$ is the minimum order of an element of $\Psi^+_{<r}$
occurring in $\cC_n(B)$, then
with $c_i=0$ for $i<j$ (assuming that $r>i$ so that
$c_i$ is uniquely defined).

Now let $j$ be the minimum order of an element of $\Psi^+$ occurring
in $\cC_n(B)$.
By Lemma~\ref{le_pruned_inclusions}, if
$[S_\Bg]\in\Psi^+$ occurs in $\cC_n(B)$ 
and $\ord(S)=j$, then
$[S_\Bg]$ is a minimal element of the partially
ordered set $\Psi^+$.
[This implies that $[S_\Bg]\in\Psi$, but this is inconsequential here.]
It follows that 
$$
\mu[S_\Bg] = 1 / \bigl( \# {\rm Aut}(S_\Bg)\bigr) > 0.
$$
It follows that $c_j$ in \eqref{eq_trunc_indic_has_series} is
the sum of $\mu[S_\Bg]$ over all such $S_\Bg$, and hence
$c_j>0$ (again, assuming $r>j$ so that $c_j$ is uniquely determined).

Combining \eqref{eq_trunc_indic_has_series} with
\eqref{eq_meets_I_r_within_ord_r}
yields
$$
\EE_{G\in\cC_n(B)}[\II_{{\rm Meets}(\Psi)}(G)]  =
c_0+c_1/n+\ldots+c_{r-1}/n^{r-1}+O(1/n^r),
$$
for the same $c_i$ as in \eqref{eq_trunc_indic_has_series},
as desired.
%
%
\end{proof}

\section{Conclusion of the Proofs of Theorems~\ref{th_main_two_results}
and \ref{th_extra_needed}}
\label{se_finish_proof}

In this section we complete the proof
of Theorem~\ref{th_main_two_results}.
We will need the following straightforward fact.

\begin{lemma}\label{le_min_tang_ord_pos}
Let $\nu>1$ and $r\in\naturals$,
and let $\psi$ be a minimal $(\ge\nu,<r)$-tangle
(Definition~\ref{de_minimal_tangle},
i.e., $\psi$ is a $(\ge\nu,<r)$-tangle but no proper subgraph of
$\psi$ is).
Then $\ord(\psi)\ge 1$.
\end{lemma}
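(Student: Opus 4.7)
The plan is to show first that a minimal $(\ge\nu,<r)$-tangle with $\nu>1$ must be pruned, and then invoke the second statement of Lemma~\ref{le_pruned_cycle_or_pos_ord} together with $\ord(\psi)\ge -\chi(\psi)$ to conclude $\ord(\psi)\ge 1$.

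First I would argue that $\psi$ has no vertex of degree $0$ or $1$. Since $\psi$ is connected and $\mu_1(\psi)\ge\nu>1$, $\psi$ cannot be a single isolated vertex (whose non-backtracking matrix is the empty or zero matrix, so $\mu_1=0$). Hence $\psi$ has at least two vertices, and any vertex of degree $\le 1$ must be incident upon exactly one non-self-loop edge $e$ (a single half-loop at $v$ would disconnect $v$ from the rest of $\psi$). Let $\psi'$ be the graph obtained by removing such a vertex $v$ together with the pair $\{e,\iota_\psi e\}$. Then $\psi'$ is a proper subgraph of $\psi$ that is still connected, and $\ord(\psi')=\ord(\psi)$ since we have removed exactly one vertex and one edge. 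Moreover, every SNBC walk in $\psi$ avoids $v$ (a walk reaching $v$ must return via $\iota_\psi e$, which is backtracking), so the Hashimoto matrix $H_{\psi'}$ is obtained from $H_\psi$ by deleting the rows/columns indexed by $e,\iota_\psi e$, which have no role in any closed non-backtracking walk. Hence $\mu_1(\psi')=\mu_1(\psi)\ge\nu$, so $\psi'$ is still a $(\ge\nu,<r)$-tangle, contradicting minimality of $\psi$. Therefore every vertex of $\psi$ has degree at least two, i.e.\ $\psi$ is pruned.

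Now $\psi$ is connected and pruned with $\mu_1(\psi)\ge\nu>1$. By the second statement of Lemma~\ref{le_pruned_cycle_or_pos_ord}, $\chi(\psi)<0$. Since $\ord(\psi)\ge -\chi(\psi)$ (with equality iff $\psi$ has no half-loops), we conclude $\ord(\psi)\ge 1$.

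The only step requiring care is the pruning argument, and in particular the verification that removing a leaf does not change $\mu_1$ of the Hashimoto matrix; this is where one must be a bit careful about the graph having half-loops versus ordinary edges, but the walk-theoretic observation (no SNBC walk traverses a leaf edge) handles all cases uniformly. Everything else is an immediate appeal to Lemma~\ref{le_pruned_cycle_or_pos_ord} and the definition of $\ord$.
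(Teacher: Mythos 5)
Your proof follows essentially the same route as the paper's: show $\psi$ is pruned by arguing that a leaf could be pruned off without changing $\mu_1$ or the order, contradicting minimality, and then invoke Lemma~\ref{le_pruned_cycle_or_pos_ord}. (You finish via $\mu_1>1\Leftrightarrow\chi<0$ and $\ord\ge-\chi$, while the paper uses the ``cycle or positive order'' dichotomy plus $\mu_1(\text{cycle})=1<\nu$; these are interchangeable.)

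There is one small logical slip worth flagging: the sentence ``Hence $\psi$ has at least two vertices'' does not follow from merely excluding the isolated vertex. A priori $\psi$ could have a single vertex carrying a half-loop (degree one) or whole-loops (degree $\ge 2$), and in neither case is it an isolated vertex. The correct patch, which is what the paper does, is to observe that if a vertex $v$ has degree zero, or degree one via a half-loop, then connectedness forces $V_\psi=\{v\}$, and in either of those two cases $H_\psi$ is the empty or zero $1\times 1$ matrix, so $\mu_1(\psi)=0<\nu$, a contradiction; a single vertex of degree $\ge 2$ is already pruned and causes no trouble. With that patch the pruning step (removing a degree-one leaf attached by a non-loop edge) goes through exactly as you wrote it, and the rest of your argument is sound.
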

\begin{proof}
By definition a tangle is necessarily 
connected, and hence $\psi$ is connected. 
Let us show that $\psi$ is pruned:
if $v\in V_\psi$, $v$ cannot be of degree one, incident upon
an edge, $e$, joining $v$ to some other vertex, $w$, for then we could
``prune'' $\psi$, discarding $v$ and $e$; this pruning preserves the
$\mu_1$ and the order, and does not affect the connectedness,
which would contradict the minimality of
$\psi$.
Furthermore, $v$ cannot be of degree one incident upon a half-loop,
or of degree zero, since in either case $V_\psi=\{v\}$ (since $\psi$
is connected), and then $\mu_1(\psi)$, in either case, is $0$.
Hence each vertex of $\psi$ is of degree at least two, and so $\psi$
is pruned.

Since $\mu_1$ of a cycle of any length equals $1$, and $\nu>1$,
$\psi$ cannot be a cycle.  Hence
Lemma~\ref{le_pruned_cycle_or_pos_ord} shows that $\ord(\psi)\ge 1$.
\end{proof}

%

\begin{proof}[Proof of Theorem~\ref{th_main_two_results}]
Let $\cT={\rm HasTangles}(\ge\nu,<r')$.
Each element of $\cT$ contains an $(\ge\nu,<r')$-tangle, and 
by Theorem~\ref{th_tangle_finite} there are finitely many such
tangles, up to isomorphism, that are minimal with respect to
inclusion.  
According to Lemma~\ref{le_min_tang_ord_pos}, each minimal tangle
has order at least one, and by definition any tangle is connected.
Hence
$\cT$ is finitely positively generated.

Hence we can apply 
Theorems~\ref{th_exp_ind} and~\ref{th_exp_ind_cert}
to $\cT={\rm HasTangles}(\ge\nu,<r)$.  
Theorem~\ref{th_exp_ind} implies that
$$
\Prob_{\cC_n(B)}[{\rm HasTangles}(\ge\nu,<r')] 
$$
has an expansion to any order $r$
$$
c_0 + c_1/n + \cdots + c_{r-1}/n^{r-1} + O(1/n^r)
$$
where $c_i=0$ for $i<r$ and 
$i<i_0$ where $i_0$ is the smallest order of a
$(\ge\nu)$-tangle.  Since $\nu>1$, 
Lemma~\ref{le_pruned_cycle_or_pos_ord} implies that $i_0\ge 1$.
Taking $r\ge i_0$ (which we are free to do),
since $c_0=0$ and
\begin{equation}\label{eq_tangle_free_has_tangles}
\II_{\rm TangleFree(\ge\nu,<r')} = 
1 - 
\II_{\rm HasTangles(\ge\nu,<r')},
\end{equation} 
it follows that
$$
\Prob_{\cC_n(B)}[{\rm TangleFree(\ge\nu,<r')}]
= 1 - c_1/n - \cdots - c_{r-1}/n^{r-1} + O(1/n^r)
$$
which, along with $c_i=0$ for $i<i_0$, establishes
the part of Theorem~\ref{th_main_two_results} regarding
the asymptotic expansions (and their coefficients) for
\eqref{eq_main_tech_result2}.

Similarly we use \eqref{eq_tangle_free_has_tangles},
and subtract the result in
Theorem~\ref{th_exp_ind_cert}
from than of
Theorem~\ref{th_main_tech_result_for_cert}
to obtain that for any $r',r''$
$$
\EE_{G\in\cC_n(B)}[ \II_{{\rm TangleFree}(\ge\nu,<r')}(G) 
{\rm cert}_{<\nu,r''}(G) ]
$$
has an expansion to any order $r$, whose coefficients $c_i(k)$
have $c_0(k)$ as in \eqref{eq_give_c_zero_k},
since the $1/n^i$-coefficients in
Theorem~\ref{th_exp_ind_cert} vanish for $i<r$ and
$i$ less than the smallest order of a $(\ge\nu)$-tangle,
which is at least $1$.
Now we take
$r=r'=r''\ge 1$; according to 
\eqref{eq_certified_versus_snbc}
we have
$$
\II_{{\rm TangleFree}(\ge\nu,<r)}(G)
{\rm cert}_{<\nu,<r}(G,k)
=
\II_{{\rm TangleFree}(\ge\nu,<r)}(G)
{\rm snbc}_{<r}(G,k) ,
$$
whereupon we have that
\begin{equation}\label{eq_less_than_r_done}
\EE_{G\in\cC_n(B)}[ \II_{{\rm TangleFree}(\ge\nu,<r)}(G) 
{\rm snbc}_{<r}(G,k) ]
\end{equation} 
has an expansion to order $r$ with $c_0(k)$ as in \eqref{eq_give_c_zero_k}.
Finally we note that 
\begin{equation}\label{eq_r_or_greater_done} 
\EE_{G\in\cC_n(B)}[ \II_{{\rm TangleFree}(\ge\nu,<r)}(G) 
{\rm snbc}_{\ge r}(G,k) ]
\le
\EE_{G\in\cC_n(B)}[ {\rm snbc}_{\ge r}(G,k) ] \le g(k) O(1)/n^r
\end{equation} 
where $g$ is a function of growth $\mu_1(B)$.
Adding~\eqref{eq_less_than_r_done}
and \eqref{eq_r_or_greater_done}, and using
$\snbc(G,k)=\Trace(H_G^k)$
establishes the claim in 
Theorem~\ref{th_main_two_results} for
asymptotic expansions of
\eqref{eq_main_tech_result1}.
\end{proof}

\begin{proof}[Proof of Theorem~\ref{th_extra_needed}]
We apply Theorem~\ref{th_exp_ind} with $\cT=[S_\Bg]$, which
is positively generated since $S$ is positive.  Since $S_\Bg$
occurs in $\cC_n(B)$ by assumption, and $[S_\Bg]$ is the unique generator
of $\cT$,
Theorem~\ref{th_exp_ind} implies
that
$$
\Prob_{G\in\cC_n(B)}\Bigl[ [S_\Bg]\cap G\ne\emptyset
\Bigr] = 
c_0 + c_1/n + \cdots + c_{r-1}/n^{r-1} + O(1/n^r)
$$
for any $r$, where for $i<r$ we have
$c_i=0$ if $i<\ord(S)$ and $c_i\ne 0$ if $i=\ord(S)$.
Taking any $r>\ord(S)$ yields
$$
\Prob_{G\in\cC_n(B)}\Bigl[ [S_\Bg]\cap G\ne\emptyset
\Bigr] = 
c_i/n^i +  O(1/n^{i+1})
$$
for $i=\ord(S)$ with $c_i>0$, which is bounded below by $C'/n^i$
for any $C'<c_i$ and $n$ sufficiently large.
\end{proof}

\appendix

\providecommand{\bysame}{\leavevmode\hbox to3em{\hrulefill}\thinspace}
\providecommand{\MR}{\relax\ifhmode\unskip\space\fi MR }
\providecommand{\MRhref}[2]{%
  \href{http://www.ams.org/mathscinet-getitem?mr=#1}{#2}
}
\providecommand{\href}[2]{#2}

\end{document}